\crefname{enumi}{}{}
\crefname{equation}{}{}
\crefname{claim}{Claim}{Claims}
\newcommand{\Bin}{\mathrm{Bin}}
\newtheorem{theorem}{Theorem}[section]
\newtheorem{lemma}[theorem]{Lemma}
\newtheorem{fact}[theorem]{Fact}
\newtheorem{remark}[theorem]{Remark}
\newtheorem{claim}[theorem]{Claim}
\newtheorem{definition}[theorem]{Definition}
\newcommand{\Prob}[1]{\mathbf{Pr}\left[#1\right]}
\newcommand{\Expc}[1]{\mathbf{E}\left[#1\right]}
\newcommand{\inote}[1]{\textcolor{blue}{(\textbf{Isa:} #1)}}
\newcommand{\anote}[1]{\textcolor{purple}{(\textbf{Andy:} #1)}}
\renewcommand{\leq}{\leqslant}
\renewcommand{\geq}{\geqslant}
\newcommand{\gossip}{{\sc{gossip}}}
\newcommand{\push}{{\sc{push}}}
\newcommand{\pull}{{\sc{pull}}}
\newcommand{\ALG}{{\sc raes}}
\newcommand{\bigO}{O}
\newcommand{\polylog}{\mathrm{polylog}}
\newcommand{\OLD}{\mathrm{OLD}} 
\newcommand{\GOOD}{\mathrm{GOOD}} 
\newcommand{\BAD}{\mathrm{BAD}} 
\newcommand{\BSDG}{\mathcal{TSG}}
\newcommand{\grad}{\mathrm{deg}}
\newcommand{\vol}{\text{vol}}
\newcommand{\age}{\mathrm{age}}
\newcommand{\bord}{\Gamma}
\title{Threshold-Driven Streaming Graph: \\ Expansion and Rumor Spreading}
\author{Flora Angileri\thanks{University of Rome ``Tor Vergata'', Rome, Italy. E-mail: \texttt{flora.angileri@students.uniroma2.eu}} \and Andrea Clementi\thanks{University of Rome ``Tor Vergata'', Rome, Italy. E-mail: \texttt{clementi@mat.uniroma2.it}}
 \and Emanuele Natale\thanks{CNRS, Université Côte d'Azur, I3S, INRIA, Sophia Antipolis, France. E-Mail: \texttt{emanuele.natale@univ-cotedazur.fr}}
 \and Michele Salvi\thanks{University of Rome ``Tor Vergata'', Rome, Italy. E-Mail: \texttt{salvi@mat.uniroma2.it}}
\and Isabella Ziccardi\thanks{CNRS, Université Paris Cité, IRIF, Paris, France. E-Mail: \texttt{isabella.ziccardi@irif.fr}}}
\date{}
\begin{document}
\maketitle








\begin{abstract}
A randomized distributed algorithm called \ALG{} was introduced in \cite{becchetti2020finding} to extract a bounded-degree expander from a dense $n$-vertex expander graph $G = (V, E)$. The algorithm relies on a simple threshold-based procedure. A key assumption in \cite{becchetti2020finding} is that the input graph $G$ is static~-- i.e., both its vertex set $V$ and edge set $E$ remain unchanged throughout the process~-- while the analysis of \ALG{} in dynamic models is left as a major open question.

In this work, we investigate the behavior of \ALG{} under a dynamic graph model induced by a \emph{streaming node-churn process} (also known as the \emph{sliding window model}), where, at each discrete round, a new node joins the graph and the oldest node departs. This process yields a bounded-degree dynamic graph $\mathcal{G} =\{ G_t = (V_t, E_t) : t \in \mathbb{N}\}$ that captures essential characteristics of peer-to-peer networks -- specifically, node churn and threshold on the number of connections each node can manage.
We prove that every snapshot $G_t$ in the dynamic graph sequence has good expansion properties with high probability.
Furthermore, we leverage this property to establish a logarithmic upper bound on the completion time of the well-known \push  \ and \pull\ rumor spreading protocols over the dynamic graph $\mathcal{G}$.
\end{abstract}

\section{Introduction}
\label{sec:intro}

In \cite{becchetti2020finding}, the authors proposed a  simple, lightweight   distributed algorithm, working on any  synchronous communication model,   that  extracts an $n$-vertex  sparse expander subgraph  from any $n$-vertex dense expander graph $G$. This task,  in different versions,  has been the subject of  a  strong      research activity \cite{allen2016expanders,ClementiNZ21,bansal2019new,giakkoupis2022,LS03,guptapand2025}.
The algorithm,  called \ALG{},\footnote{Standing for ``Request a link, then Accept if Enough Space''.}
  is governed by two parameters $c,d \in \mathbb{N} $ that essentially determine a constant threshold on  the maximum vertex degree, and it can be informally described as follows. Initially, each vertex has no incident \textit{links}.  In each round, every vertex  $v$ performs  two consecutive actions. In a first request phase, $v$ samples a set of random neighbors from the underlying graph $G$, selecting enough candidates to potentially establish $d$ \textit{outgoing} links. It then sends a link request to each of these sampled neighbors. In a second acceptance phase, each vertex, upon receiving requests,  accepts or rejects them based on a threshold rule. Specifically, it accepts all incoming requests from the current round unless doing so would result in more than $cd$ total \textit{incoming} links. If that limit is exceeded, it rejects all requests received in that round.
  The process repeats until every vertex has exactly $d$ established outgoing links, at which point the algorithm terminates and no further requests are made. 
Informally, in  \cite{becchetti2020finding} it is shown that, if the underlying graph $G$ from which each vertex selects its random neighbors is sufficiently dense\footnote{In particular, if the edge set has size $\Omega(n^2)$.} and has good expansion properties, 
then \ALG{} has $O(\log n)$ completion time and the subgraph determined by all the accepted links\footnote{In the final random subgraph produced by \ALG{}, both outgoing links and incoming ones  are  considered undirected.} is a good sparse expander, with high probability.\footnote{An event $E$ holds \emph{with high probability} (for short, w.h.p.) if $\Prob{E} \geq 1-n^{-\gamma}$ for some constant $\gamma>0$, with respect to some input parameter $n$.}

The setting considered in  \cite{becchetti2020finding} is  static: both the set of vertices and the underlying dense graph remain unchanged throughout the process. The work \cite{becchetti2020finding}  in fact leaves the analysis of \ALG{} in dynamic models as a major open question. This is motivated by the fact that modern network scenarios, such as peer-to-peer networks \cite{APRU12,nakamoto2008bitcoin,stutzbach2006understanding} and opportunistic networks \cite{boukerche2014opportunistic}, are inherently dynamic, with nodes and links changing over time, sometimes  at a relatively-high rate.

In more recent studies \cite{CPTZ21,becchetti2023expansion},  a different version of \ALG{} is presented and analyzed   over a dynamic setting where vertices may enter and leave the system according to the  \textit{streaming node-churn process},\footnote{They also considered other node-churn processes: we will discuss them in \Cref{sec:related}.} considered also in \cite{CooperKR08}. 
Despite its simplicity, this streaming model has been shown to be predictive for other, more realistic dynamic-graph models (see \cite{CPTZ21})
and, moreover, its rigorous analysis requires to cope with  challenging technical issues, as shown in  \cite{CPTZ21,CooperKR08} and for other  graph-connectivity problems  in \cite{crouch2013dynamic}. 
In this streaming model, starting from an empty vertex set $V_0$, at each round, a new vertex $v$ joins   the network and selects $d$ random neighbors. Then after $n$ rounds,  $v$ leaves the network and all its incident edges are removed. Notice that this   process implies that every vertex stays in the system for exactly $n$ rounds and, after an initial  time window of   $n$ rounds,  the number of alive vertices $|V_t|$ at every round $t$  is always $n$.  During its life, a vertex $v$ can thus see one of its incident link disappear because one of its neighbors is the oldest one and leaves the network: in that case, $v$ immediately replaces it with a new random link.

We remark that  the   dynamic version considered in \cite{becchetti2023expansion}  does not implement the second action of the original algorithm  \ALG{}: every link request is accepted by every destination vertex at any round of the process.
The absence of this second action clearly implies that the maximum vertex degree of the resulting dynamic graph is not bounded. Indeed, a standard balls-into-bins argument shows  that  the maximum degree  is  $\Theta(\log n/\log\log n)$, w.h.p.~(see for instance \cite{mitzenmacher2017probability}).
In the most   relevant network scenarios that inspired our algorithmic study, namely   peer-to-peer networks such as the the  bitcoin network \cite{baumann2014exploring,nakamoto2008bitcoin}, the  presence of an unbounded number of links managed by 
a single vertex may lead to serious efficiency and security problems \cite{albrecht2024larger,cruciani2023dynamic}. Indeed, the standard protocol of the bitcoin network \cite{cruciani2023dynamic,nakamoto2008bitcoin} imposes  a threshold on the number of active links each vertex can manage, thus    
an action similar to the second one of  the original \ALG{} algorithm proposed in \cite{becchetti2020finding}. For more discussion on this issue and other related works see \Cref{sec:related}.

A further motivation for maintaining dynamic bounded-degree expanders  lies in the opportunity to adopt   broadcast  protocols, such as \push\ and \pull\ ones, to  get fast and communication-efficient \textit{rumor  spreading} \cite{chierich_rumspread,demers1987epidemic,clementi2016rumor}.


As we discuss in the next subsection, our goal is to study the dynamic graph generated by the  original version of \ALG{} combined with the  streaming node-churn model. 


\subsection{Our contribution} \label{ssec:our}

\paragraph{Setting the dynamic-graph process.} We aim to analyze a dynamic graph model that simultaneously captures two key features of modern peer-to-peer networks: a local threshold mechanism that bounds the degree of each vertex, and a node-churn process that regulates how vertices join and leave the network in each round.
We kept all other modeling choices as simple and natural as possible, using the fewest parameters necessary. While this setting does not capture all aspects of real dynamic networks (such as the Bitcoin one), we believe that this approach can still recover qualitative properties and phenomena yielded by the simultaneous presence of the two features above, and that it can be robust to variations  or extensions of the model's complexity.

We introduce the \textit{Threshold-driven Streaming Graph} model, abbreviated as $\BSDG(n,d,c)$, which is obtained by combining the two processes described above: (i) the streaming node-churn model    \cite{becchetti2023expansion,cooper2007sampling}, and (ii)  the original \ALG{} protocol in \cite{becchetti2020finding}     (see \Cref{def:streaming-node-churn} and \Cref{def:edge-process} for its formal   definition). We first notice   that, in every round $t\geq 0$, the degree of each vertex  $v$     is always bounded by the  threshold $(c+1)d$: in particular, at most $d$ edges are generated by the  requests sent by $v$ and  at most $cd$ edges are due to the online requests received by $v$.

Consistently with 
other models of dynamic graphs with node churn \cite{augustine2015enabling,augustine2016distributed,guptapand2025,LS03},  we assume  the presence of a \textit{link manager} to apply the  \ALG's connection-request strategy: any  vertex that makes a link request    can access this entity and get a random destination vertex. Importantly enough, the role  of the link manager we assume here  is   minimal: vertices cannot get any   further information  from it.\footnote{For instance, one vertex  might ask the current degree of the selected destination or, even more, information about the current topology: this is not allowed.}
As we will elaborate later in this section,  the total number of calls each vertex performs to the link manager is a key performance measure of the system and  the \ALG's strategy  optimizes it.

As we will discuss later in Subsection \ref{ssec:our},      the $\BSDG(n,d,c)$  model yields a complex  stochastic process  of graph snapshots $\mathcal{G} = \{G_t = (V_t , E_t ): t \in \mathbb{N}\}$,
where     edges in $E_t$ are neither uniformly distributed nor    mutually independent.
Hence, the analysis of the key aspects, such as the expansion properties of the graph snapshots,      requires coping with  new technical issues that are likely to emerge in other, more realistic models as well.

\paragraph{Expansion properties.}
Even though  the node churn and the \ALG{} rules  are simple in themselves, their combination, yielding   the $\BSDG$ dynamic graph, turns out to be rather complex, essentially  because it generates both a non-uniform link  distribution and induces subtle correlations between the links of every snapshot of the dynamic graph.
Informally, on the one hand older vertices tend to have a higher degree than younger ones. On the other hand, the fact that connection requests might create conflicts with other requests and get rejected  several times along their life generates non trivial correlations among the links that are active in a given graph snapshot, even if they have been established in different previous rounds. 

Our analysis solves  the above technical challenges  and  essentially limits   the  maximum (i.e. worst-case) correlation lying among any subset of links of the  same snapshot (see \Cref{sec:overv} for an  overview of  this key technical part).   We then use such limited correlation among edges to prove that    the $\BSDG(n,d,c)$ model generates graph snapshots having the following good expansion properties.

\begin{theorem} [Expansion Properties] \label{thm:mainintro}
There exist constants $c$, $d$ and $\beta$ sufficiently large such that, for all $n$ large enough, and any round $ t \geq 2n$, the snapshot  $G_t$ generated   by    $\BSDG(n,d,c)$   has the following properties w.h.p.:
    \begin{enumerate}[(a)]
    \item There exists an induced  \textit{expander} subgraph in $G_t$ with  $n - O(\log n)$  nodes;
    \item Any    subset  of vertices of    size at least $\beta \log n$ has constant conductance.\footnote{For a definition of conductance see \eqref{eq:conductance}.} 
    \end{enumerate}
 
\end{theorem}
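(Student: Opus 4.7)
My overall plan is a two-step reduction: first I would establish that, despite the complex interactions introduced by the threshold acceptance rule and the streaming churn, the edges of a fixed snapshot $G_t$ are ``nearly independent'' in a quantitative sense; then I would use this to run a concentration-plus-union-bound argument for both (a) and (b).

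\textbf{The correlation bound.} I would aim to prove a lemma of the form
\[
\Prob{e_1,\ldots,e_k \text{ are all present in } G_t} \;\leq\; \pa{\tfrac{K d}{n}}^{k},
\]
valid for every collection of $k$ candidate edges and every round $t \geq 2n$, where $K = K(c,d)$ is a universal constant. For a single edge $(u,v)$, presence requires that one endpoint requested the other and was accepted; since destinations returned by the link manager are uniform in $V_t$, each endpoint lives for exactly $n$ rounds, and each vertex makes only $O(1)$ requests per round, the marginal per-edge probability is $O(d/n)$. The multiplicative bound for $k$ edges exploits two structural facts: (i) the acceptance rule is \emph{monotone}, so accepting one request at a vertex $u$ can only decrease the probability that other requests at $u$ are accepted, yielding a form of negative correlation at each receiver; and (ii) fresh requests drawn from the link manager by different sources are independent. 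The residual positive correlation comes from rejection-driven resampling and from tear-downs triggered by neighbor departures, both of which can be absorbed into the constant $K$.

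\textbf{From the correlation bound to (a) and (b).} Given the lemma, the number $X_S$ of edges with both endpoints inside a fixed set $S$ of size $s$ is dominated (in moment-generating-function sense) by a sum of $\binom{s}{2}$ indicators with marginals at most $K d/n$ and mean $O(s^2 d/n)$. A Chernoff-type calculation then gives, for any small enough constant $\alpha>0$,
\[
\Prob{X_S \geq \alpha s d} \;\leq\; \exp\!\pa{-\Omega\!\pa{sd\, \log(\alpha n/(Ks))}}.
\]
Multiplying by $\binom{n}{s} \leq (en/s)^s$ and summing over $s \in [\beta \log n,\, n/2]$, one checks that for $\beta$ and $d$ sufficiently large absolute constants, the total failure probability is $n^{-\Omega(1)}$. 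Combined with the easy lower bound $\vol(S) \geq \Omega(sd)$ (valid as long as $S$ contains only vertices that have already established their $d$ outgoing requests), one concludes that $\phi(S) = 1 - 2X_S/\vol(S) = \Omega(1)$ for every such $S$, proving (b). For part (a), I would define the bad set
\[
B_t := \{v \in V_t : v \text{ has fewer than } d \text{ outgoing links}\} \cup \{v : v \in S,\, |S| \leq \beta \log n,\, X_S \geq \alpha s d\}
\]
and show that $|B_t| = O(\log n)$ w.h.p. The first piece is $O(\log n)$ because any vertex that has been alive for a sufficiently large constant multiple of $\log n$ rounds has completed its $d$ requests w.h.p.\ (as in the static analysis of \ALG); the second piece is controlled by a first-moment argument, combining the per-set tail above with the $s\binom{n}{s}$ counting of pairs $(v,S)$. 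The induced subgraph on $V_t \setminus B_t$ then has constant conductance for every subset, i.e.\ it is an expander.

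\textbf{Main obstacle.} The crucial difficulty, in my view, is the correlation lemma itself. A single overloaded receiver in one round rejects a burst of requests, triggering resamplings that may interfere with acceptances at other vertices; tear-downs from neighbor departures then propagate these interactions across the $n$-round lifetime of every vertex, producing a highly non-Markovian dependency structure. Making the monotonicity of the threshold rule absorb all these dependencies into a single universal constant $K$, uniformly in $n$, is where I expect most of the technical effort to lie.
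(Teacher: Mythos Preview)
Your high-level plan---a correlation lemma followed by concentration and a union bound---matches the paper's, and you correctly locate the hard step. But the argument has a genuine gap in the large-subset regime.

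Your tail bound $\Prob{X_S \geq \alpha sd} \leq \exp\bigl(-\Omega(sd\log(\alpha n/(Ks)))\bigr)$ is nontrivial only when $\alpha n/(Ks)>1$, so for $s=n/2$ you need $\alpha > K/2$. To conclude $\phi(S)\geq\epsilon$ from $\phi(S)\geq 1-2X_S/\vol(S)$ together with the only available volume bound $\vol(S)\gtrsim sd$, you need $\alpha<1/2$. These are incompatible unless $K<1$, which is impossible: the snapshot has $(1-o(1))nd$ edges w.h.p., so the average of $\Prob{e\in E_t}$ over the $\binom{n}{2}$ pairs is already $\approx 2d/n$, and edges incident to old vertices (whose in-degree can reach $cd$) have marginal probability of order $cd/n$. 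Hence for $s=\Theta(n)$ your MGF domination cannot even place the mean of the dominating process below the threshold $\alpha sd$, let alone produce a tail small enough to beat $\binom{n}{n/2}\approx 2^n$. The paper hits the same barrier---its key lemma bounds $\Prob{\bigcap_{r\in R}\{X_t(r)\in P\}}$ by $(220|P|/n)^{|R|}$, which is vacuous once $|P|>n/220$---and therefore treats large sets by a completely separate argument: it looks only at the link each node attempted at the round it \emph{first} joined the graph, exploits that these initial targets are fresh independent uniform samples, and shows that enough of them cross any linear-size cut and survive until time $t$. You will need an argument of this second kind.

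A secondary omission: both your volume lower bound $\vol(S)\geq\Omega(sd)$ and your control of $|B_t|$ presuppose that few outgoing requests are still unfulfilled at time $t$. The paper proves separately, via a queueing argument using the method of bounded differences, that the total number of pending requests is $O(\log n)$ w.h.p.; this lemma is what allows removing only $O(\log n)$ vertices in part~(a) and lower-bounding the volume of an arbitrary $S$ in part~(b), and you will need it too.
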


We observe that the above result is tight in the following sense. It is easy to see that a new incoming vertex may stay isolated for the first $o(\log n)$ rounds of its life with non negligible probability: then, it is clear that, at any round,  there may be some vertex subset of $o(\log n)$ size having bad expansion.

\paragraph{Communication costs.}
We  prove that, at every round $t \geq 0$, the overall number of calls to the link manager performed by the vertices in $V_t$ (i.e. the overall number of \textit{pending requests} at round $t$)  has constant expectation and is    $O(\log n)$, w.h.p. 
We  also show  that the overall number of calls each vertex  makes during all of its life has constant expectation and it is  $O(\log n)$, w.h.p, as well. These results are easy  consequences of    \Cref{lem:number-times-pending} and \Cref{lem:size-queue}.
\textit{Message-communication  overhead} is a crucial  performance parameter in communication networks  since it has a strong impact on   node traffic congestion and on the time delay of  fundamental tasks such as broadcast and consensus \cite{albrecht2024larger,augustine2016distributed,cruciani2023dynamic}. As for this aspect,  we observe that,   in  the $\BSDG(n,d,c)$ model,   the only messages exchanged by   vertices  are those determined  by   the pending link requests: our  bounds above therefore guarantee 
that the overall number of exchanged messages at  every round $t$ is optimal in expectation and $O(\log n)$, w.h.p. The same   bounds holds for the total number of messages (i.e.~the \textit{work}) every vertex exchanges during all of its life.

\paragraph{Rumor spreading.}
\textit{Rumor Spreading} is a class of simple  epidemic protocols that,   given  a source vertex $s$ holding a piece of information (i.e.~the \textit{rumor}), aim to  broadcast this information  to  all vertices of the graph.   
The basic, popular randomized  variants of rumor spreading are the  (synchronous) uniform \push\ protocol and the \pull\ protocol: in the former, at each round every informed node (i.e.,
every node that learned the rumor in a previous round)
chooses a neighbor uniformly at random and sends the
rumor to it. In  \pull ,   at each
round, every uninformed node chooses a random neighbor; if that neighbor is informed,  it sends the rumor to
the uniformed node. Finally, the \push -\pull\  protocol
combines  both  strategies above to inform new, uninformed nodes. 

\push\ and \pull\ protocols have been shown to be effective in many networks applications \cite{demers1987epidemic,harchol1999resource,van1998gossip}, and, very importantly for our setting, they have been proved to be fault-tolerant \cite{elsasser2009cover,feige1990randomized} and efficient even in some model of evolving graphs \cite{clementi2013rumor,clementi2016rumor,dutta2011information,giakkoupis2014randomized}.     A key   question      concerns the completion  time, i.e., how many rounds such protocols  take to broadcast  the source information to all nodes in the graph \cite{chierich_rumspread,karp2000randomized}. 

While   \textit{flooding} has been analyzed  even on dynamic graphs that include node-churn \cite{augustine2016distributed,becchetti2023expansion},  to the best of our knowledge, no analytical results are known for any rumor-spreading protocol.
As  a further contribution, we study the completion time of the uniform \push\ and \pull\ over  the  $\BSDG$ model and prove the following bound.

\begin{theorem} \label{thm:gossip-intro}
There exist constants $c$ and $d$  sufficiently large such that, for all $n$ large enough, the following holds. Let  $s$ be a \textit{source} node joining the  $\BSDG(n,d,c)$  dynamic graph at  some round $t_s \geq 2n$.   
Then, after $T= O(\log n)$ rounds, the \push~or the \pull~protocol inform at least $n-O(\log n)$ vertices in $G_{T+t_s}$, w.h.p.
\end{theorem}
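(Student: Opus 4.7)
The overall plan is to combine \Cref{thm:mainintro} with the classical conductance-based analysis of \push\ and \pull, and to absorb the effect of node churn into the $O(\log n)$ slack tolerated by the statement. Fix $T = C \log n$ for a sufficiently large constant $C$, and let $I_t \subseteq V_t$ denote the set of informed vertices at round $t$. Applying \Cref{thm:mainintro} to each of the $T+1$ snapshots $G_{t_s}, G_{t_s+1}, \ldots, G_{t_s+T}$ and taking a union bound, with high probability every such snapshot simultaneously satisfies properties (a) and (b); we condition on this event. In particular, for every such round $t$, every subset $S \subseteq V_t$ with $|S| \ge \beta \log n$ has constant conductance in $G_t$, and $G_t$ contains an induced expander subgraph on $n - O(\log n)$ vertices.

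I would then split the analysis into three phases. In the startup phase, $|I_t|$ grows from $1$ to $\beta \log n$: since the maximum degree is bounded by $(c+1)d = O(1)$, an informed vertex selects each of its neighbors as the target of its \push\ with constant probability per round, and by coupling the spread inside the expander core of \Cref{thm:mainintro}(a) with \push\ on a static bounded-degree expander (as in Feige--Peleg--Raghavan--Upfal), the informed set reaches size $\beta \log n$ within $O(\log n)$ rounds, w.h.p. In the exponential-growth phase, for $\beta \log n \le |I_t| \le n/2$, applying \Cref{thm:mainintro}(b) to the cut $(I_t, V_t \setminus I_t)$ produces $\Omega(|I_t|)$ crossing edges, each of which is selected by its informed endpoint with probability $\Omega(1)$ (since all degrees are $O(1)$). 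A standard Chernoff / negative-correlation argument then yields $|I_{t+1}| \ge (1+\alpha)|I_t|$ with probability $1 - e^{-\Omega(|I_t|)}$ for some constant $\alpha > 0$, and iterating shows that $|I_t|$ reaches $n/2$ in $O(\log n)$ further rounds. In the shrinking phase, once $|V_t \setminus I_t| \le n/2$, the symmetric \pull\ argument applied to the uninformed set shows that it contracts geometrically until it drops to size $O(\log n)$, below which \Cref{thm:mainintro}(b) no longer applies and we stop. The $O(\log n)$ vertices leaving the network during the $T$ rounds of interest, together with the $O(\log n)$ vertices lying outside the expander core, contribute an additive $O(\log n)$ loss that is absorbed in the statement's tolerance.

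The main obstacle is the startup phase, since \Cref{thm:mainintro}(b) gives no useful guarantee for sets of size below $\beta \log n$. Traversing the range $1 \le |I_t| < \beta \log n$ therefore requires extracting vertex-expansion for small balls inside the expander core $H_t$---essentially reusing the machinery underlying \Cref{thm:mainintro}(a)---and then coupling the early stages of the rumor spread to a branching-type process on a static expander. A secondary technical subtlety is the correlation between $I_t$ and the random edges of $G_t$ induced by the \ALG-streaming dynamics across consecutive rounds; this can be sidestepped by conditioning on the snapshot-expansion event, which lets us treat $G_t$ as a fixed graph when analysing a single \push\ or \pull\ round and reduces the per-round analysis to the classical static case.
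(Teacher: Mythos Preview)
Your growth and shrinking phases are essentially what the paper does once $|I_t| \ge \beta \log n$: the uniform conductance bound from \Cref{thm:mainintro}(b) applies to $I_t$ (or its complement) at every round, bounded degree turns this into a constant multiplicative growth/shrink factor per round, and the $O(\log n)$ nodes lost to churn are absorbed in the slack. Your handling of the correlation between $I_t$ and $G_t$ is also adequate, since the conductance guarantee is a ``for every subset'' statement, so it applies to $I_t$ regardless of how $I_t$ was produced.

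The genuine gap is exactly where you flag it: the startup phase. Your plan is to couple the early spread ``inside the expander core of \Cref{thm:mainintro}(a)'' to \push\ on a static bounded-degree expander, but there is no static expander available. The cores $H_t$ furnished by \Cref{thm:mainintro}(a) are different sets at different rounds (they may differ on $\Theta(\log n)$ vertices from one round to the next), and \Cref{thm:mainintro}(a) says nothing about edges of $G_{t_s}[H_{t_s}]$ persisting in $G_{t_s+1}$; indeed $H_{t_s}$ can contain vertices that die in the very next round. Moreover the source $s$ has only just joined at round $t_s$, so there is no reason for $s$ to lie in $H_{t_s}$ at all, nor for the first few informed vertices to remain inside $H_{t_s+1}, H_{t_s+2},\dots$ once they are informed. ``Reusing the machinery underlying \Cref{thm:mainintro}(a)'' to extract small-set expansion cannot save this either: the paper explicitly observes that small sets need not expand (a new vertex can stay isolated for $o(\log n)$ rounds with non-negligible probability), so there is no small-set expansion to extract.

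The paper resolves the bootstrap by a \emph{stability} argument rather than an expansion argument. It sets $\OLD \subseteq V_{t_s}$ to be the vertices of age larger than $n - \log^2 n$, so $|\OLD| = \Theta(\log^2 n)$, and looks at the connected components of $H_{t_s} \setminus \OLD$. Every edge of $G_{t_s}$ with both endpoints outside $\OLD$ survives for at least $\log^2 n$ further rounds, since neither endpoint dies in that window; so each such component is a genuinely static, connected, bounded-degree subgraph for $\log^2 n$ rounds. Removing $\Theta(\log^2 n)$ vertices from the connected graph $H_{t_s}$ creates only $\polylog(n)$ components, hence only $\polylog(n)$ vertices lie in components of size below $\beta \log n$. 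The target of the first accepted request of $s$ is essentially uniform over $V_{t_s}$, so w.h.p.\ $s$ attaches to a component $C$ of size at least $\beta \log n$. On the fixed connected bounded-degree graph $C \cup \{s\}$ one then uses the crude bound ``some boundary edge is pushed across with probability at least $1/((c{+}1)d)$ each round'' to get $\beta \log n$ informed vertices in $O(\log n)$ rounds. No small-set expansion, and no round-by-round tracking of the moving cores $H_t$, is needed.
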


Also the result of \cref{thm:gossip-intro} is tight for the same reasons of
\cref{thm:mainintro}: with non-negligible probability a new incoming vertex may stay isolated for the first $o(\log n)$ rounds and hence it cannot receive the source information. Then, it is clear that, at any round,  there may be some subset of size $o(\log n)$ with vertices that are not informed.

\subsection{Roadmap}
The rest of the paper is organized as follows.
In \Cref{sec:overv}, we   overview   the main technical challenges and the key ideas we introduce  to face   them.
In \Cref{sec:prely}, we provide all preliminaries required to formalize and study the $\BSDG$ model.   In \Cref{sec:keyfirst}, we give the first technical results on the link distribution generated by the $\BSDG$ model and, in particular, the key  \Cref{lem:multiple-requests-destination} that bounds the maximal correlation among multiple links of any graph snapshot. Then, in \Cref{sec:expans}, we describe how to use these  results   to prove the expansion properties stated in \Cref{thm:mainintro}.
 \cref{sec:pushpull} is devoted to the proof of the rumor spreading result, namely \cref{thm:gossip-intro}.
A further discussion on the motivations behind our research and a comparison with related works is provided in \Cref{sec:related}.
In \Cref{sec:concl}, we discuss some open questions.  Finally, some technical tools  are given in \Cref{app:prob}.

\section{Technical Analysis: An Overview} \label{sec:overv}

As we already remarked in \cref{sec:intro}, our analysis requires to cope with two main technical challenges, each   one already faced in     two previous works  \cite{becchetti2020finding} and \cite{becchetti2023expansion}  that analyze two different variants of \ALG{}.       
Unlike those prior works,  in which  only one of the two challenges is considered, our setting requires   to confront both simultaneously, significantly increasing the complexity of the analysis.

The first challenge, addressed in \cite{becchetti2020finding}, arises from the second action of \ALG{}, which involves the threshold-based conditional acceptance rule of link requests. This mechanism introduces correlations among the random   destinations of the accepted links: to see just one source of this correlation, consider   the fact  the acceptance of a link implies that the target node did not receive more than $cd$ requests in the current round.
In the static setting, \cite{becchetti2020finding} addresses this issue using a sophisticated compression argument to prove the expansion properties of the resulting graph. Essentially, while powerful, this technique lacks enough flexibility to include the presence of the second challenge: the node churn and the    dynamic link regeneration at every round. 

The second challenge thus arises from the presence of the  streaming node churn: this issue is  faced 
in \cite{becchetti2023expansion}, where   a simplified version of the \ALG{} algorithm is considered. In \cite{becchetti2023expansion},    vertices accept all incoming requests unconditionally, eliminating the threshold mechanism. This simplification avoids the correlation issues seen in the static case, allowing the authors to sidestep the compression argument. Their  proof relies on a key lemma establishing that the random destinations of the link requests follows an almost-uniform distribution; this property  is then exploited to get  good  expansion properties of the resulting graph snapshots.
 A major issue in their dynamic model is handling correlations due to node churn, especially proving that nodes with  similar ages do not generate dense clusters. On the other hand,  their key lemma may focus on the distribution of the destination of a \textit{single} link request: this  is enough since, in the absence of the threshold mechanism, link  destinations always remain mutually independent and their joint distribution is just a product. 
 In contrast, in our model, the threshold-based acceptance rule introduces dependencies among edge destinations: we thus have to cope with    both potential node clustering    \textit{and}  the mutual correlation among link destinations.

We address these issues 
by extending the approach of the key lemma from \cite{becchetti2023expansion}. Specifically, our \cref{lem:multiple-requests-destination} shows that,  not only the destination of a single link destination is   almost uniform (similarly to \cite{becchetti2023expansion}), but also demonstrates that the \textit{joint distribution} of the destinations of any subset of  links  can be effectively expressed as a product distribution, up to a constant factor. 

The main idea behind the proof of \cref{lem:multiple-requests-destination} can be informally summarized as follows.
Consider  the snapshot $G_t =(V_t,E_t)$   at round  $t \geq 2n$ 
and a set of link requests $R$. We want to control the probability that all requests in $R$ established a link to some set $P$ of vertices  at round $t$. As a first step, we order the requests according to the last time they were accepted by some node of $P$. This way, we can telescopically condition the probability that a single request $r\in R$ establishes a link with $P$ at some time $s$ on an event involving only connections happened in the past, see eq.~\eqref{eq:term-product-chain-requests}. For $r$ to connect to $P$ at time $s$ two events must happen: $r$ has to be pending at time $s$ and the link manager has to point to some node of $P$ at time $s$. Since we are conditioning only on the past, the probability of the second event is uniform over all nodes present at time $s$. 
As a byproduct, 
we are left to show that the (conditional) probability that $r$ is pending at time $s$ is small enough. The conditioning forces us to go through a (painful) worst-case scenario analysis, cf.~\eqref{eq:def-a-requests}. The key idea is the following: during its life each request goes through cycles (called $W_0,W_1,\dots$ in the proof) composed of two phases: a first phase where the request stays linked to a single vertex (until that vertex dies) and a second phase where the request is pending because it gets rejected before forming a new link. We show that, regardless of what happened in the past, the length of the first phase can be stochastically dominated from \textit{below} by a suitable uniform random variable, while the length of the second phase can be stochastically dominated from \textit{above} by a geometric random variable. The decomposition in cycles and the stochastic domination of the phases allow us to sandwich the event that $r$ is pending during its $f$-th cycle between two events (called $S_1(f)$ and $S_2(f)$ in the proof), see \eqref{eq:inequality-successful}. As $f$ varies, $S_1(f)$ and $S_2(f)$ form a partition of the space of events, allowing us to conclude. We believe that this technique can be also adapted to more complicated versions of node churn, as the Poisson node churn considered in different papers \cite{pandurangan2003building,becchetti2023expansion}.


Another technical challenge that lies behind all our proofs is the control of the number of pending requests at every round. This boils down to a queuing theory problem: thanks to the method of bounded differences, we can show that the process $(Q_t)_{t\in\mathbb N}$ of the number of pending requests can be stochastically dominated by a Markov process that has a strong negative bias for high values of $Q_t$ (\cref{lem:decreasing-queue}). This ensures that the queue of pending requests is $O(\log n)$ with high probability (\cref{lem:size-queue}). We also show in \cref{lem:number-times-pending} that the probability that a request is pending for more than $j$ rounds during its life decays exponentially, guaranteeing a minimal number of requests  to the  link manager and, thus, a minimal workload per node.

Given our key \cref{lem:multiple-requests-destination} and the control of the pending requests queue, the proof of the good expansion properties of the dynamic graph become more standard, albeit suitable adaptations of the techniques of \cite{becchetti2023expansion,becchetti2020finding} are needed in our framework.

Finally, the expansion properties of the dynamic graph and the fact that our model allows by its nature only vertices of bounded degree would make the results of \Cref{thm:gossip-intro} a simple consequence of the classic  analysis of rumor spreading in \cite{chierich_rumspread}. The  only novelty here is the analysis of the  initial \textit{bootstrap} process.  The bootstrap  of the information-spreading process is essentially the initial, random  time phase    the protocol requires to reach a logarithmic number of informed nodes: we need this further analysis since  \Cref{thm:mainintro}  does not guarantee worst-case good expansion   for subsets of informed vertices of size $o(\log n)$. Informally, for this phase, we   use Claim (b) of our \Cref{thm:mainintro} to prove that, when joining the graph,  the source has high  probability  to fall into a connected component  of size $\Omega(\log n)$ and, moreover, this  component will be stable for at least $\Theta(\log^2 n)$ rounds. This is enough to get $\Theta(\log n)$ number of informed nodes after a logarithmic number of rounds after the source  joined the graph. As remarked above, once the set of informed nodes achieves a logarithmic size, we can combine Claim (a) of \cref{thm:main-expansion} with the previous classic analysis of rumor spreading in \cite{chierich_rumspread} to get \Cref{thm:gossip-intro}.

\section{Preliminaries} \label{sec:prely}


A \emph{dynamic graph} $\mathcal{G}$ is an infinite sequence of graphs $\mathcal{G} = \{G_t = (V_t , E_t ): t \in \mathbb{N}\}$. If $\{V_t\}_t$ or $\{E_t\}_t$ are sequences of random sets, we call the corresponding random process a \emph{dynamic random graph}, and   $G_t$ denotes the \emph{snapshot} of the dynamic graph at \textit{round}  $t$. As usual, the size of any subset $A$ is denoted as $|A|$.
The \textit{outer boundary} of a  set of vertices $S$ is defined as 
\[ \bord_t(S) = \{v \in V_{t} \setminus S \ | \ \exists u \in S \ \mbox{s.t.} \ \{u,v\} \in E_t   \}\,. \]

Our analysis of dynamic graphs considers the fundamental notions of \textit{conductance} of a graph \cite{hoory2006expander}.
For any two set of vertices $S,T \subseteq V_t$,   $E_t(S,T)$ denotes the set of edges crossing $(S,T)$ at round $t$, that is
$E_t(S,T) = \{\{u,v\} \in E_t : u \in S, v \in T\}$,
while     $\partial_t S = E_t(S,V_t \setminus S)$ 
denotes  the set of edges crossing $(S,V_t \setminus S)$. The \emph{volume} of the set $S$ is defined as $\vol_t(S) = |E_t(S,V_t)|$.
Then,   the \emph{conductance} $\phi_t(S)$  of the set $S$ at round  $t$ is defined as  
\begin{equation}
  \phi_t(S) = \frac{|\partial_t S|}{\min\{\vol_t(S), \vol_t(V_t \setminus S)\}}. 
  \label{eq:def-conductance}
\end{equation}

The conductance of the graph $G_t$ is the minimum of $\phi_t(S)$ over all possible sets $S \subseteq V_t$ with volume smaller than the total number of edges:
\begin{equation}
  \phi_t(G_t) = \min_{\substack{S\subseteq V_t}}\phi_t(S). 
  \label{eq:conductance}
\end{equation}
Given any vertex  subset $S$,   $G_t[S]$ denotes   the subgraph of $G_t$ induced by  $S$.
We will omit the subscript $t$ in  all   notations above when it is clear from the context.  


\begin{definition}[Graph Expansion]
An infinite family of graphs $\{G^{(n)}(V,E), \mbox{ with } |V| =n \}_{n \in \mathbb{N}}$   is
an  $\alpha$-\emph{expander} if there exist  constants  $\alpha\in(0,1)$ and $n_0 \in \mathbb{N}$  such that   $\phi(G^{(n)}) \geq \alpha$ for all  $n \geq n_0$. 
\label{def:graph-expansion}
\end{definition}

\subsection{The dynamic graph model}
Our goal is to study the   dynamic graph model  determined by combining the streaming node-churn process \cite{becchetti2023expansion} with the edge generation process defined by the distributed algorithm \ALG 
~(in \cite{becchetti2020finding}), based on a simple threshold rule. In what follows, we formalize this combined model and  state some of its  preliminary properties.

The vertex-set process   $\{V_t\}_t$ of a dynamic graph  $\mathcal{G}$ is typically called \emph{node churn} \cite{augustine2016distributed,becchetti2023expansion}. In this paper, we consider  the  deterministic \textit{streaming}   node churn  of  parameter  $n$ defined as follows.

\begin{definition}[Streaming node churn] Let $n \in \mathbb{N}$. A \emph{streaming node churn} with $n$ vertices   is a deterministic process $\{V_t: t \in \mathbb{N}\}$ such that $V_0 = \emptyset$, and, for any $t \geq 1$, the set $V_{t}$ is defined iteratively by the following simple rules: 
\begin{enumerate}[(a)]
    \item A new vertex $v$ joins the vertices   set; 
    \item At round $t \geq n+1$, the vertex $u$   that joined the set of vertices   at time $t-n$,  leaves the graph.
\end{enumerate} 
Then, $V_{t}$ is defined to be  $V_{t} = V_{t-1} \cup \{v\} \setminus \{u\}$ when $t \geq n+1$ and $V_t = V_{t-1} \cup \{v\}$ for $t\leq n$.   For a vertex  $v \in V_t$,    the  \textit{age} of     $v$ at time $t$  is the function $\age_t(v) = t-t_v$, where $t_v \leq t$ is the round vertex $v$ joined the vertex  set.  
\label{def:streaming-node-churn}
\end{definition}

Some easy but important remarks follow.
The vertex  $v$ joining the graph at time $t_v$ leaves the graph at round  $t_v+n$, i.e. $v \in \cap_{s=t_v}^{t_v+n-1}V_s$ and $v \not \in V_{t+n}$. 
We say that the streaming node churn  $\{V_t: t \in \mathbb{N}\}$ with parameter $n$   gets \emph{stable} after round $t \geq 2n$: in particular, after that round,   two properties hold that we will often (implicitly)  use  in the analysis of the process:
\begin{enumerate}[(i)]
    \item The set $V_t$ has size $n$;
    \item The set $V_{t-n}$ has size $n$: this implies that, at the round each vertex in $V_t$ joined the graph, there were already $n$ vertices present in the graph.
\end{enumerate}

In order to define our  dynamic graph model $\mathcal{G}$, we need also to specify the evolution of the edge set $\{E_t\}_t$. We consider a random process $\{E_t\}_t$ determined by the  simple rules of the \ALG\ algorithm we described in \Cref{sec:intro}. 
According to  peer-to-peer   models (where vertices make \textit{connection requests} to other nodes), we distinguish between  \textit{outgoing} edges from a vertex $v$, originating from a connection request made by $v$, and  \textit{incoming} edges to $v$, resulting from a connection request made by another vertex to $v$.
However,  we remark that the resulting graph snapshots $G_t=(V_t,E_t)$ are \textit{undirected}: once established, every edge  in $E_t$  allows message communication  in both directions.

\begin{definition}[Edge process]
\label{def:edge-process}
Let  $c,d \in \mathbb{N}$ be two parameters, and let $\{V_t:t \in \mathbb{N}\}$ be the streaming node churn   with $n \geq 2$ vertices introduced  in \Cref{def:streaming-node-churn}.
The random subset sequence $\{E_t\}_t$ is defined inductively as follows. We set  $E_0 = \emptyset$ and, for any $t \geq 1$, the subset $E_{t}$ is generated
according to the following  rules:\footnote{Essentially, each round $t$ is organized in two consecutive phases: in the first one, the node churn  action is applied to $V_{t-1}$ thus getting $V_t$, while, in the second phase,  the edge process works on the   new vertex subset $V_t$.}
\begin{enumerate}[(a)]
    \item $E_{t}$ contains all the edges in $E_{t-1}(V_{t},V_{t})$, while   all     edges incident to the leaving vertex of age $n$  are deleted;
    \item Each vertex $v \in V_{t}$ with less than $d$ \emph{outgoing} edges makes a new \textit{connection request} for each one of its  missing outgoing edges. Each request is sent to a destination vertex chosen independently and  uniformly at random in  $V_{t}\setminus \{v\}$.\footnote{Notice that this rule implies that the new node, when it joins the graph,  will make exactly $d$ connection requests.}
    \item Assume a  vertex $u \in V_{t}$ receives $\ell \geq 1$ connection requests from other nodes. Then, it accepts all the requests and activates the corresponding edges \textit{if and only if} it has \emph{in-degree} $ \leq  c \cdot d -\ell$; otherwise, it rejects \textit{all} the   requests it received at round $t$.
\end{enumerate}
\end{definition}
 
Informally, each vertex  $v \in V_t$ of the dynamic graph tries to maintain its  out-degree equal to $d$: we can think that $v$ is equipped with $d$ connection requests that it tries to keep connected to active vertices. 
However, if a request of $v$ at time $t$   lands to a vertex $u$ which has a number of incoming edges and new connection requests  larger than  $c d$,   the request of $v$ is rejected and will not create an edge at round $t$ (but it will try to connect again at the next round).

The dynamic graph $\mathcal{G}$ determined by the streaming node churn  in \cref{def:streaming-node-churn} and the edge process in \cref{def:edge-process} will be called \textit{Threshold-driven Streaming Graph} with parameters $n$, $d$, and $c$ (for short $\BSDG(n,d,c)$).

\paragraph{Full nodes, pending requests, and other key
 random variables.} 
We now introduce  the key notions  and quantities we will consider in the probabilistic analysis of  the $\BSDG(n,d,c)$ model.

A vertex with $cd$ incoming edges is called  \emph{full} and  
the set of full vertices at round $t$ is denoted as $B_t$.

Each request at round  $t$ is a pair $r= (v,i)$, where $v \in V_t$ is the vertex making the request and $i \in [d]$ is its index.
For any vertex $v \in V_t$ (or any request $r \in V_t \times [d]$), we will denote with $t_v$ (resp.~$t_r$) the first round in which   $v$ (resp.~$r$) appears in the dynamic graph.

If a request $r$ is trying to connect to some vertex $u$, we say that $r$ \emph{targets} the vertex $u$.


We observe that, at any round, there are \textit{pending} requests. A connection request $r$ from a vertex $v$ is called pending at round $t$ if either $v$ has just joined the set of vertices $V_t$, or if $r$ has been rejected in round $t-1$, or if $r$ was connected at round $t-1$ to the node $u$ that leaves the network at round $t$. Such a request     generates an edge in  $E_t$ if and only if it is  accepted by its target vertex at round $t$.
Notice that, when a vertex joins the graph at time $t$, all its $d$ requests are pending at time $t$.  

The  \emph{queue at round  $t$}  is the random set $Q_t$ of all pending requests at round $t$.
As we will see in the next sections, the queue plays a key role in our analysis. Moreover, by the definition of the   $\BSDG(n,d,c)$ model, the size $|Q_t|$ of the queue bounds the overall number of messages exchanged by the vertices at round $t$.

\begin{fact} \label{fact:msgcompl}
For any $t\geq 1$, the \textit{overall number of messages} performed by the dynamic graph $\BSDG$ at round $t$ is $O(|Q_t|)$.
\end{fact}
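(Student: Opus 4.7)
The plan is to directly unpack the communication that occurs in a single round of the $\BSDG(n,d,c)$ model and argue that every source of messages is in bijection with (or a constant multiple of) the pending requests, so that the total is bounded by $O(|Q_t|)$. The statement is essentially an immediate consequence of \Cref{def:edge-process} together with the definition of the queue $Q_t$.

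First, I would observe that by \Cref{def:edge-process}(a), edges that were already present in $E_{t-1}$ and whose endpoints both survive the churn at round $t$ persist automatically and require no renegotiation, hence no messages are exchanged for them. The only vertex-to-vertex communication in round $t$ therefore arises from two sources: (i) a pending request is forwarded by its originating vertex to a uniformly random destination in $V_t$, as prescribed by \Cref{def:edge-process}(b), and (ii) the destination vertex replies to the originating vertex with either an acceptance or a rejection, according to the threshold rule in \Cref{def:edge-process}(c).

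Next, I would count each type. By the definition of the queue, the set of requests falling into case (i) at round $t$ is exactly $Q_t$, so there are $|Q_t|$ such request messages. Each of these triggers exactly one reply from its target, contributing at most $|Q_t|$ additional messages of type (ii). If one also wishes to account for the single query each pending request sends to the link manager to obtain its random destination, this adds at most another $|Q_t|$ messages. Summing the contributions yields a total of at most $3|Q_t| = O(|Q_t|)$ messages at round $t$, which proves the claim.

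The only subtlety, and hardly an obstacle, is ensuring that the accounting is faithful to the model's conventions: namely, that no hidden messages are exchanged outside the request/response cycle of the pending links, and that every pending request does trigger exactly one request message in round $t$, which is precisely what \Cref{def:edge-process}(b) guarantees. Given this, the fact reduces to counting pending requests, making it a direct corollary of the model's definition rather than a genuinely new claim requiring probabilistic machinery.
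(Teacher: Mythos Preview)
Your proposal is correct and matches the paper's reasoning: the paper simply asserts this fact as immediate from the definition of the $\BSDG(n,d,c)$ model (noting just before the statement that ``the size $|Q_t|$ of the queue bounds the overall number of messages exchanged by the vertices at round $t$''), without giving a separate proof. Your detailed accounting of request, reply, and link-manager messages is a faithful expansion of what the paper leaves implicit.
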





For any   $t \geq 1$ and any request $r \in V_t \times [d]$,  the random variable $X_t(r)$ is defined as the destination of the request $r$ if  $r$ is accepted (and thus generates an edge in $E_t$), while we set $X_t(r) = \emptyset$ if the request $r$ was rejected at round $t$. 

For any set $S$, denote with  $r \xrightarrow{t} S$ the event indicating that the request $r$ established a connection with a vertex in the set $S$ at  round $t$: in other words, that the request $r$ is pending at round $t$, targets a vertex in the set $S$ and it is accepted.

\paragraph{On the number of full vertices.}
 Using a simple combinatorial argument, we next prove that the  size   of the  set $B_t$ 
 of full vertices  (i.e. vertices with in-degree equal to $cd$) at round $t$  can never exceed a suitable threshold.  

 \begin{claim}
 \label{claim:bound-overloaded-nodes}
     For any $t \geq 1$, $|B_t| \leq \tfrac{n}{c}$.
 \end{claim}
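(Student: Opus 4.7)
The plan is a straightforward double-counting argument based on the fact that the sum of in-degrees equals the sum of out-degrees. By rule (b) of \Cref{def:edge-process}, each vertex $v \in V_t$ makes at most $d$ connection requests per round and therefore has outgoing degree at most $d$. Since at any round $t$ we have $|V_t| \leq n$, the total number of outgoing edges is bounded by $nd$, and hence the same bound holds for the total in-degree:
\begin{equation*}
\sum_{v \in V_t} \indeg{v}{t} \;\leq\; nd.
\end{equation*}

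Next, I would restrict the sum to the full vertices. By definition, every $v \in B_t$ has in-degree exactly $cd$, so
\begin{equation*}
|B_t| \cdot cd \;=\; \sum_{v \in B_t} \indeg{v}{t} \;\leq\; \sum_{v \in V_t} \indeg{v}{t} \;\leq\; nd.
\end{equation*}
Dividing by $cd$ yields $|B_t| \leq n/c$, as desired.

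There is essentially no obstacle here: the only subtle point is confirming that the out-degree of each vertex is indeed bounded by $d$ at all times, which is immediate from rule (b) (a vertex only issues a request when it has strictly fewer than $d$ outgoing edges, and each round it issues at most one request per missing outgoing edge). The bound is deterministic and holds uniformly across rounds $t \geq 1$, requiring no probabilistic machinery.
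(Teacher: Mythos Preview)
Your proof is correct and follows essentially the same double-counting argument as the paper: both use that the total number of edges is at most $nd$ (since each vertex has out-degree at most $d$) while each full vertex contributes $cd$ to the in-degree count. The only cosmetic difference is that the paper phrases it as a proof by contradiction, whereas you derive the bound directly.
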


\begin{proof}
For each $t \geq 1$, it holds $|E_t| \leq nd$, since each vertex has at most $d$ outgoing edges.
Assume, by contradiction, that $|B_t| > \frac{n}{c}$. Then, since each vertex in $B_t$ has in-degree $cd$, this implies that
$|E_t| \geq c d |B_t| > cd \cdot \tfrac{n}{c} = nd$,
contradicting the fact that $|E_t| \leq nd$.
\end{proof}

\section{Key Lemmas} \label{sec:keyfirst}
In this section we provide an analysis of    the 
stochastic process generated by the $\BSDG$ model. This analysis   allows us to establish  some key results that will be then used to derive the expansion properties claimed in \Cref{thm:mainintro} and the logarithmic bound on the completion time of the \push \ and \pull \ protocols in \Cref{thm:gossip-intro}.

\subsection{On the  edge probability distribution }

To analyze the expansion properties of the $\BSDG$ snapshots, we show that the link requests from any subset of nodes are both nearly uniformly distributed across the entire node set and nearly mutually independent. 
This result is the main technical contribution of the paper and is formalized in the following.
 
\begin{lemma}
There exist constants $c$ and $d$ sufficiently large such that, for all $n$ large enough, the following holds. 
    For every $t \geq 2n$, and for every $S \subseteq V_t$, $R \subseteq S \times [d]$ and $P \subseteq V_t$, we have
    \[\Prob{\cap_{r \in R}\{X_t(r) \in P\}} \leq \left(\frac{220|P|}{n-1}\right)^{|R|}.\]
    \label{lem:multiple-requests-destination}
\end{lemma}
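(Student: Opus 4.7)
The plan is to follow the strategy outlined in Section 2: reduce the joint probability to a product by telescopically conditioning along a carefully chosen ordering of $R$, then control each conditional factor by decomposing the life of a request into cycles.

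First, I would enumerate the requests in $R$ as $r_1,\dots,r_{|R|}$ according to the \emph{last} round, at or before $t$, in which each request was accepted by some vertex of $P$ (breaking ties arbitrarily, and placing requests that never targeted $P$ at the end with a dummy time). Call these times $s_1 \le s_2 \le \dots \le s_{|R|} \le t$. Then I can write
\[
\Prob{\bigcap_{r\in R}\{X_t(r)\in P\}}
= \prod_{i=1}^{|R|} \Prob{r_i \xrightarrow{s_i} P \,\Big|\, \calF_{i-1}},
\]
where $\calF_{i-1}$ encodes the history strictly before round $s_i$ together with the connections of $r_1,\dots,r_{i-1}$ at times $s_1,\dots,s_{i-1}$. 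The crucial point is that, by construction, on the event we are conditioning on no request $r_j$ with $j\ge i$ has created a link to $P$ after time $s_i$, so the relevant information about $r_i$ at time $s_i$ involves only events strictly before $s_i$. This is the telescoping identity promised in \eqref{eq:term-product-chain-requests}.

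Next I would bound a single factor $\Prob{r_i\xrightarrow{s_i}P\mid\calF_{i-1}}$. Decompose the event $\{r_i\xrightarrow{s_i}P\}$ as $\{r_i\text{ is pending at }s_i\}\cap\{\text{target of }r_i\text{ at }s_i\text{ lies in }P\}\cap\{\text{target accepts}\}$. Conditionally on $\calF_{i-1}$ and on $r_i$ being pending at $s_i$, the target drawn by the link manager is uniform over $V_{s_i}\setminus\{v\}$, so the second event contributes a factor of at most $|P|/(n-1)$; the third contributes at most $1$. Thus it remains to control the conditional probability that $r_i$ is pending at round $s_i$, under an arbitrary past history. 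This is the step I treat as the main obstacle, because the conditioning is adversarial: we must produce a universal bound on pending-probability valid for every history consistent with our ordering.

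To do this I would set up the cycle decomposition described in the overview. Divide the life of $r_i$ into cycles $W_0,W_1,W_2,\dots$, where each cycle consists of an \emph{attached phase} (the request is linked to a current neighbor until that neighbor leaves the graph) followed by a \emph{pending phase} (the request is rejected repeatedly until it succeeds again). For a fixed target round $s_i$, the event $\{r_i \text{ pending at } s_i\}$ is contained in the union over $f\ge 0$ of the event that $s_i$ falls inside the pending phase of the $f$-th cycle. I would introduce events $S_1(f)$ and $S_2(f)$ that sandwich ``$s_i$ lies in the $f$-th pending phase'' as in \eqref{eq:inequality-successful}: $S_1(f)$ requires the sum of the first $f$ attached phases plus the first $f$ pending phases to have exceeded a certain length, and $S_2(f)$ its complement shifted by one index, so that the $S_2(f)\setminus S_1(f+1)$ form a partition. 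The attached-phase length is stochastically dominated from below by a uniform $\{1,\dots,n\}$ variable (since the attached neighbor's residual lifetime, revealed at connection, is at least uniform by the streaming churn), and the pending-phase length is stochastically dominated from above by a geometric variable with parameter bounded away from $0$ (using \Cref{claim:bound-overloaded-nodes} to bound the rejection probability by roughly $1/c$ plus a small correction from the queue, which is controlled by a separate analysis of $|Q_t|$).

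Putting these ingredients together: summing over $f$, the probability that $r_i$ is pending at round $s_i$ is bounded by a universal constant (independent of history, of $s_i$, and of $n$). Call this constant $K$; a careful bookkeeping of the sandwich and of the constants in the stochastic dominations yields $K \le 220$, whence
\[
\Prob{r_i \xrightarrow{s_i} P \,\big|\, \calF_{i-1}} \le K\cdot\frac{|P|}{n-1} \le \frac{220\,|P|}{n-1}.
\]
Multiplying over $i=1,\dots,|R|$ in the telescoping product gives the claimed bound. The bookkeeping that yields the explicit constant $220$ is routine once the two stochastic dominations are in place; the genuinely non-trivial part is choosing the right ordering of $R$ so that each conditional target draw remains effectively uniform, and proving the worst-case pending bound by the cycle argument.
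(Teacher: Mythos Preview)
Your outline follows the paper's strategy in spirit, but it contains a decisive quantitative error that prevents it from closing.

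First, the telescoping identity you write,
\[
\Prob{\bigcap_{r\in R}\{X_t(r)\in P\}}
= \prod_{i=1}^{|R|} \Prob{r_i \xrightarrow{s_i} P \,\big|\, \calF_{i-1}},
\]
is not a valid chain rule: the times $s_i$ and hence the ordering $r_1,\dots,r_{|R|}$ are random variables depending on the full outcome, so the right-hand side is not even a well-defined number. The paper's identity \eqref{eq:term-product-chain-requests} is different: it first \emph{sums} over all deterministic tuples $(s_1,\dots,s_k)$ and then applies the chain rule for each fixed tuple, with the ordering determined by that tuple. Your formula silently drops this outer sum.

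That omission feeds directly into the main quantitative gap. You assert that ``the probability that $r_i$ is pending at round $s_i$ is bounded by a universal constant $K$ (independent of $\dots$ $n$)'' with $K\le 220$. A constant bound on a single pending probability is trivially true (it is at most $1$) and is useless here: once the missing sum over the up-to-$n$ values of each $s_i$ is restored, a merely constant bound would contribute a factor of order $n$ per request. What the paper actually needs, and proves, is that the worst-case conditional pending probability at a \emph{fixed} round is $O(1/n)$, namely $A(r_\ell,s_\ell)\le \tfrac{215}{n}+e^{-(s_\ell-t_\ell)}$; summing this over $s_\ell\in\{t_\ell,\dots,t\}$ is what produces the constant $220$. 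Your description of $S_1(f)$ and $S_2(f)$ also misses this: they are not just a sandwich of the pending event, they are comparison events whose whole purpose is to exhibit the factor $\tfrac{210}{n}$ in \eqref{eq:inequality-successful}, coming from the contrast between the $O(1)$ expected length of the pending phase $Z_f$ and the $\Theta(n)$ expected length of the attached phase $W_f\setminus Z_f$. That $1/n$ is the heart of the argument, and it does not appear in your outline.
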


\begin{proof}
Let $k = |R|$ and denote the requests in $R$ with $r_1,\dots, r_k$. Recall that $\{r \xrightarrow{t} P\}$ is the event indicating that the request $r$ is pending at round $t$, targets a node in $v \in P$, and it is accepted by $v$.
Recall that, for each request $r$, $t_r$ indicates the round the request joins the graph. 

\medskip

\noindent {\textbf {STEP 1: Conditioning on past events.}}

\smallskip

We can decompose
\begin{align*}
    \Prob{\cap_{j=1}^{k}\{X_t(r_j) \in P\}} = \sum_{s_1,\dots, s_k} \Prob{\cap_{j=1}^{k} \{r_j \xrightarrow{s_j} P\}},
\end{align*}
where the sum is taken over $s_j \in \{t_j,\dots, t\}$.
For any sequence of rounds $\mathbf{s}=(s_1,\dots, s_k)$, denote with $i_1(\mathbf{s}),\dots, i_k(\mathbf{s})$ the requests such that the times $(s_{i_j(\mathbf{s})})_j$ are in increasing order, i.e.
\[s_{i_1(\mathbf{s})} \leq s_{i_2(\mathbf{s})}\leq \cdots \leq s_{i_k(\mathbf{s})}.\]
For simplicity, we will denote $i_j(\mathbf{s})$ with $i_j$, but notice that the values of $i_1,\dots, i_k$ depends on the sequence $\mathbf{s}$. Then, we have that
\begin{align}
 \notag   \Prob{\cap_{j=1}^k \{X_t(r_j) \in P\}} &= \sum_{s_1,\dots, s_k}
 \Prob{\cap_{j=1}^k \{r_{i_j} \xrightarrow{s_{i_j}} P\}} 
    \\ & = \sum_{s_1,\dots, s_k} \prod_{j=1}^k \Prob{r_{i_j} \xrightarrow{s_{i_j}} P \,\big| \cap_{h=1}^{j-1}\{r_{i_h} \xrightarrow{s_{i_h}} P\}}.
    \label{eq:term-product-chain-requests}
\end{align}
We need to examine each term in \eqref{eq:term-product-chain-requests}. We have that the event $\{r_{i_j} \xrightarrow{s_{i_j}} P\}$ is the intersection of the events $\{\text{$r_{i_j}$ targets $P$ at time $s_{i_j}$}\}$, $\{\text{$r_{i_j}$ is pending at time $s_{i_j}$}\}$ (which can also be written as $\{r_{i_j} \in Q_{s_{i_j}}\}$) and $\{\text{the request $r_{i_j}$ is accepted at time $s_{i_j}$}\}$. The target of $r_{i_j}$ at time $s_{i_j}$ is chosen uniformly at random in $V_{s_{i_j}}$: therefore, this is independent from the past and from the fact that $r_{i_j}$ is pending at time $s_{i_j}$.
Moreover, the request $r_{i_j}$ targets some node in $P$ at time $s_{i_j}$ with probability at most $\frac{|P|}{n-1}$ (since some nodes in $P$ may not be in $V_{s_j}$).
Hence,  for each $j \in [k]$,
\begin{align}
\notag
&\Prob{r_{i_j} \xrightarrow{s_{i_j}} P \mid \cap_{h=1}^{j-1}\{r_{i_j} \xrightarrow{s_{i_j}} P\}} 
\\ \notag &\leq \Prob{ r_{i_j} \in Q_{s_{i_j}} \text{ and $r_{i_j}$ targets $P$ at time $s_{i_j}$}
     \mid \cap_{h=1}^{j-1}\{r_{i_h} \xrightarrow{s_{i_h}} P\}} 
    \\ & \leq \frac{|P|}{n-1} \cdot \Prob{r_{i_j} \in Q_{s_{i_j}} \mid \cap_{h=1}^{j-1}\{r_{i_h} \xrightarrow{s_{i_h}} P\}}. \label{eq:requests-pending}
    \end{align}
For every fixed sequence of rounds $s_1,\dots, s_k$ with $s_\ell = s_{i_j}$, notice that:
\begin{align*}
    &\Prob{r_{i_j} \in Q_{s_{i_j}} \mid \cap_{h=1}^{j-1}\{r_{i_h}\xrightarrow{s_{i_h}} P\}} =\Prob{r_\ell \in Q_{s_{\ell}} \mid \cap_{h:s_h \leq s_\ell}\{r_{h}\xrightarrow{s_{h}} P\}}.
\end{align*}

\medskip

\noindent {\textbf {STEP 2: Conclusion of the proof assuming worst-case bound.}}

\smallskip

Consider now the following quantity
\begin{align}
    &A(r_\ell,s_\ell) = \max_{s_1,\dots, s_{\ell-1},s_{\ell+1},\dots s_k} \Prob{r_\ell \in Q_{s_{\ell}}\mid \cap_{h:s_h \leq s_\ell}\{r_{h}\xrightarrow{s_{h}} P\}}
    \label{eq:def-a-requests}
\end{align}
that is, $A(r_\ell,s_\ell)$ is the probability that $r_\ell$ is in the queue of rejected requests at time $s_\ell$ under the worst possible conditioning involving events of the type $\{r_h\xrightarrow{s_{h}} P\}$ with $s_h\leq s_\ell$.
Notice that the quantity in \cref{eq:term-product-chain-requests} can be bounded in terms of \eqref{eq:def-a-requests}: indeed, considering the previous bound and \eqref{eq:requests-pending}, it holds
\begin{align}
\label{eq:bound-aell}
    \prod_{j=1}^k \Prob{r_{i_j} \xrightarrow{s_{i_j}} P \mid \cap_{h=1}^{j-1}\{r_{i_h} \xrightarrow{s_{i_h}} P\}} \leq \left(\frac{|P|}{n-1}\right)^k \cdot \prod_{\ell = 1}^k A(r_\ell,s_\ell).
\end{align}
Notice that, if $A(r_\ell,s_\ell) \leq \tfrac{215}{n}+ {\rm e}^{-(s_\ell-t_\ell)}$, we proved the lemma. Indeed, from \eqref{eq:term-product-chain-requests} and \cref{eq:bound-aell} it holds that
\begin{align*}
    \Prob{\cap_{j=1}^k \{X_t(r_j) \in P\}} &\leq \left(\frac{|P|}{n-1}\right)^k \sum_{s_1,\dots,s_k} \prod_{\ell=1}^k\left(\frac{215}{n} + {\rm e}^{-(s_\ell-t_\ell)}\right)
    \\ & \leq \left(\frac{|P|}{n-1}\right)^k \left(\sum_{h = 0}^{n-1}\left(\frac{215}{n}+ {\rm e}^{-h}\right)\right)^k
    \\ & \leq \left(\frac{220|P|}{n-1}\right)^k.
\end{align*}
The rest of the proof is devoted to show that, for each $r_\ell$ and each $s_\ell$, it holds $A(r_\ell,s_\ell) \leq \tfrac{215}{n}+{\rm e}^{-(s_\ell-t_\ell)}$.

\medskip

\noindent {\textbf {STEP 3: Proof of the worst-case bound: setup.}}

\smallskip

Fix $\ell \in [k]$, and denote with $a_1,\dots,a_{\ell-1},a_{\ell + 1},\dots, a_k$ some rounds such that $a_i \in \{t_i,\dots, t\}$ and attaining the maximum in \eqref{eq:def-a-requests}, i.e.
\[A(r_\ell,s_\ell) = \Prob{r_\ell \in Q_{s_{\ell}} \mid \cap_{h:a_h \leq s_\ell}\{r_{h}\xrightarrow{a_{h}} P\}}.\]
Before proceeding with the proof, notice that $s_\ell \geq t_\ell$, and we can decompose the interval $\{t_\ell,\dots, s_\ell\}$ in sub-intervals $W_0,W_1,W_2,\dots$, that we define iteratively according to the behavior of the request $r_\ell$.
The initial point of $W_0$ is $w_0 = t_\ell$. 
The initial point $w_1$ of $W_1$ is such that
\[w_1 = \min\{s \geq t_\ell: X_s(r_\ell) \neq \emptyset \text{ or } s = \min\{t_\ell+n-1,t\}\},\]
and for each $i \geq 2$ the initial point $w_i$ of the interval $W_i$ is such that
\begin{equation}
  w_i = \min\{s \geq w_{i-1}: (X_s(r_\ell) \neq X_{w_{i-1}}(r_\ell) \text{ and $X_s(r_\ell) \neq \emptyset$}) \text{ or }s = \min\{t,t_\ell+n-1\}\}.  
  \label{eq:def-Wi}
\end{equation}
In other words, the interval $W_0$ contains the initial rounds when $r_\ell$ joins the graph but all its targets reject the requests ($W_0$ may also have length $0$). For $i \geq 1$, the interval $W_i$ contains the rounds in which the request $r_i$ is connected to the same vertex, which is in particular the $i$-th different destination of $r_i$ during its life, plus the rounds where $r_i$ is pending after its $i$-th destination leaves the network and before connecting to the $i+1$-th one. We define also $Z_0,Z_1,\dots$ as suitable sub-intervals of the intervals $W_0,W_1,\dots$  for which  the initial point $z_i$ of $Z_i$ is such that
\begin{equation}
  z_i = \min\{s \geq w_i: X_s(r_\ell) = \emptyset \text{ or } s=\min\{t,t_\ell+n-1\}\} \,   
  \label{eq:def-Zi}
\end{equation}
and the final point of $Z_i$ coincides with the final point of $W_i$. Notice that, from the definition of $W_0$, it holds $Z_0 = W_0$.
We also define $F$ as the last interval $W_F$ intersecting $\{t_\ell,\dots, s_\ell\}$. For a better understanding, see also \cref{fig:requests-r-ell}.

\begin{figure}
    \centering    \includegraphics[width=0.9\linewidth]{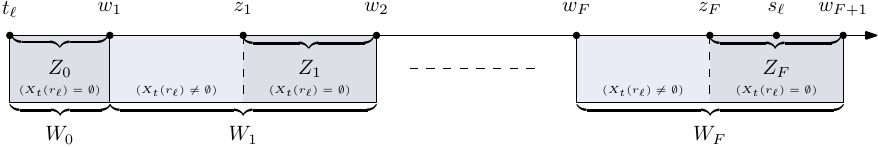}
    \caption{Variables $W_0,W_1,\dots, W_F$ and $Z_0,Z_1,\dots, Z_F$ in the time interval $\{t_\ell,\dots, s_\ell\}$ referring to the behavior of the request $r_\ell$.}
    \label{fig:requests-r-ell}
\end{figure}

Denote  for simplicity by $\mathcal C$ the event on which we are conditioning:
\[\mathcal{C} = \cap_{h: a_h \leq s_\ell}\{r_h \xrightarrow{a_h} P\}.\]
Notice that, for any $f \geq 1$, it holds $|W_0|+ |W_1| + \dots + |W_{f-1}| = w_f - t_\ell$, where $w_f$ is random. 

\medskip

\noindent {\textbf {STEP 4: Proof of the worst-case bound: comparison with $S_1$ and $S_2$.}}

\smallskip

Consider now the following event, for any fixed $f \geq 1$  and $\Tilde{w}_f \geq 0$
\begin{align*}
  \mathcal{E}(f,\Tilde{w}
  _f)  = \mathcal{C} \cap \{w_f = \Tilde{w}_f\}.
  \end{align*}
Notice that, for any $\Tilde{w}_f$, it holds
\begin{equation}
\Prob{s_\ell \in Z_f \mid \mathcal{E}(f,\Tilde{w}_f)} = \Prob{\Tilde{w}_f + |W_f \setminus Z_f| < s_\ell \leq \Tilde{w}_f + |W_f| \mid \mathcal{E}(f,\Tilde{w}_f)}\,.
    \label{eq:triplet-not-successful}
\end{equation}
On the other side, we consider two disjoint events $S_1(f)$ and $S_2(f)$ defined as follows
\begin{equation}
    S_1(f)= \{\text{$s_\ell \in W_f \setminus Z_f$, $|W_f\setminus Z_f|\geq \tfrac{7n}{8}$}\}
    \label{eq:triplet-successful-cond-1}
\end{equation}
and
\begin{equation}
S_2(f) = \{\text{$s_\ell \in W_{f+1}\setminus Z_{f+1}$, $|W_{f+1}\setminus Z_{f+1}| < \tfrac{7n}{8}$}\}.
\label{eq:triplet-successful-cond-2}
\end{equation}
We notice that, for any fixed $\Tilde{w}_f$, we have
\begin{equation}
  \Prob{S_1(f) \mid \mathcal{E}(f,\Tilde{w}_f)} = \Prob{|W_f\setminus Z_f| \geq \tfrac{7}{8}n\,,\; \Tilde{w}_f\leq s_\ell < \Tilde{w}_f + |W_f\setminus Z_f| \mid \mathcal{E}(f,\Tilde{w}_f)}  
  \label{eq:S1-given-E}
\end{equation}
and
\begin{align}
   &\notag \Prob{S_2(f) \mid \mathcal{E}(f,\Tilde{w}_f)} 
   \\ &= \Prob{|W_{f+1}\setminus Z_{f+1}|< \tfrac{7}{8}n\,,\; \Tilde{w}_f +  |W_f| \leq s_\ell < \Tilde{w}_f + 
 |W_f| + |W_{f+1}\setminus Z_{f+1}| \mid \mathcal{E}(f,\Tilde{w}_f)}.
 \label{eq:S2-given-E}
\end{align}

Conditioning on $\mathcal{E}(f,\Tilde{w}_f)$, the random variables $|W_f \setminus Z_f|$, $|W_{f+1}\setminus Z_{f+1}|$ and $|Z_f|$ have the following characteristics. The variables $|W_f \setminus Z_f|$ (resp. $|W_{f+1}\setminus Z_{f+1}|$) are determined as follows: in round $w_f$ (resp. $w_{f+1}$), the request $r_\ell$ is connecting to a not full vertex in the graph (indeed, from the definition of $W_f \setminus Z_f$ and $W_{f+1}\setminus Z_{f+1}$, it holds $X_{w_f}(r_\ell) \neq \emptyset$ and $X_{w_{f+1}}(r_\ell) \neq \emptyset$). 
Since $r_\ell$ is targeting a node uniformly at random in the graph, and since in round $w_f$ (resp. $w_{f+1}$) the request targets a \emph{not} full vertex, then the target of $r_\ell$ will be a uniform random not full vertex in the graph at round $w_f$ (resp. $w_{f+1}$) and the length of $|W_f \setminus Z_f|$ (resp. $|W_{f+1}\setminus Z_{f+1}|$) is $n$ minus the length of the life of the targeted node.  
Since from \cref{claim:bound-overloaded-nodes} the number of full vertices is at most $\tfrac{n}{c}$, the value of $|W_{f}\setminus Z_f|$ is a uniform random variable in a set $H \subseteq \{1,\dots, n\}$ with $|H| \geq n-\tfrac{n}{c}$.
Analogously,  the value of $|W_{f+1}\setminus Z_{f+1}|$ is a uniform random variable in a set $H' \subseteq \{1,\dots, n\}$ with $|H'| \geq n-\tfrac{n}{c}$.
The random variable $|Z_f|$, instead, from \cref{claim:bound-overloaded-nodes}, for every possible conditioning and value of $\mathcal{E}(f,\Tilde{w}_f)$ and $|W_f \setminus Z_f|$, is stochastically bounded by a geometric random variable with success parameter $1-\tfrac{1}{c}$.

Consider $\Tilde{w}_f$ such that $s_\ell-\Tilde{w}_f > \frac{101}{100}n$, then it holds that, since $|W_f \setminus Z_f| \leq n$ with probability $1$,
\begin{equation}
    \label{eq:bound-big-wf}
    \Prob{s_\ell \in Z_f \mid \mathcal{E}(f,\Tilde{w}_f)} \leq \Prob{|Z_f| \geq \tfrac{1}{100}n \mid \mathcal{E}(f,\Tilde{w}_f), |W_f\setminus Z_f|} \leq \left(\frac{1}{c}\right)^{\frac{1}{100}n} \leq {\rm e}^{-\frac{n}{100}}.
\end{equation}

We will show that, for any $\Tilde{w}_f$ such that $s_\ell-\Tilde{w}_f \leq \frac{101}{100}n$, it holds
\begin{align}
     &\Prob{s_\ell \in Z_f \mid \mathcal{E}(f,\Tilde{w}_{f})}
    \leq \frac{210}{n} \cdot \left(\Prob{S_1(f) \mid \mathcal{E}(f, \Tilde{w}_f)} + \Prob{S_2(f) \mid \mathcal{E}(f,\Tilde{w}_f)}  \right). 
    \label{eq:inequality-successful}
\end{align}

We notice that, when $s_\ell - \Tilde{w}_f < 0$, \eqref{eq:inequality-successful} follows trivially since $\{s_\ell \in Z_f\}$, given $\mathcal{E}(f,\Tilde{w}_f)$, has zero probability. We first notice that, from \eqref{eq:triplet-not-successful} that for each fixed $z = |Z_f|$ and $\mathcal{E}(f,\Tilde{w}_f)$ there are at most $z$ values of $w = |W_f \setminus Z_f|$ for which it holds $\{s_\ell \in Z_f\}$.
Hence, we have that


\begin{align}
\notag \Prob{s_\ell \in Z_f \mid \mathcal{E}(f,\Tilde{w}_f)}& = \sum_{z=0}^{+\infty}\Prob{z = |Z_f|\mid \mathcal{E}(f,\Tilde{w}_f)} \frac{z}{|H|} 
    \\ \notag &\leq \frac{1}{|H|}\Expc{|Z_f| \mid \mathcal{E}(f,\Tilde{w}_f)}
    \\ \notag &\leq \frac{1}{n-\tfrac{n}{c}}\Expc{\text{Geom}(1-\tfrac{1}{c})}
    \\   \label{eq:bound_NF}   &= \frac{1}{n(1-\tfrac{1}{c})^2} \leq \frac{2}{n} .
\end{align}

Now we consider two different cases, depending on the value of $\Tilde{w}_f$ in the range $0\leq s_\ell - \Tilde{w}_f  \leq \frac{101}{100}n$.

In the first case, we assume that $0\leq s_\ell - \Tilde{w}_f \leq \tfrac{n}{2}$. In such a case, from \eqref{eq:S1-given-E},  for each fixed $z = |Z_f|$, there are at at least
\[ n-\tfrac{n}{c} - \max \left\{\tfrac{7}{8}n, s_\ell-\Tilde{w}_f\right\} \geq  n \left(\tfrac{1}{8}-\tfrac{1}{c}\right)\]
values of $|W_f\setminus Z_f|$ in $H$ for which it holds $S_1(f)$. Notice that this holds independently from the value of $|Z_f|$. Hence,
\[\Prob{S_1(f) \mid \mathcal{E}(f,\Tilde{w}_f)}\geq n\left(\frac{1}{8}-\frac{1}{c}\right)\cdot \frac{1}{|H|} \geq \frac{1}{16}.\]
From the inequality above and from \eqref{eq:bound_NF}, we get  
\[\Prob{s_\ell \in Z_f \mid \mathcal{E}(f,\Tilde{w}_f)} \leq \frac{32}{n}\Prob{S_1(f) \mid \mathcal{E}(f,\Tilde{w}_f)},\]
which proves \eqref{eq:inequality-successful} in the first case.

We are left to analyze the second case, that is when $\frac{n}{2}<s_\ell - \Tilde{w}_f \leq \tfrac{101n}{100}$.  From \eqref{eq:S2-given-E}, for each fixed $z = |Z_f|$ ,   we have at least
\begin{equation}
  (s_\ell - \Tilde{w}_f-z-\tfrac{n}{4}-\tfrac{n}{c})\left(\tfrac{7}{8}n - (s_\ell-\Tilde{w}_f-z)+\tfrac{n}{4}-\tfrac{n}{c}\right)
  \label{eq:successful-case2}
\end{equation}
values for the pair $|W_f\setminus Z_f|$ in $H$ and $|W_{f+1}\setminus Z_{f+1}|$ in $H'$ such that $|W_{f+1}\setminus Z_{f+1}| < \tfrac{7}{8}n$ and $|W_{f}\setminus Z_f| \geq \tfrac{n}{4}$, and for which it holds $S_2(f)$. 
Since $\frac{n}{2}< s_\ell-\Tilde{w}_f \leq  \tfrac{101n}{100}$, 
such pairs are, for $c$ large enough, at least
\begin{align*}
     (s_\ell - \Tilde{w}_f-z-\tfrac{n}{4}-\tfrac{n}{c})\left(\tfrac{7}{8}n - (s_\ell-\Tilde{w}_f-z)+\tfrac{n}{4}-\tfrac{n}{c}\right) \geq \left( \tfrac{n}{4}-z-\tfrac{n}{c}\right)\left(\tfrac{23n}{200}+z-\tfrac{n}{c}\right) \geq \tfrac{n^2}{100}-z^2 .
\end{align*}
Since $|W_f \setminus Z_f|$ and $|W_{f+1}\setminus Z_{f+1}|$ are uniform random variables in $H$ (resp. $H'$) with $|H|,|H'|\geq n-\tfrac{n}{c}$, each value of pairs $h=|W_f\setminus Z_f|$ and $h' = |W_{f+1}\setminus Z_{f+1}|$ has probability at least $1/n^{2}$.
Therefore,
\begin{align*}
    \Prob{S_2(f) \mid \mathcal{E}(f,\Tilde{w}_f)} & \geq \sum_{z=1}^{\infty}\Prob{|Z_f| = z \mid \mathcal{E}(f,\Tilde{w}_f)} \cdot \frac{1}{n^2}\cdot \left(\frac{n^2}{100}-z^2\right).
\end{align*}
We remark that from the stochastic domination we have that $\Prob{|Z_f| \leq z \mid \mathcal{E}(f,\Tilde{w}_f)} \geq 1-\left(\tfrac{1}{c}\right)^{z}$,  and so, for large enough $c$, it holds
\begin{align*}
\Prob{S_2(f) \mid \mathcal{E}(f, \Tilde{w}_f)} &\geq \sum_{z = 1}^{\infty} \Prob{|Z_f| = z \mid \mathcal{E}(f,\Tilde{w}_f)}\cdot \frac{1}{n^2}\left(\frac{n^2}{100}-z^2\right)
    \\ & \sum_{z=1}^{n/100}\Prob{|Z_f| = z \mid \mathcal{E}(f,\Tilde{w}_f)} \left(\frac{1}{100}-\frac{1}{(100)^2}\right)
    \\ & \geq \Prob{|Z_f| \leq \tfrac{n}{100}\mid \mathcal{E}(f,\Tilde{w}_f)} \cdot \frac{1}{102}
     \geq \frac{1}{105}.
    \end{align*}
Therefore, from the inequality above and from \eqref{eq:bound_NF} we have that
\begin{align*}
    \Prob{s_\ell \in Z_f \mid \mathcal{E}(f,\Tilde{w}_f)} &\leq \frac{210}{n} \Prob{S_2(f) \mid \mathcal{E}(f,\Tilde{w}_f)},
\end{align*}
which proves \eqref{eq:inequality-successful} in the second case.

\medskip

\noindent {\textbf {STEP 5: Proof of the worst-case bound: conclusion.}}

\smallskip

We conclude now the proof by showing that $A(r_\ell,s_\ell) \leq \frac{215}{n}+{\rm e}^{-(s_\ell-t_\ell)}$. From \eqref{eq:def-a-requests} and the definition of $\mathcal{C}$,  we have that 
\begin{align*}
A(r_\ell,s_\ell) &=\Prob{r_\ell \in Q_{s_{\ell}} \mid \mathcal{C}}
\\ & \stackrel{(I)}{=} \sum_{f} \sum_{\Tilde{w}_f} \Prob{r_\ell \in Q_{s_\ell}, F= f, \mathcal{E}(f,\Tilde{w}_f) \mid \mathcal{C}} + \Prob{r_\ell \in Q_{s_{\ell}}, F = 0 \mid \mathcal{C}}
    \\ & \stackrel{(II)}{=} \sum_f \sum_{ \Tilde{w}_f}\Prob{s_\ell \in Z_f \mid \mathcal{E}(f,\Tilde{w}_f)}\Prob{\mathcal{E}(f,\Tilde{w}_f) \mid \mathcal{C}} + \left(\frac{1}{c}\right)^{s_\ell-t_\ell}
    \\ & \stackrel{(III)}{\leq} \sum_f \sum_{\Tilde{w}_f <s_\ell -\frac{101}{100}n}\Prob{s_\ell \in Z_f \mid \mathcal{E}(f,\Tilde{w}_f)}\Prob{\mathcal{E}(f,\Tilde{w}_f) \mid \mathcal{C}}
    \\ &+\tfrac{210}{n} \,  \sum_f \sum_{\Tilde{w}_f\geq s_\ell - \frac{101}{100}n}\Prob{S_1(f) \mid \mathcal{E}(f,\Tilde{w}_f)} \Prob{\mathcal{E}(f,\Tilde{w}_f) \mid \mathcal{C}}
    \\ & \notag
    + \tfrac{210}{n} \sum_f \sum_{\Tilde{w}_f \geq s_\ell - \frac{101}{100}n} \Prob{S_2(f) \mid \mathcal{E}(f,\Tilde{w}_f)} \Prob{\mathcal{E}(f,\Tilde{w}_f) \mid \mathcal{C}} +  \left(\frac{1}{c}\right)^{s_\ell-t_\ell}
    \\ & \stackrel{(IV)}{\leq} 
    \sum_f \sum_{\Tilde{w}_f <s_\ell- \frac{101}{100}n}{\rm e}^{-\frac{n}{100}}\Prob{\mathcal{E}(f,\Tilde{w}_f) \mid \mathcal{C}}
    \\&+\tfrac{210}{n}\sum_f \sum_{\Tilde{w}_f} \Prob{r_\ell \not \in Q_{s_\ell}, F  = f, |W_{F}\setminus Z_F| \geq \tfrac{7n}{8} \mid \mathcal{E}(f,\Tilde{w}_f)}\Prob{\mathcal{E}(f,\Tilde{w}_f) \mid \mathcal{C}}
    \\ &  + \tfrac{210}{n}\sum_f \sum_{\Tilde{w}_f} \Prob{r_\ell \not \in Q_{s_\ell},  F = f+1, |W_{F}\setminus Z_F| < \tfrac{7n}{8} \mid \mathcal{E}(f, \Tilde{w}_f)}\Prob{\mathcal{E}(f,\Tilde{w}_f) \mid \mathcal{C}} + \left(\frac{1}{c}\right)^{s_\ell-t_\ell}
    \\ & \stackrel{(V)}{\leq}  n^2 {\rm e}^{-\frac{n}{100}}+\tfrac{210}{n}\sum_f \sum_{\Tilde{w}_f}\Prob{r_\ell \not \in Q_{s_\ell}, F = f, |W_F \setminus Z_F| \geq \tfrac{7n}{8}, \mathcal{E}(f, \Tilde{w}_f) \mid \mathcal{C}}
    \\ & +  \tfrac{210}{n}\sum_f \sum_{\Tilde{w}_f}\Prob{r_\ell \not \in Q_{s_\ell}, F = f+1, |W_F \setminus Z_F|  < \tfrac{7n}{8}, \mathcal{E}(f, \Tilde{w}_f) \mid \mathcal{C}} +  \left(\frac{1}{c}\right)^{s_\ell-t_\ell}
    \\ & \stackrel{(VI)}{\leq} n^2 {\rm e}^{-n/100} + \tfrac{210}{n} \cdot \Prob{r_\ell \not \in Q_{s_\ell} \mid \mathcal{C}} + \left(\frac{1}{c}\right)^{s_\ell-t_\ell}\leq \tfrac{215}{n} + {\rm e}^{-(s_\ell-t_\ell)}.
\end{align*}

Where $(I)$ follows from the law of total probability, $(II)$ from the definition of $Z_f$ and since $\Prob{r_\ell \in Q_{s_\ell}, F = 0 \mid \mathcal{C}} \leq \left(\frac{1}{c}\right)^{s_\ell-t_\ell}$. The inequality $(III)$ follows from \eqref{eq:inequality-successful},  and $(IV)$ from \eqref{eq:bound-big-wf} and from the definition of $S_1(f)$ and $S_2(f)$ (in \eqref{eq:triplet-successful-cond-1} and \eqref{eq:triplet-successful-cond-2}). The inequality $(V)$ follows from an union bound over  all possible pairs of $f$ and $\Tilde{w}_f$, which are at most $n^2$.
The inequality  $(VI)$ follows from the fact that the events
\[\{F = f, |W_F\setminus Z_F| \geq \tfrac{7n}{8}, \mathcal{E}(f,\Tilde{w}_f)\}, \{ F = f+1, |W_F\setminus Z_F| < \tfrac{7n}{8}, \mathcal{E}(f, \Tilde{w}_f)\},\]
are disjoint and at the varying of $f$ and $\Tilde{w}_f$ are a (disjoint) partition of the event $\mathcal{C}$.
\end{proof}

\subsection{On the number of pending requests} \label{ssec:pending}
As we observed in the previous section,   the queue $Q_t$ (i.e. the set of all pending requests at round $t$) plays a crucial role in our probabilistic analysis. In particular, we will often  exploit the following upper bound on its size.



\begin{lemma} 
There exist constants $c$ and $d$ sufficiently large such that, for all $n$ large enough, the following holds.
For every $t  \geq 2n$, 
$$
\Prob{|Q_t| \leq 100(cd)^2 \log n}\geq 1-n^{-2}\,.
$$
\label{lem:size-queue}
\end{lemma}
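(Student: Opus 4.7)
\medskip

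\noindent\textbf{Proof plan.}
The plan is to track the evolution of $|Q_t|$ round by round, establish a strong negative drift whenever $|Q_t|$ is large, and then conclude by a standard hitting-time argument for super-martingales with negative drift (as announced in \cref{sec:overv}).

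\smallskip

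First, I would decompose the one-step change. At each round $t+1$ the pending set $Q_{t+1}$ consists of (i) the $d$ brand-new requests issued by the entering vertex, (ii) the requests that become pending because their previous endpoint just left the network (at most $cd$ because each departing vertex had at most $cd$ incoming edges, plus at most $d$ outgoing, and only one vertex leaves per round), and (iii) the requests already in $Q_t$ (augmented by the contributions from (i)--(ii) of the previous round) that fail to be accepted at round $t+1$. Writing $N_t$ for the number of newly arriving pending requests and $S_t$ for the number of requests that succeed, this gives the deterministic identity
\[
|Q_{t+1}| \;=\; |Q_t| + N_t - S_t, \qquad 0\leq N_t \leq (c+1)d.
\]

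\smallskip

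The core of the argument is showing a uniform negative drift: there exists a constant $\delta>0$ such that, whenever $|Q_t|\geq K$ for some constant $K=K(c,d)$,
\[
\Expc{S_t - N_t \,\big|\, Q_t} \;\geq\; \delta\, |Q_t|.
\]
To prove this, I would use \cref{lem:multiple-requests-destination} applied with $P$ equal to the set of currently full vertices: by \cref{claim:bound-overloaded-nodes} we have $|P|\leq n/c$, so the probability that a fixed collection of $k$ pending requests all target full vertices is at most $(220/c)^k$, which is much less than $1$ for $c$ large. A similar near-independence argument (again via \cref{lem:multiple-requests-destination}) controls the collision events among pending requests that share the same non-full target, so that for $c,d$ large enough a constant fraction of the pending requests get accepted in expectation, dominating the at most $(c+1)d$ new arrivals whenever $|Q_t|$ exceeds a suitable constant.

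\smallskip

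With the drift in hand, I would combine it with the method of bounded differences to obtain concentration. Because $|Q_{t+1}|-|Q_t|$ lies in $[-|Q_t|, (c+1)d]$ and the per-round randomness decomposes into the independent choices of targets made by the at most $|Q_t|+(c+1)d$ active requests, McDiarmid's inequality applied on top of the drift bound yields that $S_t$ is concentrated around its mean up to deviations of order $\sqrt{|Q_t|}\cdot cd$. Plugging this into the decomposition above gives that $(|Q_t|)_{t\geq 2n}$ is stochastically dominated by a Markov chain on $\mathbb{N}$ with uniform negative drift $-\delta q$ and per-step jumps bounded (up to a Gaussian tail) by $O(cd\sqrt{q})$. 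A standard Foster/Lyapunov argument (or direct computation with the moment generating function of the drift chain) then shows that such a chain, started from any state below $n$, does not exceed $100(cd)^2\log n$ within $n^{O(1)}$ steps except with probability $n^{-2}$; a union bound over the polynomially many rounds since the process started at $t=0$ (where $|Q_0|=0$) then closes the argument.

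\smallskip

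I expect the main obstacle to be the second step, namely turning the almost-product bound of \cref{lem:multiple-requests-destination} into a usable lower bound on $\Expc{S_t\mid Q_t}$. The difficulty is that acceptance of a request depends on the number of competing requests landing on the same target in the same round, and these events are correlated both through the threshold rule and through the history that determined which requests are currently pending. The right way to handle this is to isolate a ``safe'' subset of pending requests whose targets are non-full with high conditional probability, invoke \cref{lem:multiple-requests-destination} only on this subset, and control collisions by a second-moment computation on the number of requests landing on each non-full node; the quadratic factor $(cd)^2$ in the statement is exactly what one expects to recover from this bounded-differences argument.
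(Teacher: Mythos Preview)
Your overall architecture (one-step decomposition, drift, bounded differences, then iterate) is the same as the paper's, but two concrete points need fixing.

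\textbf{The drift step does not need, and should not use, \cref{lem:multiple-requests-destination}.} That lemma bounds $\Prob{\cap_r \{X_t(r)\in P\}}$, where $X_t(r)$ is the \emph{currently established} endpoint of request $r$; it is a statement about edges present in $G_t$, built up over many past rounds, and this is precisely where the subtle time-correlations live. For the one-step transition you condition on $(Q_t,G_{t-1})$, and then the fresh targets chosen at round $t$ are, by definition of the edge process, independent and uniform in $V_t\setminus\{v\}$. So there is no historical correlation to untangle: the probability that a given pending request hits the set of vertices with in-degree $> cd/2$ is deterministically at most $2/c$ by the counting argument of \cref{claim:bound-overloaded-nodes}, and the collision event ``$\geq cd/2$ other pending requests land on my target'' is controlled by a plain binomial tail $\Prob{\mathrm{Bin}(nd,1/n)\geq cd/2}$. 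This directly yields $\Expc{S_t\mid Q_t,G_{t-1}}\geq (1-4/c)|Q_t|$, and then McDiarmid with Lipschitz constant $2cd$ (changing one target can flip at most $2cd$ acceptance decisions) gives that the queue halves w.h.p.\ once $|Q_t|\geq 96(cd)^2\log n$. Your proposed ``safe subset'' plus second-moment route would eventually work, but it is strictly harder than what is actually needed, and invoking \cref{lem:multiple-requests-destination} with $P=B_t$ also runs into the issue that $B_t$ is random and determined by the very history the lemma integrates over.

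\textbf{The closing union bound is not valid as written.} You take a union bound ``over the polynomially many rounds since the process started at $t=0$'', but the statement must hold for every $t\geq 2n$, and $t$ may be arbitrarily large. The fix the paper uses is to observe that $|Q_{t-n}|\leq nd$ holds deterministically for every $t$, so it suffices to run the halving argument over the window $\{t-n,\dots,t\}$: the queue drops below the threshold within $\log(nd)$ halvings, and the deterministic per-round increase of at most $(c+1)d$ keeps it below $100(cd)^2\log n$ for the rest of the window. A union bound over these $n$ rounds, each failing with probability $n^{-3}$, then gives the claimed $1-n^{-2}$.
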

To prove \cref{lem:size-queue}, we need the following preliminary lemma    showing that, if the size of the queue is larger than a suitable logarithmic threshold, then in the next round it decreases by a constant factor, w.h.p. 

\begin{lemma}
\label{lem:decreasing-queue}
There exist constants $c$ and $d$ sufficiently large such that, for all $n$ large enough, the following holds.
    For any $t \geq 2n$,   if $|Q_t| \geq 3 \cdot 32(cd)^2 \log n$, then 
    
    \begin{equation} \label{eq:queuedecr}
    \Prob{ |Q_{t+1}| \leq \tfrac{1}{2}|Q_t| \mid Q_t, G_{t-1} } \geq 1-n^{-3}.
\end{equation}       
\end{lemma}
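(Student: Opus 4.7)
The plan is to control $|Q_{t+1}|$ by decomposing it into the requests of $Q_t$ rejected at round $t$ plus the newly pending requests of round $t+1$, and then to apply a mean-plus-concentration argument to the rejection count. First, I would observe the deterministic bound
\[|Q_{t+1}| \leq Y + (c+1)d,\]
where $Y$ is the number of requests of $Q_t$ rejected at round $t$: the additive $(c+1)d$ accounts for the $d$ connection requests of the vertex joining at round $t+1$ together with the at most $cd$ requests whose edge disappears because their target is the vertex leaving at round $t+1$; these three contributions to pending requests are disjoint. Under the hypothesis $|Q_t| \geq 96(cd)^2 \log n$ and for $n$ large, the additive $(c+1)d$ is at most $|Q_t|/16$.

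Next, I would bound the conditional expectation $\Expc{Y \mid Q_t, G_{t-1}}$. Conditionally on $(Q_t, G_{t-1})$, the targets $\{X_t(r): r \in Q_t\}$ are independent and uniform on $V_t \setminus \{v_r\}$, so for each $r \in Q_t$ the rejection probability splits into two contributions. The first, the probability of landing on a full target, is bounded by $|B_t|/(n-1) \leq 2/c$ via \cref{claim:bound-overloaded-nodes}. The second, the probability of landing on a non-full target $u$ whose remaining capacity $k(u) = cd - \indeg{u}{t} \geq 1$ is exceeded by the concurrent load, is $O(|Q_t|^2/n^2)$ via a Poisson-tail estimate on the $\Bin(|Q_t|-1,\,1/(n-1))$ load of $u$. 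Summing over $r \in Q_t$ and choosing $c$ large enough yields $\Expc{Y \mid Q_t, G_{t-1}} \leq |Q_t|/8$ in the regime $|Q_t| = O(n/c)$, which is the one that matters when the lemma is chained into \cref{lem:size-queue}.

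Finally, concentration is obtained via the method of bounded differences. The rejection count $Y$ is a deterministic function of the independent variables $\{X_t(r)\}_{r \in Q_t}$, and flipping one target from $u$ to $u'$ decreases the load at $u$ by one and increases the load at $u'$ by one. At each of $u$ and $u'$ the rejection count changes by at most $k(\cdot)+1 \leq cd+1$, since the jump happens exactly when the load crosses the capacity threshold; moreover, the change at $u$ is non-positive and the change at $u'$ is non-negative, so the total variation of $Y$ is at most $cd+1$ in absolute value. McDiarmid's inequality then gives
\[\Prob{|Y - \Expc{Y \mid Q_t, G_{t-1}}| \geq |Q_t|/8 \,\big|\, Q_t, G_{t-1}} \leq 2\exp\!\left(-\frac{|Q_t|}{32(cd+1)^2}\right) \leq n^{-3},\]
using $|Q_t| \geq 3\cdot 32(cd)^2 \log n$. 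Combining, $|Q_{t+1}| \leq Y + (c+1)d \leq |Q_t|/8 + |Q_t|/8 + |Q_t|/16 \leq |Q_t|/2$ with probability $\geq 1 - n^{-3}$.

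The trickiest point I anticipate is pinning down the bounded-differences constant at $cd+1$ rather than $2(cd+1)$, since a naive bound would inflate the required threshold by a factor of $4$; the observation that the contributions at $u$ and $u'$ have opposite signs and partially cancel is essential. A secondary subtlety is that the expectation bound of the second step requires an upper bound of the form $|Q_t| = O(n/c)$, which is available either a priori for $n$ large (the threshold $96(cd)^2\log n$ is itself $o(n)$) or inductively when this lemma is combined with \cref{lem:size-queue}.
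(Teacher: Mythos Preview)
Your overall plan—decompose $|Q_{t+1}|$, bound the mean of the rejection (equivalently, acceptance) count conditional on $(Q_t,G_{t-1})$, then apply McDiarmid over the independent targets—is exactly the paper's. Your refinement of the Lipschitz constant to $cd+1$ via the sign-cancellation observation is correct and slightly sharper than the paper's $2cd$.

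The gap is in the expectation step. Your claim that the non-full-target contribution to the per-request rejection probability is $O(|Q_t|^2/n^2)$ fails: a non-full vertex can have residual capacity $k(u)=1$, in which case it rejects as soon as one other request collides, giving a per-vertex contribution of order $|Q_t|/n^2$ rather than $|Q_t|^2/n^3$; summed over all such vertices this is $O(|Q_t|/n)$, which is $\Theta(d)$ when $|Q_t|$ is near its deterministic maximum $nd$. Your proposed escape—restricting to $|Q_t|=O(n/c)$—is not available: the lemma is applied in the proof of \cref{lem:size-queue} starting from the crude bound $|Q_{t-n}|\leq nd$, so it must hold over the full range $[3\cdot 32(cd)^2\log n,\,nd]$, and the ``inductive'' suggestion would be circular (it presupposes the conclusion of \cref{lem:size-queue}). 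The observation that the threshold $96(cd)^2\log n$ is $o(n)$ is a statement about the \emph{lower} bound on $|Q_t|$ in the hypothesis, not an upper bound.

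The fix, which the paper carries out, is to split the non-full targets by capacity rather than to assume $|Q_t|$ is small. Vertices with in-degree $\leq cd/2$ reject only when at least $cd/2$ requests collide, an event of probability $\leq e^{-\Omega(c^2d)}$ uniformly in $|Q_t|\leq nd$ by a Chernoff bound on $\Bin(nd,1/n)$; vertices with in-degree $>cd/2$ number at most $2n/c$ by the same edge-counting argument as \cref{claim:bound-overloaded-nodes}, so they contribute at most $3/c$ to the rejection probability regardless of $|Q_t|$. This yields $\Expc{Y\mid Q_t,G_{t-1}}\leq (4/c)|Q_t|$ for all $|Q_t|$, and the rest of your argument then goes through.
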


\begin{proof}
Denote with $Y_t$ the number of requests in $Q_{t}$ accepted during round $t$, i.e. $Y_{t} = |Q_t \setminus Q_{t+1}|$ and let $W_t$ the new pending requests at time $t+1$, i.e., $W_t = |Q_{t+1}\setminus Q_t|$. Then, for any $t \geq 1$,
\begin{equation}
    \label{eq:increasing-queue}
    |Q_{t+1}| = |Q_t| + W_t - Y_t.
\end{equation}
Notice that, for any $t \geq 1$, we  deterministically have that $|W_t| \leq (c+1)d$: indeed,  each vertex has in-degree bounded by $c d$, and so a vertex leaving the graph can generate at most $cd$ new pending requests, while a vertex joining the graph always generate new $d$ pending requests. 

The distribution of the random variable $Y_t$  depends on the size of $Q_t$ and on the configuration of the graph at time $t$.
We next prove the following property of $Y_t$:  for any $t \geq 2n$,
\begin{equation}
  \Expc{Y_t \mid Q_t,G_{t-1}} \geq |Q_t| \left(1-\tfrac{4}{c}\right).  \label{eq:expectation-accepted-requests-final} 
\end{equation}
Denote with $S$ the set of vertices in $V_t$ with in-degree $\leq cd/ 2$: using  an argument similar to  that proving \cref{claim:bound-overloaded-nodes}, we have that $|V_t \setminus S| \leq 2n/c$. Let $Z_t$ be the set of  requests in $Q_t$ targeting   $S$, and colliding with less than $\tfrac{cd}{2}$ other requests. Notice that  the requests in $Z_t$ will thus be accepted at round $t$, and clearly $Y_t \geq |Z_t|$. Therefore, we can bound the conditional expectation by
\begin{equation}    \label{eq:expectation-accepted-requests}
\Expc{Y_t \mid Q_t,G_{t-1}} \geq \Expc{Z_t \mid Q_t,G_{t-1}} = \sum_{r=1}^{|Q_t|}\Prob{Z_t(r)=1 \mid Q_t,G_{t-1}},
\end{equation}
where $Z_t(r)$ is the binary random variable indicating whether the request $r$ is in the set $Z_t$. Hence, for each $r = 1,\dots, |Q_t|$, it holds that 
\begin{align}
   & \Prob{Z_t(r) = 1\mid Q_t,G_{t-1}} 
  \notag
   \\ & \geq \sum_{v \in S\setminus u} \Prob{\text{$r$ targets $v$ at round $t$}        \mid Q_t,G_{t-1}}  
   \Prob{
  \begin{tabular}{c}
    \text{$v$ is targeted by $< \tfrac{cd}{2}$ requests} \\
    \text{$ \neq r$ at round $t$}
  \end{tabular} \mid Q_t,G_{t-1}} \notag
    \\ & \geq \sum_{v \in S \setminus u} \tfrac{1}{n-1} \cdot \Prob{\Bin\left(|Q_t|,\tfrac{1}{n}\right) < \tfrac{cd}{2} \mid Q_t,G_{t-1}} , 
    \label{eq:bound-single-accepted}
\end{align}
where $u$ is the vertex making the  request $r$.
Since $|Q_t| \leq nd$, we have that
\[\Prob{\Bin\left(|Q_t|,\tfrac{1}{n}\right)\geq \tfrac{cd}{2} \mid Q_t,G_{t-1}} \leq \Prob{\Bin \left(nd,\tfrac{1}{n}\right)\geq \tfrac{cd}{2}} \leq {\rm e}^{-\left(\frac{c}{2}-1\right)^2 \frac{d}{3}} \leq  \tfrac{3}{d}\left(\tfrac{2}{c-2}\right)^2 \leq \tfrac{1}{c},\]
where the last inequality follows since $d \geq 3$ and $c \geq 16$. Hence, we can bound \eqref{eq:bound-single-accepted} with
\begin{align*}
    \Prob{Z_t(r) = 1 \mid Q_t,G_{t-1}} &\geq \sum_{u \in S \setminus v}\tfrac{1}{n-1} \left(1-\tfrac{1}{c}\right)
    \\ & \geq \tfrac{|S|-1}{n-1}\left(1-\tfrac{1}{c}\right)
    \\ & \geq \left(1-\tfrac{4}{c}\right),
\end{align*}
where we assumed $n \geq 3$. Hence, from the above  inequality   and from \cref{eq:expectation-accepted-requests}, we get  \cref{eq:expectation-accepted-requests-final}.

\smallskip
We have thus  bounded the expectation of $Y_t$, given $Q_t$ and the snapshot   $G_{t-1}$. Our next goal is  to prove that a concentration result on  $Y_t$. To this aim,  we  next  apply   the method of bounded differences (see \Cref{app:prob}).
Denote with $X_r$ the target vertex of the request $r$ at round $t$. We notice that, given $G_{t-1}$ and $Q_t$, $Y_t$ can be expressed as a function of $X_1,\dots, X_{|Q_t|}$. Indeed, once    the targets of the pending requests are fixed, we can   determine   which vertices will accept the requests, since we also have knowledge of $G_{t-1}$. Denote with $f$ the function (depending on $G_{t-1}$ and $Q_t$) such that
\[Y_t = f(X_1,\dots, X_{|Q_t|}).\]
Notice that $f$ satisfies the Lipschitz property (see \cref{def:lipschitz-property} in Appendix \ref{app:prob}) with coefficient $2cd$, and in particular
\[|f(x_1,\dots, x_r,\dots, x_{|Q_t|})- f(x_1,\dots, x_r',\dots, x_{|Q_t|})| \leq 2cd.\]
Indeed, denoting $x_r = v$ and $x_r' = v'$, the change of destination  of the request $r$ from $v$ to $v'$ can generate (in the worst-case)  the following two changes: 
\begin{enumerate}[(i)]
    \item     $v$ can accept at most $cd$ additional requests,
    \item      $v'$ can reject  at most $cd$ requests.
\end{enumerate}
 We thus get a maximum  total variation   $2cd$ for  $f$.
From \cref{thm:bounded-diff} and since $c \geq 16$,  we get 
\begin{align*}
\Prob{Y_t \leq \tfrac{5}{8}|Q_t| \mid Q_t,G_{t-1}} &\leq 
  \Prob{Y_t < \Expc{Y_t \mid Q_t,G_{t-1}} - \tfrac{1}{8}|Q_t| \mid Q_t, G_{t-1}}
\\ &  \leq {\rm e}^{-\frac{|Q_t|}{32(cd)^2}}
\\ & \leq {\rm e}^{-3 \log n} = n^{-3},
\end{align*}
where the last inequality follows from the hypothesis  $|Q_t| \geq 3 \cdot 32(cd)^2 \log n$.
Finally, from \eqref{eq:increasing-queue}, with probability at least $1-n^{-3}$,
\begin{align*}
  |Q_{t+1}| &\leq |Q_t| + (c+1) d - Y_t  
 \\ & \leq \tfrac{3}{8}|Q_t| + (c+1)d
 \\ & \leq \tfrac{1}{2}|Q_t| . 
\end{align*}
\end{proof}

Now we are ready to prove \cref{lem:size-queue}. 
\begin{proof}[Proof of \cref{lem:size-queue}]  
We first remark that, for any round $t \geq 2n$,  the following  two key facts (deterministically) hold:
\begin{enumerate}[(i)]
    \item \label{obs:case(1)-queue}  $|Q_{t-n}| \leq nd$;
    \item \label{obs:case(2)-queue} At each round $s \in \mathbb{N}$,   $|Q_{s+1}| \leq |Q_s| + (c+1)d$.
\end{enumerate}
For any round  $s \in \{t-n,\dots, t\}$, consider the event $\{$the queue $Q_s$ has size at most $3 \cdot 32(cd)^2 \log n$ or it halves from round $s$ to $s+1$$\}$, formally:
\[A_s = \{|Q_s| \leq 3 \cdot 32(cd)^2\log n\} \cup \{|Q_{s+1}| \leq \tfrac{1}{2}|Q_s|\},\]
 From \cref{lem:decreasing-queue} and from an union bound over the rounds $s=t-n,\dots, t$, it follows that 
\[\Prob{\cap_{s=t-n}^t A_s} \geq 1-n^{-2}.\]
We now proceed to show that 
\begin{equation}
  \cap_{s=t-n}^t A_s \subseteq \{|Q_t| \leq 100\log n \}.
  \label{eq:As-implies-small-queue}
\end{equation}
Let $\tau$ be the first round in which the queue has size less than $3 \cdot 32(cd)^2 \log n$,
\[\tau = \min\{s \geq t-n:|Q_t| \leq 3 \cdot 32(cd)^2\log n\}.\]
From \cref{obs:case(1)-queue}, the event $\cap_{s=t-n}^t A_s$ implies that $\tau \leq \log(nd)$. 
 Then, for every round $s \geq \tau$, it holds \cref{obs:case(2)-queue}: hence, every time $|Q_s| \leq 3 \cdot 32(cd)^2 \log n$, we can say that $|Q_{s+1}| \leq 3 \cdot 32(cd)^2 \log n + (c+1)d \leq 100 (cd)^2 \log n$, and every time that $|Q_{s+1}| \geq 3 \cdot 32(cd)^2 \log n$, the event $A_s$   implies that  $|Q_{s+2}|$ halves in the next round, thus
\[|Q_{s+2}| \leq \tfrac{1}{2}\left(3 \cdot 32(cd)^2 \log n +(c+1)d\right) \leq \tfrac{1}{2}\cdot 100(cd)^2 \log n \leq 3 \cdot 32(cd)^2 \log n.\] 
We have  thus  proved  \eqref{eq:As-implies-small-queue} which concludes the proof of the lemma.
\end{proof}




\subsection{On the number of pending rounds of a request}

The following lemma provides a bound on the overall number of rounds in which a fixed request $r$ is pending during all of its lifetime, namely on the quantity 
$$
P(r) = \sum_{t= t_r}^{t_r + n}\mathds{1}\left[r \in Q_t\right].
$$
\begin{lemma}
\label{lem:number-times-pending}
There exist constants $c$ and $d$ sufficiently large such that, for all $n$ large enough, the following holds.
For any $t \geq 2n$, any request $r$ in $V_t \times [d]$ verifies
\begin{align*}
    \Prob{P(r) \geq j} \leq 2{\rm e}^{-j/24}.
    \end{align*}
As a consequence
$$ \Expc{P(r)}= O(1)
$$
and in particular
\[\Prob{P(r) \geq 50\log n} \leq n^{-2}.\]
\end{lemma}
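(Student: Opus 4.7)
The strategy is to decompose the lifetime of $r$ into the alternating ``connected + pending'' cycles $W_0, W_1, \ldots$ and sub-intervals $Z_i \subseteq W_i$ already introduced in the proof of \cref{lem:multiple-requests-destination}, so that
\[
P(r) = \sum_{i \ge 0} |Z_i|.
\]
Two facts extracted from Step 4 of that proof will do the heavy lifting: (i) conditionally on the past, each pending phase length $|Z_i|$ is stochastically dominated by a geometric random variable $G_i \sim \mathrm{Geom}(1-1/c)$ (this is exactly the bound $\Prob{|Z_f| \le z \mid \mathcal{E}(f,\tilde w_f)} \ge 1 - (1/c)^z$ used there); (ii) conditionally on the past, each connected phase length $|W_i \setminus Z_i|$ is uniform on a set $H \subseteq \{1,\ldots,n\}$ of size at least $n - n/c$, so in particular $\Prob{|W_i \setminus Z_i| < n/4 \mid \text{past}} \le (n/4)/(n-n/c) \le 1/3$ for $c$ large.

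Letting $M$ be the number of cycles $W_i$ that fit inside the lifetime of $r$, the plan is to split
\[
\Prob{P(r) \ge j} \;\le\; \Prob{M \ge K} + \Prob{\textstyle\sum_{i=0}^{K} G_i \ge j}
\]
for some carefully chosen $K = \Theta(j)$ (aiming for $K = j/2$). For the first summand, observe that deterministically at most four cycles can have $|W_i \setminus Z_i| \ge n/4$ (their lengths sum to at most the lifetime $n$), so $\{M \ge K\}$ forces at least $K-4$ of the first $K$ connected-phase lengths to be $< n/4$; by (ii) each such event has conditional probability bounded by $1/3$, and a standard Chernoff bound therefore yields an $e^{-\Omega(K)}$ tail. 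For the second summand, the coupling via (i) together with the identity $\{\sum_{i=0}^{K} G_i \ge j\} = \{\Bin(j-1, 1-1/c) \le K\}$ gives an $e^{-\Omega(j)}$ bound by the binomial Chernoff inequality, provided $K$ is a sufficiently small constant fraction of $j$. Taking $c$ large enough and tuning $K$ makes both exponents exceed $1/24$, yielding the claimed $\Prob{P(r) \ge j} \le 2 e^{-j/24}$.

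The bound $\Expc{P(r)} = O(1)$ then follows by integrating the geometric tail via $\Expc{P(r)} = \sum_{j \ge 1} \Prob{P(r) \ge j}$, and substituting $j = 50 \log n$ immediately yields $\Prob{P(r) \ge 50 \log n} \le n^{-2}$. The main technical obstacle I anticipate is making the simultaneous stochastic domination in (i) fully rigorous across \emph{all} cycles at once: each $|Z_i|$ depends on the entire past of the graph process, so one has to build a coupling that respects the telescoped conditioning used in \cref{lem:multiple-requests-destination}. Fortunately, the worst-case conditional bound established there already supplies exactly what is needed, so the coupling can essentially be lifted from that argument with minor adjustments, and the remaining concentration calculations are routine.
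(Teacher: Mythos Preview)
Your proposal is correct and follows essentially the same approach as the paper: decompose the lifetime into cycles $W_i$ with pending sub-intervals $Z_i$, split $\Prob{P(r)\ge j}$ into a ``too many cycles'' term and a ``sum of geometric pending phases too large'' term, and control each via the stochastic dominations from \cref{lem:multiple-requests-destination}. The only cosmetic difference is that the paper bounds the first term with threshold $K=j/4$ by applying Hoeffding directly to $\sum_f U_f$, while you reach the same $e^{-\Theta(j)}$ bound via the observation that at most four connected phases can exceed $n/4$ together with a Chernoff bound on the induced Bernoulli count; the two arguments are interchangeable.
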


\begin{proof}
Consider the random variables $W_0,W_1,\dots$ and $Z_0, Z_1,\dots$ defined in \eqref{eq:def-Wi} and \eqref{eq:def-Zi} in \cref{lem:multiple-requests-destination}, with $r_\ell = r$ and $t_\ell = t_r$. Look also at \cref{fig:requests-r-ell} for a better understanding. As in \cref{lem:multiple-requests-destination}, we define $F$ as the last interval $W_F$ intersecting $\{t_r,\dots, t_r + n\}$.
Hence, we can write $P(r)$ as
\[P(r) = \sum_{f=0}^F |Z_f|.\]

If we define $Z_f = 0$ for any $f \geq F$, we then have that
\begin{align}
\notag
    \Prob{P(r) \geq j} &= \Prob{\sum_{f = 0}^F|Z_f|\geq j, F \geq j/4} + \Prob{\sum_{f = 0}^F|Z_f| \geq j, F < j/4}
    \\ & \leq  \Prob{F \geq \tfrac{j}{4}} + \Prob{\sum_{f = 0}^{j/4}|Z_f|\geq j}. \label{eq:bound-P(r)}
\end{align}

From \cref{claim:bound-overloaded-nodes}, we know that $|B_t|\leq \frac{n}{c}$ for every $t$, and hence the sum $\sum_f Z_f$ is stochastically bounded by the sum of i.i.d. geometric random variables $Y_f$ with parameter $\left(1-\frac{1}{c}\right)$. Therefore, 
\begin{align*}
    \Prob{\sum_{f=0}^{j/4}|Z_f| \geq j} &\leq \Prob{\sum_{f=0}^{j/4}Y_f \geq j}
    \\ & \stackrel{(I)}{\leq} \Prob{\mathrm{Bin}\left(j,1-\tfrac{1}{c}\right) \leq j/4}
    \\ & \leq {\rm e}^{-j/24},
\end{align*}
where $(I)$ follows from \cref{lem:geometric_distribution}, and the last inequality from Chernoff's Inequality (\cref{thm:chernoff}) and from the fact that $c \geq 2$.

In order to bound $\Prob{F \geq j/4}$, note that if $F \leq j/4$, then it holds that $\sum_{f=0}^{j/4}|W_f| < n$. 
Notice that, when $r$ is pending and targets a not full vertex $v$ at time $s$, that connection will remain active for $n-\age_s(v)$ steps, which determines the length of $|W_f\setminus Z_f|$ for appropriate $f$. Observe that, since $v$ is sampled uniformly at random in the not full vertices, and since the sampling are independent in each round, then the random variable $\sum_{f=0}^{j/4}|W_f\setminus Z_f|$ is stochastically \emph{lower} bounded by $\sum_{f=0}^{j/4}U_f$, where $U_f$ are uniform independent random variables in $\{1,\dots, n-\tfrac{n}{c}\}$. Hence,
\begin{align*}
    \Prob{F \geq j/4} &\leq \Prob{\sum_{f=0}^{j/4}|W_f| < n}
    \\ & \leq \Prob{\sum_{f=0}^{j/4}|W_f \setminus Z_f| < n}
    \\ & \leq \Prob{\sum_{f=0}^{j/4}U_f < n}
    \\ & \leq {\rm e}^{-j/8},
\end{align*}
where the last inequality follows from Hoeffding Bound (\cref{thm:hoeffding}).

In conclusion, considering \eqref{eq:bound-P(r)}, we showed that
\begin{align*}
    \Prob{P(r) \geq j} \leq 2{\rm e}^{-j/24}.
    \end{align*}
\end{proof}

\section{Expansion Properties} \label{sec:expans}

In this section, we will prove the main result of this paper that we re-state here in a more formal way.

\begin{theorem}
    \label{thm:main-expansion}
   Let  $n_0,c_0,d_0 \in \mathbb{N}$  and   $\alpha = \alpha(d)$ sufficiently large integers. Then, for  any $d \geq d_0$, $c \geq c_0$ and $n \geq n_0$, an integer $\beta = \beta(c,d)$ exists, such that 
    the snapshot $G_t = (V_t,E_t)$ generated by the $\BSDG(n,d,c)$ model with $t \geq 2n$ satisfy the following properties, w.h.p.
    \begin{enumerate}[(a)]
     \item For every  $S \subseteq V_t$ with $|S| \geq \beta \log n$ has conductance $\phi_t(S) \geq \alpha$;
     \label{item:main-expansion:bigsets}
        \item A subset $ H_t \subseteq V_t$ with $|H_t| =n - O(\log n)$ exists such that $G_t[H_t]$ is an $\alpha$-expander.
        \label{item:main-expansion:subset}
    \end{enumerate}
\end{theorem}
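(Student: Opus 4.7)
The plan is to combine the near-product joint distribution of link destinations from \cref{lem:multiple-requests-destination} with the queue-length bound of \cref{lem:size-queue}. The common reduction is, for any $S \subseteq V_t$, to upper bound the number $|E_t[S]|$ of internal edges; since $|\partial_t S| = \vol_t(S) - 2|E_t[S]|$, a bound $|E_t[S]| \leq \alpha' d |S|$ together with a matching lower bound on $\vol_t(S)$ immediately yields the desired conductance estimate. On the event of \cref{lem:size-queue} the number of pending outgoing requests is $O(\log n)$, so for any $S \subseteq V_t$ we have $\vol_t(S) \geq d|S| - O(\log n)$, which exceeds $d|S|/2$ as soon as $|S| \geq \beta \log n$ with $\beta$ sufficiently large.

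The core estimate is the following. An edge of $G_t[S]$ corresponds to some request $r \in S \times [d]$ with current destination $X_t(r) \in S$; taking a union bound over subsets $R \subseteq S \times [d]$ of size $k$ and applying \cref{lem:multiple-requests-destination} with $P=S$ gives
\[
\Prob{|E_t[S]| \geq k} \leq \binom{|S|d}{k} \left( \frac{220 |S|}{n-1} \right)^k .
\]
For part (a), fix $\alpha' > 0$ so that $1 - 4\alpha' \geq \alpha$, set $k = \alpha' d |S|$, and union-bound over all $S$ with $\beta \log n \leq |S| \leq n/2$ (the case $|S| > n/2$ follows by symmetry since conductance is invariant under complement). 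Using $\binom{n}{s} \leq (en/s)^s$ and $\binom{sd}{\alpha' ds} \leq (e/\alpha')^{\alpha' ds}$, the failure probability is dominated by $\sum_{s \geq \beta \log n} \left( (en/s) \cdot (220 e /\alpha')^{\alpha' d} (s/n)^{\alpha' d} \right)^{s}$; choosing $d$ large enough that $\alpha' d \geq 2$ makes the base of this exponential $O(1/n)$ uniformly in $s \leq n/2$, and the total is $n^{-\Omega(\log n)}$. Combined with the volume lower bound, this yields $\phi_t(S) \geq \alpha$ for every such $S$.

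For part (b), let $H_t$ be the set of vertices $v \in V_t$ such that neither $v$ nor any of $v$'s outgoing neighbors has a pending request at round $t$. On the event of \cref{lem:size-queue}, the number of vertices owning pending requests is at most $|Q_t| = O(\log n)$, and each such vertex is the outgoing neighbor of at most $cd$ others, so $|V_t \setminus H_t| = O(\log n)$; moreover every $v \in H_t$ has out-degree exactly $d$ in $G_t[H_t]$. Subsets $S \subseteq H_t$ of size $\geq \beta \log n$ inherit expansion from part (a), since the cut within $H_t$ differs from $|\partial_t S|$ by at most $(c+1)d \cdot |V_t \setminus H_t| = O(\log n)$, negligible compared to the corresponding volume $\geq d|S|$. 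For subsets $2 \leq |S| < \beta \log n$, I would reapply the core estimate with $k = (1-\alpha) d|S|/2$ and union-bound over sizes $s$: for $d$ sufficiently large the resulting probability is $n^{-\Omega(s)}$ uniformly in $s$, and summing over $s \geq 2$ remains polynomially small. Together these bound $|E_t[S]|$ for every $S \subseteq H_t$, so $G_t[H_t]$ is an $\alpha$-expander.

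I expect the main obstacle to be the uniform parameter choice in the small-set regime of part (b): the $\binom{n}{s}$ factor in the union bound is only absorbed once $d$ is large relative to the constant $220$ appearing in \cref{lem:multiple-requests-destination}, and one must verify that the very same choice of $d$ and $c$ simultaneously makes the queue bound of \cref{lem:size-queue}, the volume lower bound for moderate $|S|$, and the polylogarithmic overhead from $V_t \setminus H_t$ all compatible. Once these are fixed carefully, both the large-$S$ and small-$S$ union bounds close cleanly.
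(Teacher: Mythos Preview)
Your plan fails for large sets, and this is a genuine gap, not a parameter-tuning issue. The bound of \cref{lem:multiple-requests-destination} is $(220|P|/(n-1))^{|R|}$; once $|S|>(n-1)/220$ this quantity exceeds $1$ and the displayed inequality $\Prob{|E_t[S]|\ge k}\le\binom{|S|d}{k}(220|S|/(n-1))^k$ is vacuous. Your subsequent claim that the base of the union-bound sum is $O(1/n)$ ``uniformly in $s\le n/2$'' is therefore incorrect: at $s=n/2$ the base equals $2e\,(110e/\alpha')^{\alpha' d}$, which explodes as $d$ grows. There is also a structural reason no variant of this internal-edge argument can work at that scale: for $|S|\approx n/2$ the \emph{expected} number of outgoing requests from $S$ landing in $S$ is about $d|S|^2/n\approx d|S|/2$, so $|E_t[S]|\le \alpha' d|S|$ with $\alpha'<1/2$ will typically be false, not just hard to certify. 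The paper treats the range $n/2000\le|S|\le n/2$ by a completely different argument (\cref{lem:big-expansion}) that bypasses \cref{lem:multiple-requests-destination}: it looks only at the connections each vertex established at the moment it joined the graph, decouples those from the later history, and compares how many such first connections go from $S$ to $P^c$ (or vice versa) against the number of full vertices. You would need an analogue of this step.

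A secondary issue: your definition of $H_t$ does not give what you claim. If $v\in H_t$ then $v$'s out-neighbours have no pending requests, but that is only half of the membership criterion for $H_t$; their own out-neighbours may still be pending, so these vertices need not lie in $H_t$. Hence ``every $v\in H_t$ has out-degree exactly $d$ in $G_t[H_t]$'' is false as stated. The paper simply takes $H_t=V_t\setminus\tilde Q_t$ (vertices with no pending request) and, for small $S\subseteq H_t$, applies \cref{lem:multiple-requests-destination} with $P=S\cup T\cup \tilde Q_t$ rather than $P=S$, absorbing the $O(\log n)$ vertices of $\tilde Q_t$ into the target set; this avoids needing any out-degree guarantee inside $H_t$.
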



The proof of Claim~\ref{item:main-expansion:bigsets}  is given in the next two subsections:  the first one considers the vertex expansion of subsets of size in the range $\left[\beta \log n, \tfrac{n}{2000}\right]$, while the second one covers the remaining size range.\footnote{The factor $\tfrac{1}{2000}$ has been set in order to simplify some calculations: the optimization of this parameters is out of the scope of our analysis.}
In both cases, our analysis  will show a constant lower bound of $\varepsilon = \tfrac{1}{10}$ on the \textit{vertex expansion} of the considered vertex subsets. However, since the graph snapshots in 
$\BSDG(n,d,c)$ has  bounded maximum degree (i.e. $\leq (c+1)d)$, by definition of conductance (see \Cref{sec:prely}),  the latter will be at least  $\varepsilon ((c+1)d)^{-1} = \Omega(1)$. We recall that the \emph{vertex expansion} of the graph $G_t$ is defined as
\[h(G_t) = \min_{\substack{S \subseteq V_t:\\ |S| \leq \frac{n}{2}}}\frac{|\Gamma_t(S)|}{|S|}.\]

The proof of Claim~\ref{item:main-expansion:subset} of the main theorem above is provided in \Cref{ssec:subexp}, and it also consists of analyzing  the vertex-expansion of the considered subgraph.

\subsection{Expansion of small subsets}
\label{sec:small-expansion}

The goal of this section is to prove the following result.
\begin{lemma}[Expansion of small subsets]
There exist constants $c$ and $d$ sufficiently large such that, for all $n$ large enough, the following holds.
For any $t \geq 2n$ let $E_t$ be the event
    \[E_t = \left\{\min_{\substack{S\subseteq V_t\\ 2 \beta \log n\leq |S| \leq \tfrac{n}{2000}}} \frac{|\Gamma_t(S)|}{|S|} \geq \frac{1}{10}\right\} \]
    where $\beta = 100(cd)^2$. Then
    $$
    \Prob{E_t}\geq 1-n^{-2}.
    $$
\label{lem:small-expansion}
\end{lemma}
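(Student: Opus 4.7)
My plan is a union bound over subsets $S$ of each admissible size $s$ and over potential ``obstruction'' sets $P \subseteq V_t \setminus S$ with $|P| = \lfloor s/10 \rfloor$. The event $|\Gamma_t(S)| < s/10$ is equivalent to the existence of such a $P$ with $\Gamma_t(S) \subseteq P$; the latter in turn forces every outgoing request $r \in S \times [d]$ to satisfy $X_t(r) \in S \cup P \cup \{\emptyset\}$, where $X_t(r) = \emptyset$ codes for $r$ being pending at time $t$. I will leverage \Cref{lem:multiple-requests-destination} to bound this joint event and \Cref{lem:size-queue} to control how many such requests can be pending.

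Concretely, \Cref{lem:size-queue} guarantees that $\mathcal{G} := \{|Q_t| \leq \beta \log n\}$ holds with probability $\geq 1 - n^{-2}$, so it suffices to bound $\Prob{E_t^c \cap \mathcal{G}}$. On $\mathcal{G}$ the set $Q \subseteq S \times [d]$ of pending requests from $S$ has size at most $\beta \log n$; decomposing on $Q$ and applying \Cref{lem:multiple-requests-destination} to the remaining $sd - |Q|$ requests with target set $S \cup P$ (of size at most $11s/10$) yields
\begin{equation*}
\Prob{\Gamma_t(S) \subseteq P,\, \mathcal{G}} \,\leq\, \sum_{q=0}^{\beta \log n} \binom{sd}{q}\, p^{\,sd - q}, \qquad p := \tfrac{242\, s}{n-1}.
\end{equation*}
For $s \leq n/2000$ one has $p \leq 1/8$, and $\beta \log n$ lies well below the mode $(sd-p)/(1+p)$ of the sequence $q \mapsto \binom{sd}{q} p^{sd-q}$; hence the sum is dominated by its final term $\binom{sd}{\beta \log n}\, p^{\,sd - \beta \log n}$.

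It then remains to multiply this per-pair bound by $\binom{n}{s}\binom{n}{s/10}$, sum over $s$ in the admissible range, and verify that the total is at most $n^{-2}$. The main technical obstacle is to make the resulting log-scale inequality hold uniformly in $s$. At the small-$s$ extreme ($s \approx 2\beta\log n$) the union-bound factor is as large as $n^{\Theta(\log n)}$, but because $p \sim (\log n)/n$ the decay $p^{sd} = n^{-\Theta(sd)}$ is strong enough to beat it. At the large-$s$ extreme ($s \approx n/2000$), $\log(1/p)$ is only $O(1)$; however the combinatorial factors then grow only as $\exp(O(s))$, so the $p^{sd}$ decay still wins provided $d$ is a sufficiently large constant. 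Choosing $d$ above a fixed threshold and $c$ correspondingly large (so that $\beta = 100(cd)^2$ can absorb the $\binom{sd}{\beta\log n}$ term) makes the inequality hold uniformly in $s$, yielding $\Prob{E_t} \geq 1 - n^{-2}$.
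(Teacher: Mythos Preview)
Your proposal is correct and follows essentially the same route as the paper: bound $\Prob{E_t^c\cap\{|Q_t|\le\beta\log n\}}$ by a union bound over $S$, $P$, and the set of (non-)pending requests from $S$, then apply \Cref{lem:multiple-requests-destination} to the non-pending ones and finish with the standard combinatorial estimate. The paper skips your sum over $q$ by directly fixing $|R|=ds-\beta\log n$ (your dominant term); one minor quibble: the factor $\binom{sd}{\beta\log n}$ is absorbed by taking $d$ large (since $\binom{sd}{\beta\log n}\le 2^{sd}$ while $p^{sd}$ shrinks faster), not by enlarging $c$ and hence $\beta$, which would in fact make that binomial larger.
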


\begin{proof}


Firstly, observe that the complementary event of $E_t$ occurs if there exists a subset $T \subseteq V_t \setminus S$ such that $|T| = \lceil\frac{1}{10}|S|\rceil$ and $\Gamma_t(S) \subseteq S \cup T$, which implies that every request from vertices in $S$ has either destination in $S \cup T$, or it is pending. From now on, we will just suppose that  and $|S|/10$ is an integer, for simplicity.

Due to the dynamics of the graph and the bounded capacity of the vertices, any expansion result requires a large  number of accepted requests. More in detail,  we can ensure that, for $|S| \geq 2 \beta \log n$, most of its connection requests are accepted.  Indeed, consider the event
\[A =\{|Q_t| \leq \beta \log n\}, \]
then it holds that
\begin{align}
   \Prob{E_t^c} &\leq \Prob{ E_t^c \cap A} + \Prob{A^c} 
   \leq \Prob{E_t^c \cap A} + n^{-2}
   \label{eq:bound-small-subset-a}
\end{align}
where the last inequality follows from \cref{lem:size-queue}. 
From the previous remarks, and by a union bound on all possible choices of $S,T$, we can write
\begin{align}
    \Prob{E_t^c \cap A} \leq \sum_{\substack{S \subseteq V_t: \\ 2 \beta \log n \leq |S| \leq \frac{n}{2000}}} \sum_{\substack{T \subseteq V_t \setminus S: \\ |T| =  0.1|S|}} \Prob{\{\Gamma_t(S)\subseteq T\} \cap A},
\label{eq:thm:small-exp:1}
\end{align}
and we are left with estimating
$\Prob{ \{\Gamma_t(S) \subseteq T\} \cap A}$. Note that, if events $A$ and $\Gamma_t(S) \subseteq T$ hold and $|S| >2 \beta\log n$, there exists a subset of requests $R\subseteq S \times [d]$ with $|R| = d|S| - \beta \log n$ that are accepted with destination in $S \cup T$. Hence, 
\begin{align*}
\Prob{\{\Gamma_t(S)\subseteq T\} \cap A} &\leq \Prob{\exists R \subseteq S \times [d] \text{ s.t. } \cap_{r \in R} \{X_t(r) \in S \cup T\}}   
\\ & \leq \sum_{\substack{R\subseteq S \times [d]: \\ |R|= d|S| - \beta \log n}} \Prob{\cap_{r \in R}\{X_t(r) \in S \cup T\}}
\\ & \leq  \sum_{\substack{R\subseteq S \times [d]: \\ |R|= d|S| - \beta \log n}} \left(\frac{220 \cdot \left(1+\frac{1}{10}\right)|S|}{n-1}\right)^{|R|},
\end{align*}
where the last inequality follows from \cref{lem:multiple-requests-destination}.
Going back to \eqref{eq:thm:small-exp:1}, we have
\begin{align*}
    \Prob{E_t^c \cap A} & \leq \sum_{\substack{S \subseteq V_t: \\ 2 \beta \log n \leq |S| \leq \frac{n}{2000}}} \sum_{\substack{T \subseteq V_t \setminus S: \\ |T| = 0.1|S|}} 
    \sum_{\substack{R\subseteq S \times [d]: \\ |R|= d|S| - \beta \log n}} \left(\frac{242|S|}{n-1}\right)^{|R|}
    \\ 
    &= \sum_{s = 2\beta \log n}^{n/2000} \binom{n}{s}\binom{n-s}{\frac{1}{10}s}\binom{ds}{ds-\beta\log n} \left(\frac{242s}{n-1}\right)^{ds-\beta \log n}
    \\ & \stackrel{(I)}{\leq} \sum_{s = 2\beta \log n}^{n/2000} \left(\frac{30n}{s}\right)^{\frac{11}{10}s} \left(\frac{1500s}{n-1}\right)^{ds-\beta \log n}
    \\ & \stackrel{(II)}{\leq }\sum_{s = 2\beta \log n}^{n/2000} \left(\frac{30n}{s}\right)^{\frac{11}{10}s}\left(\frac{1500s}{n-1}\right)^{s(d-1)}
    \\ & \stackrel{(III)}{\leq} \sum_{s=2\beta \log n}^{n/2000} \left(\frac{1500s}{n-1}\right)^{s(d-43)}
    \\ & \leq \sum_{s=2\beta \log n}^{n/2000}\left(\frac{1}{2}\right)^{2\beta \log n(d-43)} \leq n^{-2},
\end{align*}
where in $(I)$ we used the fact that, for any $k \leq n$, $\binom{n}{k}\leq \left(\frac{n {\rm e}}{k}\right)^k$, in $(II)$ we used the fact that $ds - \beta \log n \geq s(d-1)$, and in $(III)$ we used the fact that $s \leq \tfrac{n}{2000}$. The last inequality follows by considering $d$ large enough. The lemma follows then from \eqref{eq:bound-small-subset-a}.
\end{proof}

\subsection{Expansion of large subsets}

\label{sec:large-expansion}

The goal of this section is to prove the following result.

\begin{lemma}[Expansion of big subsets]
There exist constants $c$ and $d$ sufficiently large such that, for all $n$ large enough, the following holds.
    For any $t \geq 2n$ let $E_t$ be the event
    \[E_t = \left\{\min_{\substack{S\subseteq V_t\\\frac{n}{2000}\leq |S| \leq \tfrac{n}{2}}} \frac{|\Gamma_t(S)|}{|S|} \geq \frac{1}{10}\right\}\,. \]
    Then
    $$
    \Prob{E_t}\geq 1-{\rm e}^{-n}\,.
    $$
    \label{lem:big-expansion}
\end{lemma}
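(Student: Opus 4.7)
The plan is to mirror the scheme of Lemma~\ref{lem:small-expansion}. I would first union-bound the bad event $E_t^c$ over all pairs $(S,T)$ with $S\subseteq V_t$, $|S|=s\in[n/2000,n/2]$, and $T\subseteq V_t\setminus S$, $|T|=\lceil s/10\rceil$, so that $E_t^c$ is contained in the union of the events $\{\Gamma_t(S)\subseteq T\}$. Next, I condition on $A=\{|Q_t|\leq \beta \log n\}$, which holds with probability at least $1-n^{-2}$ by Lemma~\ref{lem:size-queue}: on $A$, the inclusion $\Gamma_t(S)\subseteq T$ forces the existence of a set $R\subseteq S\times[d]$ with $|R|\geq ds-\beta\log n$ whose accepted destinations lie entirely in $P=S\cup T$, so that $|P|\leq 11s/10$. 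Applying Lemma~\ref{lem:multiple-requests-destination} to $R$ and $P$ then gives the per-pair bound
\[
\Prob{\{\Gamma_t(S)\subseteq T\}\cap A} \leq \Big(\tfrac{242\,s}{n-1}\Big)^{ds-\beta \log n}.
\]

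The main calculation would then be to sum this estimate weighted by $\binom{n}{s}\binom{n-s}{\lceil s/10\rceil}$ and to show that the resulting sum is $\leq e^{-n}$ for $c,d$ large enough. Using $\binom{n}{k}\leq(en/k)^k$ and writing $s=\theta n$, the $s$-th summand becomes of order $\exp\big\{n\big[\theta\ln(e/\theta)+\tfrac{\theta}{10}\ln(10e(1-\theta)/\theta)+d\theta\ln(242\theta)\big]\big\}$. Taking $d$ large makes this $\leq e^{-\Omega(n)}$ precisely when $242\theta<1$, i.e.\ when $s\lesssim n/242$, and in this regime the geometric decay in $s$ yields the desired $e^{-n}$ bound on the partial sum, with spare room to absorb the contribution $n^{-2}$ coming from $A^c$.

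The main obstacle is therefore the intermediate sub-range $s\in[n/242,n/2]$, where $242\,s/(n-1)\geq 1$ and Lemma~\ref{lem:multiple-requests-destination} degenerates: its constant $220$ captures the aggregated worst-case bound $215/n+e^{-(s_\ell-t_\ell)}$ summed over the $\Theta(n)$ rounds of a request's life, and is too loose once $|P|$ is a constant fraction of $n$, even though the single-round marginal of a uniform target is only $1/(n-1)$. To cover this sub-range I would replace the black-box use of Lemma~\ref{lem:multiple-requests-destination} by a finer layer-wise analysis: partition the edges of $E_t$ by the round $s\leq t$ at which they were last accepted, observe that, conditionally on the history up to round $s-1$, all requests then pending or newly issued sample their destinations independently and uniformly in $V_s$, and apply a Chernoff-type bound on the number of such requests from $S$ falling in $V_t\setminus(S\cup T)$. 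Controlling how many layers contribute non-trivially via Lemmas~\ref{lem:size-queue} and~\ref{lem:number-times-pending} (so that only $O(\log n)$ pending requests per round and per vertex need to be accounted for), summing the layer-wise estimates and plugging the refined per-pair bound back into the union-bound calculation should drive the total failure probability below $e^{-n}$.
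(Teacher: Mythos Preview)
Your diagnosis is correct: Lemma~\ref{lem:multiple-requests-destination} is too lossy once $s \gtrsim n/242$. But your proposed layer-wise fix has a real difficulty you do not address: conditioning on a request being \emph{last} accepted at round $s$ biases its destination (hitting a young target makes $s$ more likely to be the last acceptance), so within a layer the targets are no longer uniform, and disentangling this would essentially require re-proving a sharpened version of Lemma~\ref{lem:multiple-requests-destination}. Invoking Lemmas~\ref{lem:size-queue} and~\ref{lem:number-times-pending} does not help here, since the loss is in the per-request marginal, not in the number of pending requests.

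The paper bypasses both Lemma~\ref{lem:multiple-requests-destination} and the queue-size conditioning entirely by looking only at the \emph{birth round} of each request. The key observation is persistence: if a request $r$ from $a \in S$ connects at its entry time $t_a$ to some vertex $v \in P^c := V_t \setminus (S\cup T)$, then since $v \in V_t$ is still alive at time $t$, that connection has never been broken, so $X_t(r)=v \in P^c$. Hence $\{X_t(r) \notin P^c\}$ is contained in the event that $r$ failed to connect into $\mathcal{O}_a := \{b \in P^c : b \text{ older than } a\}$ at time $t_a$. At the birth round the target is genuinely uniform and independent of the past; after discarding at most $n/(c-1)$ near-full vertices, telescoping over $S\times[d]$ in age order gives $\Prob{\Gamma_t(S)\subseteq T} \leq \exp\big(-\tfrac{d}{n-1}\sum_{a\in S} (|\mathcal{O}_a| - \tfrac{n}{c-1})\big)$. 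An age-symmetry argument finishes the job: either at least half the pairs in $S\times P^c$ have the $P^c$-vertex older, giving $\sum_a |\mathcal{O}_a| \geq |S|\cdot|P^c|/2$ and hence a per-pair bound $e^{-d|S|/7}$, or one swaps roles and bounds the requests from $P^c$ into $S$ instead. Either way the per-pair bound is $e^{-\Omega(d)s}$ with no $220$-type constant, and the union bound over $\binom{n}{s}\binom{n}{s/10}$ succeeds uniformly on the whole range $[n/2000, n/2]$.
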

\begin{proof}
Fix any subsets $S \subseteq V_t$ of size $\frac{n}{2000} \leq |S| \leq \frac{n}{2}$ and $T \subseteq V_t \setminus S$ such that $|T| = \lceil\frac{1}{10}|S|\rceil$ (from now on we will just suppose that  $n/2000$ and $|S|/10$ are integers, for simplicity). Taking $P = S \cup T$ and $P^c = V_t \setminus P$, we have that
\begin{align}\label{gatto}
        \Prob{\Gamma_t(S) \subseteq T} 
        = \Prob{\cap _{r\in S\times [d]}\,\{X_t(r)\not\in P^c\}}\,.
\end{align}
We note that for each $r\in S\times [d]$ it holds
\begin{align}\label{gatto1}
    \{X_t(r) \not\in P^c\} \subseteq F_{r}(P^c)
\end{align}
where, calling $t_r\leq t$ the round when the request $r$ joined the graph, for any $A \subseteq V_t$
    \[
        F_{r}(A) = \{r \text{ did not establish a connection with a vertex in $A$ when it joined the graph}\}\,.
    \]
Indeed, if request $r$ established a connection with some vertex of $P^c$ when it entered the graph at time $t_r$, then it would still be connected to $P^c$ at time $t\geq t_r$. 
Note that it is possible that not all vertices of $P^c$ were already in the graph at time $t_r$. 

For every vertex $a \in S$, consider now 
\begin{equation*}
    \mathcal{O}_a = \{b \in P^c \mid \age_t(a) < \age_t(b)\}\,
\end{equation*}
the subset of vertices in $P^c\subseteq V_t$ that were in the graph when $a$ joined it. Clearly, $F_{r}(P^c)=F_{r}(\mathcal{O}_{a(r)})$ if $r$ is a request from vertex $a(r)$. In the rest of the proof we will abbreviate $a(r)$ as $a$.  
Then, from \eqref{gatto} and \eqref{gatto1} we have
\begin{align}\label{gatto2}
    \Prob{\Gamma_t(S) \subseteq T}
    \leq \Prob{\cap_{r \in S\times [d]} F_{r}(\mathcal{O}_a)}\,.
\end{align}

Let $k =|S|$ and $\{a_1,\dots,a_k\}$ be an age-based ordering of the vertices in $S$ from the oldest to the youngest, so that $t_1 < \dots < t_k$. We will analyze the r.h.s.~of \eqref{gatto2} by subsequentially conditioning on the events involving older vertices. We start by writing 
\begin{align}
    \Prob{\cap_{r \in S\times [d]} F_{r}(\mathcal{O}_a)} = \Prob{\cap_{j=1}^d F_{(a_k,j)}(\mathcal{O}_k) \big| \cap_{i=1}^{k-1} \cap_{j=1}^d F_{(a_i,j)}(\mathcal{O}_i)}\Prob{\cap_{i=1}^{k-1} \cap_{j=1}^d F_{(a_i,j)}(\mathcal{O}_i)}
    \label{large-exp:4}
\end{align}
where we abbreviated  $\mathcal{O}_{a_i}$ as $\mathcal{O}_i$ to ease the notation. Let us focus on the conditional probability in the last expression.
Recall that any fixed $r \in \{a_k\}\times[d]$ may fail to establish a connection with $\mathcal{O}_k$ at time $t_r$ for two reasons: either because it targets a vertex outside of $\mathcal{O}_k$, or because it receives a rejection from the target vertex in $\mathcal{O}_k$. The first event occurs with probability \[\frac{n-1-|\mathcal{O}_k|}{n-1}\] since the targets are chosen uniformly at random independently from the past. The second event happens if the targeted vertex is full at time $t_r$, or if the vertex targeted by $r$ is also targeted by too many other requests in $Q_{t_r}$. As we are interested in the rejection of the $d$ requests $\{(a_k,j), \; j\in [d]\}$, by the principle of deferred decision we can assume that all $r' \in Q_{t_r}\setminus\{(a_k,j), \; j\in [d]\}$ are sent before $\{(a_k,j), \; j\in [d]\}$. Now, if any $r \in \{(a_k,j), \; j\in [d]\}$ targets a vertex that has an in-degree of at most $(c-1)d$ \emph{after all other requests in the queue are sent}, the attempt will certainly be accepted, independently from the other $r' \in \{(a_k,i), \; i\in [d]\} \setminus \{r\}$ and from what happened in the past. Therefore, if we call $\tilde B_{t_k}$ the set of vertices with load at least $(c-1)d$, at time $t_k$ and \emph{after all other  requests in the queue are sent}, the probability of $r$ being rejected is at most
\[\frac{|\mathcal{O}_k \cap \tilde B_{t_k}|}{n-1}\,.\]
Thus, we can conclude that
\begin{align*}
    \Prob{\cap_{j=1}^d F_{(a_k,j)}(\mathcal{O}_k) \mid \cap_{i=1}^{k-1} \cap_{j=1}^d F_{(a_i,j)}(\mathcal{O}_i)} &\leq 
    \left(\frac{n-1-|\mathcal{O}_k|}{n-1} + \frac{|\mathcal{O}_k \cap \tilde B_{t_k}|}{n-1}\right)^d 
    \\
    &=\left(1-\frac{|\mathcal{O}_k \cap \tilde B_{t_k}^c|}{n-1}\right)^d.
\end{align*}
The same argument can be iteratively applied to $\Prob{\cap_{i=1}^k \cap_{j=1}^d F_{(a_i,j)}(\mathcal{O}_i)}$, isolating $d$ requests per iteration, and it leads to
\begin{align}
    \Prob{\cap_{r \in S\times [d]} F_r(\mathcal{O}_a)} &\leq \prod_{i=1}^k \left(1-\frac{|\mathcal{O}_i \cap \tilde B_{t_i}^c|}{n-1}\right)^d 
    \overset{(I)}{\leq} 
    \exp{\left(-\frac{d}{n-1}\sum_{i = 1}^k\left|\mathcal{O}_i \cap \tilde B_{t_i}^c\right|\right)} \label{large-exp:5}
\end{align}
where inequality $(I)$ follows since $1+x \leq {\rm e}^x$.

Now, if we look at the set of possible pairs $(a,b) \in S \times P^c$, two cases may arise:
    \begin{enumerate}[(i)]
        \item $ \big|\{(a,b) \in S \times P^c \mid \age_t(a) < \age_t(b)\}\big| \geq  \frac{|S| \cdot |P^c|}{2}$, \label{case:older-nodes}
        \item $ \big|\{(a,b) \in S \times P^c \mid \age_t(a) > \age_t(b)\}\big| \geq  \frac{|S| \cdot |P^c|}{2}$. \label{case:younger-nodes}
    \end{enumerate}

If case \cref{case:older-nodes} holds, then
\begin{align*}
    \sum_{a \in S} |\mathcal{O}_a| &= \sum_{a \in S} |\mathcal{O}_a\cap \tilde B_{t_a}| + |\mathcal{O}_a \cap \tilde B_{t_a}^c| \geq \frac{|S| \cdot |P^c|}{2}
\end{align*}
which implies that
\begin{align*}
    \frac{d}{n-1}\sum_{a \in S} |\mathcal{O}_a \cap \tilde B_{t_a}^c| \geq \frac{d}{n-1}\left(\frac{|S| \cdot |P^c|}{2} - \sum_{a \in S} |\mathcal{O}_a \cap \tilde B_{t_a}|\right).
\end{align*}
Using the same argument of \cref{claim:bound-overloaded-nodes}, it can be shown that $|\tilde B_{t_a}| \leq \frac{n}{c-1}$, yielding
\begin{align}
    \frac{d}{n-1}\sum_{a \in S} |\mathcal{O}_a \cap \tilde B_{t_a}^c| \geq \frac{d}{n-1}\left(\frac{|S|\cdot|P^c|}{2} - \frac{n}{c-1}|S|\right) \overset{(I)}{\geq} \frac{d}{7}|S|
    \label{large-exp:6}
\end{align}
where in $(I)$ we used that $|S| \leq \frac{n}{2}$, that $|P^c| = n- \frac{11}{10}|S|$  and that we can take for example $c \geq 16$ as in Lemma \ref{lem:small-expansion}. By plugging \eqref{large-exp:6} in \eqref{large-exp:5}, we obtain
\begin{equation}
\Prob{\Gamma_t(S) \subseteq T} \leq \exp{\left(-\frac{d}{n-1}\sum_{i = 1}^k\left|\mathcal{O}_i \cap \tilde B_{t_a}^c\right|\right)} \leq \exp{\left(-\frac{d}{7}|S|\right)}.
\end{equation}

Then, analogously to what has been done in the proof of \cref{lem:small-expansion}, a union bound on all possible choices of $S,T$ leads us to
\begin{align*}
    \Prob{E_t}
    &\leq \sum_{s=n/2000}^{n/2} \binom{n}{s}\binom{n-s}{\frac{1}{10}s}{\rm e}^{-\frac{d}{7}s} \notag
    \\
    &\overset{(I)}{\leq} \sum_{s = n/2000}^{n/2} \left(\frac{n {\rm e}}{\frac{1}{10}s}\right)^{\frac{11}{10}s}{\rm e}^{-\frac{d}{7}s}
    \\
    &\overset{(II)}{\leq} \sum_{s = n/2000}^{n/2} {\rm e}^{\big(13-\frac{d}{7}\big)s}
    \\ & \leq \frac{n}{2}{\rm e}^{(13-\frac{d}{7})\frac{n}{2000}},
    \end{align*}
    where $(I)$ is since $ \binom{n}{k} \leq \big(\frac{n {\rm e}}{k}\big)^k$, while in $(II)$ we used that $s \geq \frac{n}{2000}$. 
    By taking $d$ sufficiently large, one obtains $\Prob{E_t}\leq {\rm e}^{-n}$.
    The proof can be completed when case \cref{case:younger-nodes} holds with the same argument, by considering the requests sent from $P^c$ to $S$. 
\end{proof}

\subsection{On the existence of an expander subgraph}
\label{ssec:subexp}

This subsection is devoted to the proof of the following lemma, which immediately implies Claim \ref{item:main-expansion:subset} of \cref{thm:main-expansion}.

\begin{lemma} \label{lem:expsubgraph}
There exist constants $c$ and $d$ sufficiently large such that, for all $n$ large enough, the following holds.
    A constant $\beta = \beta(c,d)>0$ exists such that the snapshot $G_t$ of $\BSDG(n,d,c)$ for any  $t \geq 2n$ verifies the following property. A subset $H_t \subseteq V_t$ with $|H_t| \geq n-\beta \log n$ exists such that the induced subgraph $G_t[H_t]$ has vertex expansion at least $\frac{1}{20}$, w.h.p.
\end{lemma}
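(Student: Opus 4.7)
The plan is to construct $H_t$ by a greedy removal procedure that ejects any small subset of $V_t$ with poor induced expansion, and then to bound the number of removed vertices via a double-counting contradiction against the global expansion estimates provided by \cref{lem:small-expansion,lem:big-expansion}. Fix a constant $K = 40\beta$ (with $\beta = 100(cd)^2$ as in \cref{lem:small-expansion}) and define iteratively $H_t^{(0)} = V_t$; at iteration $j$ look for a set $S_j \subseteq H_t^{(j-1)}$ with $1 \leq |S_j| \leq K \log n$ and $|\Gamma_t(S_j) \cap H_t^{(j-1)}| < |S_j|/20$. If such a set exists, remove it by setting $H_t^{(j)} = H_t^{(j-1)} \setminus S_j$; otherwise terminate, outputting $H_t$ and $B_t = V_t \setminus H_t = \bigcup_k S_k$.

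The central step is showing that the procedure terminates with $|B_t| < 2\beta \log n$ w.h.p. I would argue this by contradiction: consider the smallest $j$ for which $|B_t^{(j)}| \geq 2\beta \log n$. Since each step adds at most $K \log n$ vertices, $|B_t^{(j)}|$ lies in the range $[2\beta \log n, (2\beta + K)\log n] \subseteq [2\beta \log n, n/2000]$ for $n$ large, so conditioning on the high-probability event of \cref{lem:small-expansion} one has $|\Gamma_t(B_t^{(j)})| \geq |B_t^{(j)}|/10$. On the other hand, any $v \in \Gamma_t(B_t^{(j)})$ is adjacent to some $u \in S_k$ with $k \leq j$; since $v \notin B_t^{(j)} \supseteq B_t^{(k-1)}$, we also have $v \in H_t^{(k-1)}$, hence $v \in \Gamma_t(S_k) \cap H_t^{(k-1)}$. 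Subadditivity of cardinality then gives
\[
|\Gamma_t(B_t^{(j)})| \,\leq\, \sum_{k \leq j} |\Gamma_t(S_k) \cap H_t^{(k-1)}| \,<\, \sum_{k \leq j} \tfrac{|S_k|}{20} \,=\, \tfrac{|B_t^{(j)}|}{20},
\]
contradicting the lower bound $|B_t^{(j)}|/10$ and forcing $|H_t| \geq n - 2\beta \log n$.

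Once $|H_t| \geq n - 2\beta\log n$ is in hand, the vertex expansion of $G_t[H_t]$ follows by a direct case split on $|S|$, for $S \subseteq H_t$ with $1 \leq |S| \leq |H_t|/2 \leq n/2$. If $|S| \leq K \log n$, termination of the removal procedure directly gives $|\Gamma_t(S) \cap H_t| \geq |S|/20$. If $|S| > K \log n = 40\beta \log n$, either \cref{lem:small-expansion} (for $|S| \leq n/2000$) or \cref{lem:big-expansion} (for $|S| \geq n/2000$) yields $|\Gamma_t(S)| \geq |S|/10$, and subtracting $|\Gamma_t(S) \cap B_t| \leq |B_t| \leq 2\beta\log n \leq |S|/20$ gives $|\Gamma_t(S) \cap H_t| \geq |S|/20$. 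The overall failure probability is then the union bound of the failure probabilities of \cref{lem:small-expansion,lem:big-expansion}.

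The main obstacle I anticipate is calibrating the removal threshold and scale: the threshold $|S|/20$ must be strictly weaker than the $|S|/10$ delivered by the expansion lemmas (to make the double-counting close with a factor of $2$ of slack), while the scale $K = 40\beta$ must be large enough that, once $|S|$ exceeds $K\log n$, the at-most-$2\beta\log n$ boundary loss to $B_t$ is dominated by $|S|/20$. Once these constants are properly set, the rest of the argument is essentially deterministic given the two expansion inputs.
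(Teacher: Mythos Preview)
Your proof is correct and takes a genuinely different route from the paper. The paper constructs $H_t$ \emph{explicitly} as the set of vertices with no pending request at time $t$ (so that $|V_t\setminus H_t|\le |Q_t|\le \beta\log n$ on the event of \cref{lem:size-queue}), and then re-invokes the key probabilistic \cref{lem:multiple-requests-destination} with a fresh union-bound computation to control small subsets of $H_t$: if $S\subseteq H_t$ has $|S|\le 20\beta\log n$ and small induced boundary, then all $d|S|$ out-requests of $S$ land in $S\cup T\cup\tilde Q_t$ with $|T|=|S|/10$, and the combinatorial weight of such configurations is summable. Your argument, by contrast, is entirely deterministic once the high-probability events of \cref{lem:small-expansion,lem:big-expansion} are fixed: the greedy peeling removes badly-expanding small sets, and the disjoint-union double-count shows the peeled part cannot reach size $2\beta\log n$ without contradicting \cref{lem:small-expansion}. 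What your approach buys is economy---no further probabilistic input is needed beyond the two expansion lemmas, and the extraction works for any graph satisfying their conclusions; what the paper's approach buys is a concrete description of $H_t$ (vertices whose $d$ out-links are all active), at the cost of one more union-bound calculation through \cref{lem:multiple-requests-destination}.
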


\begin{proof}
    Fix $\beta = 100(cd)^2$.
    To prove the lemma, we   show that the following event holds w.h.p.
    \[E = \left\{\exists H_t \text{ with $|H_t|\geq n-\beta \log n$ s.t. $\min_{\substack{S\subseteq H_t: \\ |S| \leq n/2}} \frac{|\Gamma_t(S) \cap H_t|}{|S|} \geq \frac{1}{20}$}\right\}.\]

    Notice that, in \cref{lem:small-expansion} and \cref{lem:big-expansion}, we proved that all the sets $S\subseteq V_t$ with size at least $\beta \log n$ have vertex expansion at least $\frac{1}{10}$ w.h.p. 
    Therefore, to show that $E$ holds w.h.p., we need to prove that there exists a subset $H_t \subseteq V_t$ such that, all the sets $S \subseteq H_t$ with $S \leq 20\beta \log n$ have also vertex expansion at least $\frac{1}{20}$.
    Indeed, the fact that the subsets $|S| \geq \beta \log n$ have expansion at least $1/10$ implies directly that the event $E$ holds for such subsets, since, for each $S\subseteq H_t$ such that $|S| \geq 20 \beta \log n$, we have
    \[\frac{|\Gamma_t(S) \cap H_t|}{|S|} \geq \frac{|\Gamma_t(S)|}{|S|} - \frac{\beta \log n}{|S|}\geq \frac{1}{10}-\frac{1}{20} = \frac{1}{20}.\]
    
    In particular, we will take $H_t$ as all the set of nodes without pending requests at time $t$.
    More formally, we have that $E_1 \cap E_2 \cap E_3 \subseteq E$, where $E_1$ and $E_2$ are the events defined in the \cref{lem:small-expansion} and \cref{lem:big-expansion}, and
    \[E_3 = \left\{\exists H_t \text{ with $|H_t| \geq n - \beta \log n$ s.t. $\min_{\substack{S\subseteq H_t:\\ |S| \leq 20\beta \log n}} \frac{|\Gamma_t(S) \cap H_t|}{|S|} \geq \frac{1}{20}$}\right\}.\]
    From \cref{lem:small-expansion} and \cref{lem:big-expansion}, we have that $\Prob{E_1^c \cup E_2^c} \leq 4n^{-2}$. In what follows, we will show that $\Prob{E_3^c}\leq 2n^{-2}$.
    
   Notice that, from \cref{lem:size-queue}, we have that, if $A = \{|Q_t| \leq \beta \log n\}$, it holds that
   \begin{align}
       \Prob{E_3^c}  \leq \Prob{E_3^c \cap A} + \Prob{A^c} \leq \Prob{E_3^c\cap A} + n^{-2}.       
       \label{eq:bound_F}
   \end{align}
   If we define $\Tilde{Q}_t$ as the set of nodes $v \in V_t$ with at least one pending request, we have that $|\Tilde{Q}_t| \leq |Q_t|$ and that (taking $H_t = V_t \setminus \Tilde{Q}_t$)
   \begin{align*}
  E_3^c \cap A &\subseteq \left\{\text{$\exists S \subseteq V_t\setminus \Tilde{Q}_t$ s.t. $|S|\leq 20\beta \log n$, $|\Gamma_t(S) \setminus \Tilde{Q}_t|\leq \tfrac{1}{10}|S|$}\right\}
  \\ & \subseteq \{\text{$\exists S\subseteq V_t\setminus \Tilde{Q}_t$, $\exists T \subseteq V_t \setminus S$ s.t. $|S| \leq 20\beta \log n$, $|T| = \tfrac{1}{10}|S|$, $\Gamma_t(S) \subseteq T\cup \Tilde{Q}_t $}\}.
   \end{align*}
   Therefore, it holds that
   \begin{align*}
       \Prob{E_3^c \cap A} &\leq \sum_{\substack{S \subseteq V_t: \\ |S| \leq 20\beta \log n}}\sum_{\substack{T \subseteq V_t \setminus S:\\ |T| = \frac{1}{10} s}}\Prob{\Gamma_t(S) \subseteq T\cup \Tilde{Q}_t, S \cap \Tilde{Q}_t = \emptyset}
       \\ & = \sum_{\substack{S \subseteq V_t: \\ |S| \leq 20 \beta \log n}}\sum_{\substack{T \subseteq V_t \setminus S:\\ |T| = \frac{1}{10}s}}\Prob{\cap_{r \in S \times [d]}\{X_t(r)\in S \cup T \cup \Tilde{Q}_t\}}
       \\ & \stackrel{(I)}{\leq } \sum_{\substack{S \subseteq V_t: \\ |S| \leq 20\beta \log n}}\sum_{\substack{T \subseteq V_t \setminus S:\\ |T| = \frac{1}{10} s}} \left(\frac{220(|S| +\frac{1}{10}|S|+|\Tilde{Q}_t|)}{n}\right)^{d|S|}
\\ & \stackrel{(II)}{\leq} \sum_{s=1}^{20\beta \log n}\binom{n}{s}\binom{n-s}{\frac{1}{10}s} \left(\frac{242s+220\beta \log n}{n-1}\right)^{ds}
\\ & \stackrel{(III)}{\leq} \sum_{s=1}^{20\beta \log n}\left(\frac{30n}{s}\right)^{\frac{11}{10}s}\left(\frac{242s+220\beta \log n}{n-1}\right)^{ds}
\\ & \stackrel{(IV)}{\leq} \sum_{s=1}^{\beta \log n} \left(\frac{5060\beta \log n}{n-1}\right)^{(d-2)s}
\\ & \leq 2 \left(\frac{5060 \beta \log n}{n}\right)^{d-2}  \\ &\stackrel{(V)}{\leq}  n^{-2},
   \end{align*}
   where $(I)$ follows from \cref{lem:multiple-requests-destination}, $(II)$ from the fact that we are looking at $E_3^c \cap A$, hence $|\Tilde{Q}_t|\leq \beta \log n$, $(III)$ from the fact that, for any $k \leq n$, $\binom{n}{k}\leq \left(\frac{n {\rm e}}{k}\right)^k$, $(IV)$ from the fact that $s \leq 20\beta \log n$, and $(V)$ for $d$ large enough.

From \eqref{eq:bound_F}, since $\Prob{E_1^c \cup E_2^c} \leq 4n^{-2}$, and since $E_1\cap E_2 \cap E_3 \subseteq E$, it follows that 
\[\Prob{E^c} \leq \Prob{E_1^c \cup E_2^c} + \Prob{E_3^c} \leq 6n^{-2},\]
proving the lemma.
\end{proof}

\section{On the Convergence Time of \push\ and \pull }
\label{sec:pushpull}

\subsection{Rumor spreading on the \texorpdfstring{$\BSDG$}{BSG}\ model}
We     shortly recall   how    \push\   and   \pull\  \cite{demers1987epidemic} can be defined  on the $\BSDG$\ model. Such simple, local mechanisms are used to perform efficient broadcast operations over communication networks.

Given a connected  graph $G=(V,E)$ and a \textit{source} vertex $s \in V$, the goal is to inform all vertices about a piece of information that only $s$ initially knows. The synchronous, uniform \push\ protocol works as follows. At round $t=0$, the source selects one neighbor $v$ uniformly at random  and sends the message to it: we say that $v$ is \textit{informed} at (the end of) round $t$. Then, at every round $t \geq 1$, each informed vertex selects one random neighbors and sends the message to it. 
In the \pull\ protocol, each  node $u$,  which is  still not informed at (the beginning of) round $t$, selects one random neighbor $v$ and, if $v$ is informed, then $u$ pulls the source message from $v$ and gets informed. The \push -\pull~protocol is defined by considering both the \push\ and  \pull\ actions performed by each vertex, at every round.

In order to combine 
of the protocols  described above  with the process generated by the $\BSDG$\ model, we organize each  synchronous round $t \geq 1$  in two consecutive \textit{phases}. In the first, \textit{topology}  phase, all the actions of the $\BSDG$\ process described in \Cref{def:streaming-node-churn} and \Cref{def:edge-process} take places: this generates the snapshot $G_t$. Then, in the second \textit{rumor-spreading} phase, the local rule  of   \pull\ and/or    \push\   are applied by every vertex  in $V_t$ in parallel on $G_t$.

The aim of this section is to show that, in  the  $\BSDG$\ model,  such protocols completes  the   broadcast operation, starting from a new   source vertex,   within $O(\log{n})$ rounds, w.h.p. 

\begin{theorem} \label{thm:gossip} 
There exist constants $c$ and $d$ sufficiently large such that, for all $n$ large enough, the following holds. Let  $s$ be a \textit{source} node joining the  $\BSDG(n,d,c)$  dynamic graph at  some round $t_s \geq 2n$.   Then, after $T= O(\log n)$ rounds, the \push~or the \pull~protocol inform $n-O(\log n)$ vertices in $G_{T+t_s}$, w.h.p.
\end{theorem}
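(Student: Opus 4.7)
I split the proof according to the two structural guarantees in \Cref{thm:main-expansion}. Let $I_t$ be the set of informed vertices at round $t \geq t_s$ with $I_{t_s}=\{s\}$. The goal is to show that $|I_{t_s+T}| \geq n-O(\log n)$ for some $T = O(\log n)$, w.h.p. Using the threshold $k_0=\beta\log n$ of Claim (b) of \Cref{thm:main-expansion}, the proof splits into a \emph{bootstrap} phase that brings $|I_t|$ from $1$ up to $k_0$, and a \emph{main} phase in which Claim (b) yields uniform constant conductance at every snapshot and the classical analysis of \cite{chierich_rumspread} for \push/\pull\ on bounded-degree expanders can be transcribed to our dynamic setting.

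\textbf{Bootstrap.} The difficulty here is that Claim (b) offers no guarantee below scale $\beta \log n$. The plan is to work inside the subgraph $H$ of $V_{t_s}$ induced by the vertices of age $\leq n/2$, which contains $s$ and remains entirely alive for at least $n/2 \gg \log^2 n$ rounds, so edges formed inside $H$ persist for the entire bootstrap window. Using \Cref{lem:number-times-pending} and \Cref{lem:size-queue}, each request of $s$ is pending only $O(1)$ rounds in expectation and $O(\log n)$ w.h.p., so $s$ acquires its $d$ outgoing neighbours quickly. Tracking the BFS ball $J_t \subseteq H$ around $s$, I would apply \Cref{lem:multiple-requests-destination} with $R = J_t \times [d]$ and $P = J_t$ to show that only an $O(|J_t|/n)$ fraction of the $\Theta(|J_t|\,d)$ outgoing requests from $J_t$ fall back into $J_t$ itself; hence $\Omega(|J_t|\,d)$ of them reach fresh vertices of $H$ and $|J_t|$ essentially doubles every constant number of rounds. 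A Chernoff-type concentration, in the spirit of \Cref{lem:decreasing-queue}, lifts this in-expectation growth to a w.h.p.~statement. Running \push\ (resp.~\pull) in parallel, each new edge of $J_t$ is traversed with constant probability per round, so $|I_t \cap H|$ tracks $|J_t|$ up to a constant factor and reaches $k_0$ within some $\tau_1 = O(\log n)$ rounds, w.h.p.

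\textbf{Main phase.} Once $|I_t| \geq k_0$ and $|I_t| \leq n/2$, Claim (b) provides conductance $\Omega(1)$ for $I_t$; combined with the deterministic degree bound $(c+1)d = O(1)$ this produces $\Omega(|I_t|)$ boundary edges. For \push , every informed vertex hits any prescribed incident edge with probability $\Omega(1)$, so $\Expc{|I_{t+1}|-|I_t| \mid G_t, I_t} = \Omega(|I_t|)$; a bounded-differences argument (again modelled on the proof of \Cref{lem:decreasing-queue}) concentrates $|I_{t+1}|$ around its mean, pushing $|I_t|$ past $n/2$ in $O(\log n)$ additional rounds. A symmetric \pull\ analysis, now applying Claim (b) to the uninformed set, brings $|I_t|$ up to $|V_t|-O(\log n)$: at that stage Claim (a) ensures that the informed vertices cover the expander component of size $n-O(\log n)$ guaranteed by the theorem, which is what is required. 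A union bound over the $O(\log n)$ rounds gives the overall $T = O(\log n)$ bound w.h.p., and allowing the snapshot to change from round to round is handled via Claim (b) applied to the relevant subset at each time slice.

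\textbf{Main obstacle.} The delicate step is the bootstrap: at sub-$\log n$ scale there is no \emph{a priori} expansion, and the source may land adjacent to full vertices that reject its requests for several rounds. Confining the growth to the stable young subgraph $H$, controlling rejections via \Cref{lem:number-times-pending}, and ruling out self-overlap of the BFS ball via \Cref{lem:multiple-requests-destination} are the three ingredients that should make the $O(\log n)$ bootstrap go through. Once this hurdle is cleared, the main phase reduces to a fairly standard adaptation of the classical bounded-degree expander rumor-spreading analysis, using the per-snapshot conductance of Claim (b) as a drop-in replacement for the static spectral gap.
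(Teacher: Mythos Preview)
Your main phase is essentially the paper's: once $|I_t|\geq\beta\log n$, the per-snapshot conductance guarantee of \Cref{thm:main-expansion} together with the degree bound $(c+1)d$ lets one transcribe the analysis of \cite{chierich_rumspread} round by round, and this part is fine.

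The bootstrap, however, has a genuine gap, and the paper handles it quite differently. Your stable set $H=\{v\in V_{t_s}:\age_{t_s}(v)\leq n/2\}$ discards half of $V_{t_s}$. Each of the source's $d$ outgoing requests therefore lands in $H$ with probability only about $1/2$, so with probability roughly $(1/2)^d$ --- a constant for fixed $d$, not $n^{-\Omega(1)}$ --- all of them anchor to vertices of age larger than $n/2$. On that event $s$ has no neighbour inside $H$, the BFS ball $J_t\subseteq H$ is stuck at $\{s\}$, and it remains there until one of those targets dies and the request is relaunched, which with constant probability takes $\Omega(n)$ further rounds. More broadly, at every scale $|J_t|=O(1)$ no Chernoff-type inequality converts in-expectation doubling into a w.h.p.\ statement, so the line ``a Chernoff-type concentration \dots\ lifts this in-expectation growth to a w.h.p.\ statement'' cannot be carried out.

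The paper never attempts a direct growth argument below the $\log n$ threshold. It takes the expander subgraph $H_{t_s}$ of size $n-O(\log n)$ from \Cref{lem:expsubgraph} and removes only $\mathrm{OLD}=\{v:\age_{t_s}(v)>n-\log^2 n\}$, a set of size $O(\log^2 n)$. Because $H_{t_s}$ is connected with bounded degree, removing $O(\log^2 n)$ vertices leaves at most $O(\log^3 n)$ vertices in components smaller than $\beta\log n$, and every surviving component is edge-stable for $\Theta(\log^2 n)$ rounds. The source's first accepted request then lands outside every large component only with probability $O(\polylog(n)/n)$, so w.h.p.\ $s$ is attached to a stable component of size at least $\beta\log n$ the moment it first connects --- no small-scale concentration is needed. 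The decisive difference from your approach is the size of the excluded set: $O(\log^2 n)$ versus $n/2$.
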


\subsection{Proof of \texorpdfstring{\cref{thm:gossip}}{Theorem 15}}

  \paragraph{Rumor spreading on static graphs: Previous results. }
 Our proof makes use of      the following important result  and its proof argument (see Theorem 12 in  \cite{CGLP18}) that bounds the completion time of rumor spreading protocols over static graphs of bounded degree. Below,  we  recall its statement and provide a short overview of its proof argument.

  \begin{theorem} [\cite{CGLP18}] \label{thm:rumor-conductance}
       Let $G=(V,E)$ be a connected $n$-vertices graph with conductance $\phi$ and such that, for any edge $\{u,v\} \in E$, $\grad(u)/\grad(v) = \Theta(1)$. Then, $O(\log n / \phi )$ rounds of \push\ or \pull\  suffice to spread to all nodes of $G$ a message originated at an arbitrary source node, w.h.p.
  \end{theorem}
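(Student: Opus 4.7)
My plan is to adapt the classical conductance-based analysis of rumor spreading developed by Giakkoupis and by Mosk-Aoyama--Shah. I will sketch the argument for \push{} (the \pull{} case is dual). Let $I_t \subseteq V$ denote the set of informed vertices after round $t$, with $I_0 = \{s\}$, and let $U_t = V \setminus I_t$. The argument splits into a \emph{growth phase} (while $\vol(I_t) \leq \vol(V)/2$) and a \emph{shrinking phase} (once $\vol(I_t) > \vol(V)/2$). In each phase I would prove that in every round an appropriate potential improves by a multiplicative factor $1-\Omega(\phi)$ with high probability; iterating this over $O(\log n/\phi)$ rounds then suffices to reach full spreading.

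For the growth phase, the conductance hypothesis yields $|E(I_t,U_t)|\geq \phi\,\vol(I_t)$. The probability that a given $v\in U_t$ becomes informed in the next round equals $1 - \prod_{u\in N(v)\cap I_t}\bigl(1 - 1/\grad(u)\bigr)$, and using $1-x\leq e^{-x}$ together with the edge-balanced-degree hypothesis $\grad(u)/\grad(v)=\Theta(1)$ for every $\{u,v\}\in E$, this probability is at least $\Omega\bigl(|N(v)\cap I_t|/\grad(v)\bigr)$. Multiplying by $\grad(v)$ and summing over $v\in U_t$, one obtains that the expected increment of $\vol(I_t)$ is $\Omega(|E(I_t,U_t)|)=\Omega(\phi\,\vol(I_t))$. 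The shrinking phase is symmetric: the probability that $v\in U_t$ remains uninformed is at most $\exp\bigl(-\Omega(|N(v)\cap I_t|/\grad(v))\bigr)$, yielding $\Expc{\vol(U_{t+1})}\leq (1-\Omega(\phi))\,\vol(U_t)$.

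To convert these expectation bounds into high-probability statements, I would exploit the fact that, conditionally on $I_t$, the uniform push choices of distinct informed vertices are mutually independent, so the events $\{v \text{ newly informed}\}_{v\in U_t}$ are negatively associated; a Chernoff bound for negatively associated sums then delivers multiplicative concentration with failure probability $\exp(-\Omega(\phi\,\vol(I_t)))$. The main obstacle I expect is the \emph{bootstrap regime}, where $\vol(I_t)$ is still smaller than $\Omega(\log n/\phi)$ and Chernoff is not yet strong enough: here I would argue separately, using the connectivity of $G$ and the edge-degree-ratio hypothesis (which ensures each informed node has a non-negligible chance of reaching an uninformed neighbour in one round), that within $O(\log n)$ rounds the informed set reaches size $\Omega(\log n/\phi)$ almost surely, after which the multiplicative growth argument can take over. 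The \pull{} analysis is in fact cleaner, because each uninformed $v$ samples a uniformly random neighbour and gets informed with probability exactly $|N(v)\cap I_t|/\grad(v)$, making the edge-degree-ratio hypothesis essentially unnecessary and the same potential-function computation go through verbatim.
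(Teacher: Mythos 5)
Your proposal is correct and follows essentially the same route as the paper's own proof sketch: a volume-based potential argument split into a growth phase and a shrinking phase, using the conductance bound on $|E(I_t,U_t)|$ together with the edge-degree-ratio hypothesis to obtain multiplicative progress per round, followed by concentration. The paper's outline simply specializes to the almost-regular, $\phi=\Theta(1)$ case (stating the step in terms of vertex expansion and $|I_t|$ rather than conductance and $\vol(I_t)$) and omits the bootstrap discussion, so your version is actually slightly more faithful to the general statement of \cref{thm:rumor-conductance}.
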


\begin{proof}[Proof (outline)]
    Let us consider an almost-regular graph $G=(V,E)$ with constant conductance $\phi = \Theta(1)$.
    Let $I_t \subseteq V$ the set of informed nodes at round $t$ and assume that $|I_t| \leq n/2$. We first notice that, since $G$ is an almost-regular  $\Theta(1)$-expander, the size of the outer boundary of $I_t$ is such that $|\bord(I_t) | \geq \gamma |I_t|$, for some constant $\gamma >0$. Then, at every round $t' \geq t$, the \pull\ or the \push\ protocol let every node $v \in \bord(I_t)$ to have  constant probability to get informed. This implies that the expected number of informed nodes at round $t+1$  will be at least $(1+\Theta(1)) |I_t|$.
  By applying suitable concentration arguments, this fact is then used to show that, within $O(\log n)$ rounds, the number of informed nodes is at least $n/2$, w.h.p. Once the spreading process reaches at least $n/2$ informed nodes, the  analysis proceeds in a similar way by looking at the set  non-informed nodes at round $t$ and show that this quantity decreases at exponential rate, w.h.p.
  \end{proof}

  \paragraph{The analysis on   the  $\BSDG$ model.}
  In order to apply the above proof argument  on  the $\BSDG$ model we need to cope with two main technical issues.

  Our \Cref{lem:small-expansion} shows that, at any round $t \geq 2n $,   each  subset $S \subseteq V_t$ of the snapshot $G_t=(V_t,E_t)$ with   $|S| \geq \beta \log n$ for some constant $\beta >0$, has conductance $\phi_t(S) = \Omega(1)$, w.h.p. Then, in order to apply the proof argument of \Cref{thm:rumor-conductance}, we need to show that there is an initial phase of the rumor-spreading process, called \textit{bootstrap},  that is able to inform at least $\beta \log n$ vertices, w.h.p.
   Indeed, after this  bootstrap, we can apply the same   argument of  the proof of \Cref{thm:rumor-conductance}   assuming that there is an informed subset of logarithmic size.  The analysis of the bootstrap will be discussed later in this section.

   The second technical issue is caused by the presence of a set of \emph{old}  nodes (defined later in this section) that, during the information process, can leave the graph and create edge deletions and regenerations.
   However, once the bootstrap is completed, the subset of informed nodes reaches a logarithmic size which is large enough  to dominates the impact of all possible edge deletions that can take place for a time window of logarithmic size even in an adversarially fashion: this time window is exactly what the rumor spreading process needs to complete the broadcast task.  As we used in  several previous steps of our analysis,  this limited impact  is essentially due to the fact that the maximum vertex degree of the graph snapshots is always bounded by the constant quantity $(c+1)d$ and thus, at every round, only this number of edges can be deleted.

  \paragraph{The Bootstrap.} 

   Recall that $s$ is the source node joining the dynamic graph in round $t_s \geq 2n$.
Let $\OLD$ be the nodes in $V_{t_s}$ having age larger than $n- \log^2 n$.
Our   goal   is to prove the following.


\begin{lemma}
    Let $\beta> 0$. Then, 
   within $T'=O(\log n)$ rounds
    after the informed source $s$ joined the graph at time $t_{s}$,
    there are $\beta \log n$ informed nodes whose age is at most $n-\log^{2}n$, w.h.p.
    \label{lem:bootstrap}
\end{lemma}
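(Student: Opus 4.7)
The plan combines three ingredients: (i) the fact that \emph{very young} vertices do not leave the graph during the bootstrap window, so edges between them cannot disappear; (ii) Claim (b) of \cref{thm:main-expansion}, which forces the source $s$ to lie in the giant connected component $C_t$ of every snapshot $G_t$ in the window; and (iii) a direct \push\ argument yielding that, at each round, the informed very young set grows by one vertex with at least constant probability.

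First, fix $T' = C_1 \log n$ for a sufficiently large constant $C_1$, and call a vertex $v \in V_{t_s}$ very young if $\age_{t_s}(v) \leq n - 2 \log^2 n$; the set $Y$ of very young vertices satisfies $|Y| \geq n - 2 \log^2 n$, every $v \in Y$ remains alive throughout $[t_s, t_s + T']$, and, crucially, edges between two very young vertices cannot disappear during this window. The source $s$, having age $0$, belongs to $Y$. Claim (b) of \cref{thm:main-expansion}, applied to each $G_t$ with $t \in [t_s, t_s + T']$, implies that every connected component of $G_t$ has size either $< \beta \log n$ or $> n - \beta \log n$ (since a component has conductance $0$), so $G_t$ admits a unique giant component $C_t$ with $|C_t| \geq n - \beta \log n$. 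By \cref{lem:multiple-requests-destination} and \cref{lem:number-times-pending}, a constant fraction of the $d$ initial requests of $s$ is accepted within $O(1)$ rounds, each targeting an almost uniform vertex of $V_{t_s}$, so at least one of them lies in $C_{t_s} \cap Y$ with probability $1 - n^{-\Omega(1)}$. Iterating at each round in the window, $s$ remains inside $C_t$ and $|C_t \cap Y| \geq n - O(\log^2 n)$ for all $t \in [t_s, t_s + T']$.

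For the growth phase, let $I_t$ denote the informed set at the end of round $t$, and set $J_t = I_t \cap Y$. Conditionally on the past and on $|J_t| \leq \beta \log n$, the plan is to show $\Prob{|J_{t+1}| > |J_t|} \geq \gamma := 1/((c+1)d)$. Since $J_t$ is a strict subset of $C_t \cap Y$ and, as argued above, $C_t \cap Y$ is essentially the giant component of the induced subgraph $G_t[Y]$ (via a second application of Claim (b) of \cref{thm:main-expansion} to $Y$), some $u \in J_t$ admits an uninformed neighbor $v \in Y \setminus J_t$ along a $Y$-to-$Y$ edge that persists throughout the window. In the \push\ phase of round $t+1$, $u$ picks $v$ uniformly among its at most $(c+1)d$ neighbors and informs it with probability at least $\gamma$, independently of the topology. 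By stochastic domination of the indicators $\mathbf{1}\{|J_{t+1}| > |J_t|\}$ by i.i.d.\ Bernoulli$(\gamma)$ random variables, Chernoff yields $|J_{t_s + T'}| \geq \gamma T' / 2 \geq \beta \log n$ with probability $1 - {\rm e}^{-\Omega(T')}$, and a union bound over the polynomially many bad events concludes the proof.

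The main technical obstacle is the need, at every round in the window, for a genuine uninformed $Y$-neighbor of some $u \in J_t$ along a stable $Y$-to-$Y$ edge. A naive connectivity argument inside $C_t$ only provides a neighbor in $C_t$, which might be a non-very-young bottleneck whose incident edges could be re-randomized during the window, breaking the persistence needed to decouple the \push\ choice from the topology. The remedy is to pass to the induced subgraph $G_t[Y]$ and reapply Claim (b) of \cref{thm:main-expansion}: since removing the $O(\log^2 n)$ non-very-young vertices from a graph with conductance $\alpha$ perturbs the boundary-to-volume ratio of an $\Omega(\log n)$-sized subset by at most $O(\log^2 n / |S|)$, the induced subgraph $G_t[Y]$ still retains constant conductance on sets of size $\Omega(\log^3 n)$, which suffices once $\beta$ and $C_1$ are chosen large enough.
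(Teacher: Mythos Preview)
Your argument has a genuine gap in the growth phase. You need that at every round $t$ in the window, the informed set $J_t\subseteq Y$ has at least one neighbor in $Y\setminus J_t$ via a $Y$--$Y$ edge (so that the edge is stable and the \push\ choice decouples from the topology). You correctly flag this as the crux, but your fix does not work. The perturbation you sketch---removing the $O(\log^2 n)$ non-very-young vertices changes the boundary-to-volume ratio of $S$ by $O(\log^2 n/|S|)$---only salvages constant conductance in $G_t[Y]$ for sets of size $\Omega(\log^3 n)$, as you yourself note. But the bootstrap is precisely about growing $J_t$ from size $1$ up to $\beta\log n$: for $|J_t|=O(\log n)$ the cut $\partial_t J_t$ has size $\Theta(\log n)$, and deleting $\Theta(\log^2 n)$ boundary edges can wipe it out completely. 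The clause ``which suffices once $\beta$ and $C_1$ are chosen large enough'' does not help: raising $\beta$ only raises the target, not the starting size through which $J_t$ must pass.

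Concretely, nothing you wrote rules out the following picture: $s$'s first accepted request lands on a very young vertex $v\in C_{t_s}\cap Y$ all of whose \emph{other} neighbors in $G_{t_s}$ are old. Then $J_t=\{s,v\}$ has no $Y$-neighbor and your per-round growth stalls. Showing $s$ connects to $C_{t_s}\cap Y$ (the giant component of $G_{t_s}$ intersected with $Y$) is not the same as showing $s$ connects to a large component of $G_{t_s}[Y]$.

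The paper avoids this by a \emph{static} argument at the single time $t_s$. It takes the connected subgraph $G_{t_s}[H_{t_s}]$ from \cref{lem:expsubgraph}, deletes the $O(\log^2 n)$ OLD vertices, and counts components of $G_{t_s}[H_{t_s}\setminus\text{OLD}]$: since each deleted vertex has bounded degree, there are only $O(\log^2 n)$ components, so at most $O(\log^3 n)$ vertices lie in components of size $<\beta\log n$. Hence $s$'s first accepted request---being nearly uniform over $V_{t_s}$---lands in a \emph{large} component of $G_{t_s}[H_{t_s}\setminus\text{OLD}]$ w.h.p. That entire component consists of non-OLD vertices with mutually stable edges, so it persists as a fixed connected block of size $\geq\beta\log n$ for the full window, and \push\ spreads inside it. The key difference is that the paper secures one stable $\Omega(\log n)$-sized connected structure up front, instead of hunting for a fresh $Y$-neighbor at every round.
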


\begin{proof}
From   \Cref{lem:expsubgraph}, at time $t_s$ when the source enters the graph, there exists a connected\footnote{ $G_{t_s}[H_{t_s}]$ is a vertex expander but in this proof we only use the fact that it is a large connected subgraph.} graph $G_{t_s}[H_{t_s}]$ with vertex subset  $H_{t_s}$ of size  $n-O(\log n)$. Consider now the connected components $\{C_i\}_{i\in I}$, obtained by removing from $H_{t_s}$ the set $\OLD$ of nodes that will die within the next $\Theta(\log^2 n)$ rounds. 
How many nodes in $\{C_i\}_{i\in I}$ belong to a connected component of size smaller than $\beta \log n$? Since $|I|\leq |\OLD| = \Theta( \log^2 n)$, there are at most $\beta \log n \cdot |I| = O(\log^3 n )$ such nodes.  
We now proceed with defining the following events:  Let $D$ be the event  ``$s$ is not connected after $2 \log n$ rounds'', $B$ be the event ``$s$ targets a node which is either in $\OLD$ or in a small component during some round  $t_s,...,t_s+2 \log n$'' and $C$ be the event  ``$s$ gets connected to a node in a large connected component that will remain connected for at least $\Theta( \log^2 n)$ rounds''.
 Now notice that, since $C^c \subseteq B \cup D$
\begin{equation} \label{eq:DBC}
    \Prob{C} = 1- \Prob{C^c} \geq 1-  \Prob{B \cup D} \geq 1- (\Prob{B} +\Prob{ D}).
\end{equation}

We then know that $\Prob{D} \leq O(n^{-2})$ by a standard concentration argument on the geometric probability of success (we again use \cref{claim:bound-overloaded-nodes} that says that, at every round, the number of full vertices is at most $n/c$). 
Observe also that, using   a union bound over the observed time window,  $\Prob{B}  = O( \polylog(n) /n)$, since at each round the probability that the request targets a node in $\OLD $ or in a small component (regardless of whether it is relaunched or not) is $O(\polylog (n)/n)$.

From the above facts and \Cref{eq:DBC}  we get that, w.h.p., the source $s$ will belong to a  subgraph of size at least $\beta \log n$
that will remain connected for at least $\Theta(\log^2 n)$ rounds. 
\end{proof}

Finally, thanks to \cref{lem:bootstrap}, we can apply the expansion argument we described in the proof sketch of \Cref{thm:rumor-conductance}  to the sets with size $\geq \beta \log n$ and get that, w.h.p., after $O(\log n)$ rounds, at least  $n - O( \log n)$ vertices in the graph will be informed (\cref{lem:small-expansion} using \cref{lem:big-expansion}).

\begin{remark} \label{rem:gossip}
Our analysis above proving \cref{thm:gossip} easily implies a further stabilizing property of the rumor spreading protocols on the $\BSDG$  model. In particular, after the  source joins the graph at round $t_s$, for a time window of a polynomial length,   every new vertex will get informed within $O(\log n)$ rounds w.h.p.
\end{remark}

\section{Further Motivations and  Related Work} \label{sec:related}


The  graph process we consider in this paper is natural and, as remarked in the introduction,     has the main merit of including crucial aspects of  the way some unstructured peer-to-peer networks maintain a
well-connected topology:    vertices joining and leaving the network,
 bounded degree and almost fully-decentralized network formation. For example,  full-vertices of the Bitcoin network~\cite{nakamoto2008bitcoin} running the Bitcoin Core implementation   rely on DNS seeds to allow full-vertices to   find active neighbors.   This allows them to pick new neighbors essentially at random among all vertices of the
network~\cite{bitnodes}.\footnote{In our model, this service is implemented by the link manager.} Notice also that the real topology of the Bitcoin network is hidden by the network formation protocol and discovering the real
network structure has been recently an active subject of
investigations~\cite{delgado2019txprobe,neudecker2016timing}.

Our analysis of the dynamic graph model $\BSDG$ focuses on two key aspects: expansion and the speed of information spreading. Beside having a theoretical interest, both of them play a crucial role for the resilience and the efficiency (in particular for the \textit{network delay})  of the unstructured 
peer-to-peer networks we discussed above: see \cite{albrecht2024larger,cruciani2022brief,cruciani2023dynamic}, for a deeper discussion of this issue. 


A basic way to classify dynamic graphs relies on whether the set of vertices stays the same or changes over time. If the vertex set is fixed, the graph is called an edge-dynamic graph, where only the edges change over time. Several formal models for this type of graph have been proposed and studied in depth in previous research \cite{CMMPS08,CMPS11,KLO10,KO11,M16}.
Conversely, the case in which the vertex set evolves over time has received less attention. This type of graph, usually described as a sequence of graphs $G_t = (V_t, E_t)$, for $t \geq  0$, is known as a dynamic network with churn \cite{augustine2016distributed}. In this setting, both vertex arrivals and departures (affecting $V_t$) and edge updates (affecting $E_t$) are governed by specific rules. The number of vertices that may join or leave the network in each time step is called the churn rate. For brevity, we will only review analytical results on dynamic networks with churn that are directly related to the models studied in this paper. 

As discussed in the introduction,  \cite{becchetti2023expansion} analyzes an unbounded-degree version of \ALG{}\ over both the streaming node-churn model and the \textit{continuous Poisson} one \cite{pandurangan2003building}: in the latter,    the number of births within each time 
unit follows a Poisson distribution with mean $\lambda$, and where the lifetime of each node is 
independently distributed as an exponential distribution with parameter  $\mu$, so that the average lifetime of a node is $1/\mu$ and the 
average number of nodes in the network at any given time is 
$\lambda/\mu$.   While this model is more realistic than the streaming one we consider in this work, we remark that in \cite{becchetti2023expansion} all expansion properties proved  in one model do hold in the other one as well, thus giving evidence of the robustness of the streaiming model.

We also remark that the streaming node-churn model,  with different names (e.g. the \textit{sliding-window} model) have been considered for other algorithmic problems: for instance,   \cite{crouch2013dynamic}   considers several graph problem and other problems are studied
in \cite{BorassiELVZ20,BravermanLLM16}.

Some past analytical studies have   focused on distributed algorithms specifically designed to maintain network connectivity under dynamic conditions \cite{DuchonD14,pandurangan2003building}.

A powerful method for maintaining expansion in dynamic networks with churn is based on ID-based random walks. In this approach, each vertex launches $k$ independent random walks carrying its ID. These tokens are mixed throughout the network, and when a new vertex needs to create edges, it connects to the IDs of the tokens it collects. Probabilistic analysis of this method usually shows two key outcomes: the resulting graph has strong expansion, and the random walks become well-distributed quickly  \cite{cooper2007sampling,LS03}.
More in detail, \cite{LS03} provide a distributed algorithm for maintaining
a regular expander in the presence of limited number of insertions and deletions.  The   algorithm is based on a complex procedure that is able to sample uniformly at random  from the space of all possible $2d$-regular graphs formed by $d$ Hamiltonian circuits over the current set of alive nodes. They present
possible distributed implementations of this sample procedure, the best of which, based on random walks,  have $O(\log n)$ overhead and time delay.   Such solutions cannot   manage frequent node churn. 

Further distributed algorithms with different approaches achieving $\bigO(\log n)$ overhead and time delay in the case of slow node churn are proposed in \cite{AS04,RRSST09,MS04,PT14}.
    
In \cite{augustine2015enabling},   an efficient   distributed protocol is introduced that guarantees the maintenance of a bounded degree  topology that, w.h.p., contains an expander subgraph whose set of vertices has size $n-o(n)$. This property is preserved   despite the presence of a large oblivious adversarial churn rate
— up to $\bigO(n/ \polylog(n))$. 
The expander maintenance protocol is efficient  even though  it is rather complex and the local overhead for maintaining the topology is   polylogarithmic in $n$.  A complication of the protocol follows from the fact that, in order to prevent the growth of large clusters of nodes outside the expander subgraph, it  uses   special criteria to ``refresh'' the links of some nodes, even when the latter have not been involved by 
any edge deletion due to the node churn.

Very recently, a new random-walk based protocol for the Poisson node-churn model, is presented in \cite{guptapand2025}. This solution guarantees,  over an   expected churn rate $\Theta(1)$, that the network contains w.h.p.  an expander with a linear number of vertices even in the presence of $o(n/\log n)$ byzantine nodes. This is an important property in some real  network scenarios. To  achieve this property, 
vertices need to   perform  random-walks processes that yield  a communication overhead  $\Theta(n \, \polylog (n))$ per round.
Their  model assumes  the existence of  an \textit{entry manager} that allows every  vertex $v$ to sample a constant number of random neighbors (only) at the time $v$ joins the network. Essentially, the role of the entry manager is equivalent to that of the link manager we adopt in our model but the fact that, in our setting, this service is available for $\Theta(1)$ expected\footnote{And $O(\log n)$ w.h.p. (see \cref{lem:number-times-pending}).}  further calls during the life of $v$. As for this model constraint, we remark  that, in the bitcoin networks \cite{cruciani2023dynamic,nakamoto2008bitcoin}, there are no  kind of  prohibition for using  this service for few more times even after joining the network.  

Finally, recent studies such as \cite{augustine2016distributed} analyzed message flooding in these churn models.  

\section{Conclusion and Open Questions} \label{sec:concl}

The study of dynamic-graphs models capturing key aspects of  real dynamic networks is currently a hot topic in algorithmic research and network science. In what follows,  we discuss  some open questions related to  the  model and the results presented in  this paper.

 We believe it is possible to  extend our analysis on other, more realistic models of node churn, such as the \textit{Poisson} one where nodes enter according to a Poisson clock and have a random age following an  exponential distribution \cite{becchetti2023expansion,pandurangan2003building}. In this setting, the analysis gets  more complicated by two further issues:  the \textit{random} number of nodes each snapshot can have and  the presence of nodes having \textit{random} age, possibly larger than $n$.  However, we think that the   key arguments we used in the analysis of the streaming model can be adapted to take care about such further  issues. Essentially, it could be possible to exploit concentration results on both the number of nodes in a snapshot and on the random life of a node.

 A further interesting scenario is that generated by a different mechanism    to get new link connections. For instance, we can think of a link manager that returns a   non-uniform distribution over the current set of nodes, or that can selects possible links from an  underlying (dynamic) graph  somewhat representing     social relationships among nodes.

 Finally, an important property of distributed protocols is \textit{self-stabilization} \cite{altisen2022introduction,dijkstra1974self}. For short,  it represents the ability of a protocol   to recover its ``good'' behaviour  (guaranteeing some  desired performance and/or property)   from any (worst-case)  configuration   the system can be landed on, due to  some bad event (e.g. a node/link  fault and/or an adversarial setting of some local  variable). The current version of \ALG{} is not fast self-stabilizing under a worst-case scenario where the adversary can corrupt all nodes: essentially, it can  construct a non-expander topology respecting the algorithm rules than can last for a linear number of rounds. However,
      \Cref{lem:decreasing-queue} ensures that the number of pending requests decreases faster: we believe this key-fact can be exploited to design a different, more robust version of \ALG{}\ having fast self-stabilization.

\bibliographystyle{abbrv}
\bibliography{dynamic_graphs_25.bib}

\appendix

\section{Concentration Inequalities} \label{app:prob}
\begin{definition}[Lipschitz property, \cite{dubhashi2009concentration}]
A real-valued function $f(x_1,\dots,x_n)$ satisfies the \emph{Lipschitz property} with constants $d_i$, $i \in [n]$, if 
\[f(\mathbf{x})-f(\mathbf{x'}) \leq d_i\]
whenever $\mathbf{x}$ and $\mathbf{x'}$ differ just in the $i$-th coordinate, $i \in [n]$.
\label{def:lipschitz-property}
\end{definition}

\begin{theorem}[Method of bounded differences, \cite{dubhashi2009concentration}]
If $f$ satisfies the Lipschitz property with constants $d_i$, $i \in [n]$ and $X_1,\dots,X_n$ are independent random variables, then denoting
$f = f(X_1,\dots,X_n)$,
\begin{equation}
    \Prob{f > \Expc{f} + t} \leq {\rm e}^{-\frac{2t^2}{d}} \quad \text{and} \quad \Prob{f < \Expc{f} - t} \leq {\rm e}^{-\frac{2t^2}{d}}
\end{equation}
where $d = \sum_{i=1}^n d_i^2$.
\label{thm:bounded-diff}
\end{theorem}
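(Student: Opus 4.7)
The plan is to prove the method of bounded differences by the classical route: build the Doob martingale associated to $f$ and then apply a sharp Azuma--Hoeffding inequality for martingales whose increments lie in bounded intervals of prescribed width.

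First, I would define the Doob martingale $Z_k := \Expc{f \mid X_1,\dots,X_k}$ for $k=0,1,\dots,n$, with respect to the natural filtration $\mathcal{F}_k := \sigma(X_1,\dots,X_k)$. Then $Z_0 = \Expc{f}$ and, since $f$ is a measurable function of $(X_1,\dots,X_n)$, $Z_n = f$ almost surely. In particular $f - \Expc{f} = \sum_{k=1}^{n}(Z_k - Z_{k-1})$, so the problem is reduced to a tail bound on the sum of martingale increments.

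Second -- and this is the heart of the argument -- I would establish that, conditionally on $\mathcal{F}_{k-1}$, the increment $Z_k - Z_{k-1}$ takes values in an interval of width at most $d_k$. Using independence of the coordinates, write $Z_k = g_k(X_k)$ where
\[
g_k(x) := \Expc{f(X_1,\dots,X_{k-1},x,X_{k+1},\dots,X_n) \mid X_1,\dots,X_{k-1}},
\]
and observe that $Z_{k-1} = \Exps{X_k'}{g_k(X_k')}$ where $X_k'$ is an independent copy of $X_k$. By the Lipschitz property, $|f(\dots,x,\dots)-f(\dots,x',\dots)| \leq d_k$ for every $x,x'$; taking conditional expectation on both sides yields $|g_k(x)-g_k(x')|\leq d_k$, so $g_k$ has range at most $d_k$ and therefore $Z_k-Z_{k-1}$, conditionally on $\mathcal{F}_{k-1}$, lies in an interval of length at most $d_k$.

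Third, I would invoke the sharp Azuma--Hoeffding inequality for martingales with conditionally bounded-range increments: if $(Z_k)$ is a martingale and $Z_k-Z_{k-1}$ lies in an interval of width $d_k$ given $\mathcal{F}_{k-1}$, then $\Prob{Z_n-Z_0 \geq t} \leq \exp(-2t^2/\sum_k d_k^2)$. This step itself goes through the standard Hoeffding lemma (for a random variable taking values in an interval of length $\ell$ with mean zero, the MGF is bounded by $\exp(\lambda^2 \ell^2/8)$) combined with iterated conditional expectation to control $\Expc{\exp(\lambda(Z_n-Z_0))}$, followed by the Chernoff trick optimized over $\lambda>0$. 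Applying the resulting bound to $f$ yields the upper tail, and applying the same argument to $-f$ (which satisfies the Lipschitz property with the same constants) yields the lower tail.

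The main obstacle is the sharpness of the factor $2$ in the exponent of the bound. A naive application of $|Z_k-Z_{k-1}|\leq d_k$ would only give the weaker $\exp(-t^2/(2\sum_k d_k^2))$; recovering the claimed constant requires noting not merely that $|Z_k-Z_{k-1}|\leq d_k$ but that, conditionally on the past, the increment lives in an interval of length $d_k$ (a range-width bound, not just an absolute-value bound), which is exactly what the conditional Lipschitz comparison above delivers. Once this refined observation is in hand, the rest is a routine Hoeffding-lemma computation.
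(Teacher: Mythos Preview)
Your proof is correct and follows the classical route to McDiarmid's inequality via the Doob martingale and the sharp (range-width) Azuma--Hoeffding bound. Note, however, that the paper does not actually prove this theorem: it is stated in the appendix as a known concentration inequality, cited from \cite{dubhashi2009concentration}, and used as a black box in the proof of \cref{lem:decreasing-queue}.
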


\begin{theorem}[Chernoff's Inequality]
\label{thm:chernoff}
Let $X=\sum_{i=1}^n X_i$, where all $X_i$ are independently
distributed in $[0,1]$. Let $\mu=\Expc{X}$ and $\mu_- \leq \mu \leq \mu_+$.
Then:
\begin{enumerate}[(a)]
    \item For any $t>0$, it holds
    \[
    \Prob{X>\mu_+ +t}\leq  {\rm e}^{-2t^2/n} \quad \text{and} \quad \Prob{X<\mu_- -t}\leq {\rm e}^{-2t^2/n}.
    \]
    \item For any $\epsilon>0$,
    \[\Prob{X >(1+\epsilon)\mu} \leq {\rm e}^{-\frac{\epsilon^2}{3}\mu} \quad \text{ and }\Prob{X < (1-\epsilon)\mu} \leq {\rm e}^{-\frac{\varepsilon^2}{2}\mu}\]
    \item For $0<\epsilon<1$, it holds
    \[
    \Prob{X>(1+\epsilon)\mu_+}\leq {\rm e}^{-\frac{\epsilon^2}{3}\mu_+} \quad \text{and}\quad \Prob{X<(1-\epsilon)\mu_-}\leq {\rm e}^{-\frac{\epsilon^2}{2}\mu_-}. 
    \]
\end{enumerate}
\end{theorem}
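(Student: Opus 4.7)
The plan is to prove all three parts via the classical exponential moment method (Chernoff's technique), in which we apply Markov's inequality to $e^{\lambda X}$ for a free parameter $\lambda>0$ and then optimize $\lambda$. Concretely, for any $\lambda>0$ one has
\[
\Prob{X \geq a} \;=\; \Prob{e^{\lambda X} \geq e^{\lambda a}} \;\leq\; \frac{\Expc{e^{\lambda X}}}{e^{\lambda a}} \;=\; e^{-\lambda a}\prod_{i=1}^n \Expc{e^{\lambda X_i}},
\]
using independence of the $X_i$. Symmetrically, for the lower tail we apply the same reasoning to $-X$ with $\lambda>0$.

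For part (a), I would use Hoeffding's lemma: since each $X_i\in[0,1]$, one has $\Expc{e^{\lambda(X_i - \Expc{X_i})}} \leq e^{\lambda^2/8}$. Multiplying over $i$ gives $\Expc{e^{\lambda(X-\mu)}}\leq e^{\lambda^2 n/8}$, and applying Markov yields $\Prob{X\geq \mu+t}\leq e^{-\lambda t + \lambda^2 n/8}$. Setting $\mu_+\geq\mu$ and bounding $\Prob{X\geq\mu_+ + t}\leq\Prob{X\geq\mu+t}$, the optimal choice $\lambda = 4t/n$ produces the stated bound $e^{-2t^2/n}$. The lower tail is analogous via $-X_i$.

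For parts (b) and (c), the key MGF estimate is that for $X_i\in[0,1]$ with $p_i=\Expc{X_i}$, one has $\Expc{e^{\lambda X_i}} \leq 1 + p_i(e^\lambda-1) \leq e^{p_i(e^\lambda-1)}$, by convexity of $e^{\lambda x}$ on $[0,1]$. Taking products yields $\Expc{e^{\lambda X}}\leq e^{\mu(e^\lambda-1)}$. Applying Markov at threshold $(1+\epsilon)\mu$ gives
\[
\Prob{X\geq (1+\epsilon)\mu} \;\leq\; \inf_{\lambda>0}\, e^{\mu(e^\lambda-1)-\lambda(1+\epsilon)\mu} \;=\; \bigl(e^\epsilon(1+\epsilon)^{-(1+\epsilon)}\bigr)^\mu,
\]
where the infimum is attained at $\lambda=\ln(1+\epsilon)$. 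I would then close the proof using the standard analytic inequality $(1+\epsilon)\ln(1+\epsilon)-\epsilon \geq \epsilon^2/3$ valid for $\epsilon\geq 0$ (verified by comparing Taylor expansions of both sides), giving the $e^{-\epsilon^2\mu/3}$ bound. The lower-tail analogue uses $\lambda<0$, yielding the Legendre dual bound $e^{-\epsilon}(1-\epsilon)^{-(1-\epsilon)}$ raised to $\mu$, and the analytic inequality $(1-\epsilon)\ln(1-\epsilon)+\epsilon \geq \epsilon^2/2$ for $\epsilon\in[0,1)$. For the $\mu_+$ and $\mu_-$ variants in part (c), one simply observes that $\Prob{X\geq(1+\epsilon)\mu_+}\leq\Prob{X\geq(1+\epsilon)\mu}$ since $\mu\leq\mu_+$, and likewise for the lower tail, so the same MGF computation applied to $\mu_+$ (resp.\ $\mu_-$) in the exponent yields the stated bounds; part (b) is the special case $\mu_+=\mu_-=\mu$.

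The main obstacle is purely analytic: verifying the tight-enough lower bounds $(1+\epsilon)\ln(1+\epsilon)-\epsilon\geq \epsilon^2/3$ and $(1-\epsilon)\ln(1-\epsilon)+\epsilon\geq \epsilon^2/2$. These are not completely trivial because the first inequality is not tight at $\epsilon=0$ in the leading coefficient (the Taylor expansion gives $\epsilon^2/2 - \epsilon^3/6 + \cdots$, so one has to check that the $1/3$ constant accommodates the cubic correction uniformly on $(0,1]$ as used in part (c), and remains valid for all $\epsilon>0$ as needed for part (b) with the caveat that the bound $e^{-\epsilon^2\mu/3}$ is typically stated for $\epsilon\leq 1$—I would either restrict the first inequality of (b) to $\epsilon\in(0,1]$ or use the looser inequality $(1+\epsilon)\ln(1+\epsilon)-\epsilon\geq \epsilon^2/(2+2\epsilon/3)$ to cover the full range). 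All the remaining arguments are purely bookkeeping around Markov's inequality.
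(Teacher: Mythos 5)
The paper states this Chernoff bound as a standard result in Appendix~\ref{app:prob} \emph{without} proof, so there is no paper proof to compare against; it is simply invoked as a known concentration inequality. Your proposal is a correct and complete execution of the classical exponential-moment argument: Hoeffding's lemma and optimization at $\lambda = 4t/n$ give part~(a); the MGF estimate $\Expc{e^{\lambda X_i}} \le e^{p_i(e^\lambda - 1)}$ by convexity of $e^{\lambda x}$ on $[0,1]$, followed by Markov at the optimal $\lambda = \ln(1+\epsilon)$ (resp.\ $\ln(1-\epsilon)$), together with the calculus inequalities $(1+\epsilon)\ln(1+\epsilon)-\epsilon \ge \epsilon^2/3$ and $(1-\epsilon)\ln(1-\epsilon)+\epsilon \ge \epsilon^2/2$, give parts~(b) and~(c). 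This is exactly the textbook route (as in Dubhashi--Panconesi, which the paper cites for the bounded-differences theorem), so it is consistent with what the authors implicitly rely on.

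Two small points of care. First, your observation that the upper-tail bound $e^{-\epsilon^2\mu/3}$ does not hold for all $\epsilon>0$ is correct: $(1+\epsilon)\ln(1+\epsilon)-\epsilon \ge \epsilon^2/3$ fails near $\epsilon\approx 2$ (at $\epsilon=2$ one has $3\ln 3 - 2 \approx 1.296 < 4/3$), and the standard form requires $\epsilon\le 1$. This is actually an imprecision in the paper's own statement of part~(b), not a flaw in your proof, and you flag it and propose the standard remedies. Second, in part~(c) the one-line monotonicity remark ``$\Prob{X\ge(1+\epsilon)\mu_+}\le\Prob{X\ge(1+\epsilon)\mu}$'' by itself only gives $e^{-\epsilon^2\mu/3}$, which is \emph{weaker} than the claimed $e^{-\epsilon^2\mu_+/3}$. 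What actually closes part~(c) is the other thing you say, namely re-running the MGF computation with $\mu_+$ (resp.\ $\mu_-$) substituted in the exponent: this is justified because for $\lambda>0$ one has $e^\lambda-1>0$ so $e^{\mu(e^\lambda-1)}\le e^{\mu_+(e^\lambda-1)}$, and for $\lambda<0$ one has $e^\lambda-1<0$ so $e^{\mu(e^\lambda-1)}\le e^{\mu_-(e^\lambda-1)}$. Making that sign-dependence explicit would tighten the exposition, but the underlying argument is sound.
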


\begin{theorem}[Hoeffding Bound]
\label{thm:hoeffding}
Let $X_1,\dots, X_n$ be independent random variables with such that, for all $i \in [n]$, $\Prob{a_i \leq X_i\leq b_i} = 1$ for constants $a_i$ and $b_i$. Let $X = \sum_{i=1}^n X_i$ and $\mu = \Expc{X}$. Then,
\begin{equation*}
    \Prob{|X - \mu| \geq \varepsilon} \leq 2 {\rm e}^{-\frac{2\epsilon^2}{\sum_{i=1}^n(b_i-a_i)^2}}.
\end{equation*}
\end{theorem}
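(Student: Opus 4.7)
The plan is to derive the Hoeffding Bound by the classical exponential moment (Chernoff-type) method. I would first reduce the two-sided bound to the one-sided inequality $\Prob{X-\mu \geq \varepsilon} \leq \exp(-2\varepsilon^2/\sum_i(b_i-a_i)^2)$, since the bound on $\Prob{X-\mu \leq -\varepsilon}$ follows by replacing each $X_i$ with $-X_i$ (which flips the interval but preserves its length $b_i-a_i$), and a union bound over the two tails yields the factor $2$.

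For the one-sided tail, the first step is to center: let $Y_i = X_i - \Expc{X_i}$, so that $Y_i$ takes values in an interval $[a_i',b_i']$ of the same length $b_i-a_i$ and has mean zero, and write $X-\mu = \sum_i Y_i$. Then for any $\lambda > 0$, Markov's inequality applied to $\mathrm{e}^{\lambda(X-\mu)}$ gives
\begin{equation*}
    \Prob{X-\mu \geq \varepsilon} \leq \mathrm{e}^{-\lambda\varepsilon}\, \Expc{\mathrm{e}^{\lambda(X-\mu)}} = \mathrm{e}^{-\lambda\varepsilon}\prod_{i=1}^n \Expc{\mathrm{e}^{\lambda Y_i}},
\end{equation*}
where the last equality uses independence of the $Y_i$.

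The heart of the proof, and in my view the only nontrivial step, is \textbf{Hoeffding's lemma}: for any zero-mean random variable $Y$ supported in $[a',b']$, one has $\Expc{\mathrm{e}^{\lambda Y}} \leq \mathrm{e}^{\lambda^2(b'-a')^2/8}$. I would prove this by convexity: since $\mathrm{e}^{\lambda y}$ is convex, for $y \in [a',b']$ one bounds $\mathrm{e}^{\lambda y} \leq \tfrac{b'-y}{b'-a'}\mathrm{e}^{\lambda a'} + \tfrac{y-a'}{b'-a'}\mathrm{e}^{\lambda b'}$; taking expectations and using $\Expc{Y}=0$ gives an upper bound of the form $\mathrm{e}^{\varphi(u)}$ where $u=\lambda(b'-a')$ and $\varphi(u) = -pu + \log(1-p+p\mathrm{e}^u)$ with $p = -a'/(b'-a') \in [0,1]$. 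A quick Taylor expansion (computing $\varphi(0)=\varphi'(0)=0$ and $\varphi''(u) \leq 1/4$ uniformly) yields $\varphi(u) \leq u^2/8$, which is the claimed inequality.

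Plugging Hoeffding's lemma into the product bound gives $\Prob{X-\mu \geq \varepsilon} \leq \exp\bigl(-\lambda\varepsilon + \tfrac{\lambda^2}{8}\sum_i (b_i-a_i)^2\bigr)$. Optimizing over $\lambda > 0$ — the minimum is attained at $\lambda^* = 4\varepsilon/\sum_i(b_i-a_i)^2$ — produces the exponent $-2\varepsilon^2/\sum_i(b_i-a_i)^2$. Combined with the symmetric argument for the lower tail, this completes the bound. The only genuinely delicate step is establishing Hoeffding's lemma (the bound $\varphi''(u) \leq 1/4$); everything else is routine manipulation.
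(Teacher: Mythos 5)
Your proof is correct and is the standard textbook argument for Hoeffding's inequality via the exponential moment method and Hoeffding's lemma. The paper does not prove this statement; it is listed in Appendix~\ref{app:prob} as a standard concentration tool without proof, so there is nothing in the paper to compare against.
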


The following bound gives concentration on the sum of independent identically distributed geometric random variables.

\begin{lemma}
\label{lem:geometric_distribution}
Let $X_1, \dots, X_n$ be a sequence of i.i.d.\ geometric random variables with success probability $p$. Then, we have that
\[\Prob{\sum_{i=1}^n X_i \geq k} = \Prob{\mathrm{Bin}(k,p) \leq n}.\]
\end{lemma}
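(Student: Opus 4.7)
The plan is to prove the identity by the classical coupling between sums of i.i.d.\ geometric random variables and the number of successes in a Bernoulli process. I would work on a single probability space carrying an i.i.d.\ Bernoulli$(p)$ sequence $(B_j)_{j \geq 1}$, and construct the $X_i$ as the successive inter-arrival times between $1$'s in this sequence: $X_1$ is the position of the first $1$, $X_2$ is the number of trials strictly after $X_1$ up to and including the next $1$, and so on. A short check shows that under the standard geometric convention the resulting variables $X_1, X_2, \dots$ are i.i.d.\ geometric with success probability $p$ and thus have the same joint distribution as in the statement of the lemma.

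The core of the argument is then the elementary duality
\[
\{X_1 + \cdots + X_n \geq k\} \;=\; \{B_1 + \cdots + B_k \leq n\}\,,
\]
valid $\omega$-by-$\omega$ on this coupled space. The left-hand event says that the $n$-th success in the Bernoulli sequence has not yet occurred strictly before position $k$; the right-hand event says that in the first $k$ trials there are at most $n$ successes. These describe the same configuration of the Bernoulli sequence, so the sets are literally equal. Since $B_1 + \cdots + B_k \sim \mathrm{Bin}(k, p)$, taking probabilities on both sides yields the stated identity.

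There is no real technical obstacle: the proof is just the coupling and the set-equality above. The only point requiring care is to make sure the convention for the geometric distribution (support $\{1,2,\dots\}$ vs.\ $\{0,1,\dots\}$) is matched consistently with the way the $n$-th success is indexed on the binomial side, so that the inequalities in the two displayed events correspond exactly. This amounts to routine bookkeeping within the coupling, which I would spell out in one or two lines but not beyond.
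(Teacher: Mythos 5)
Your proof takes exactly the same route as the paper's: the classical duality between the sum of i.i.d.\ geometric random variables (the waiting time for the $n$-th success) and the number of successes in a Bernoulli process, formalized as a coupling. The paper's own proof is a one-liner asserting this duality in words; you spell out the coupling explicitly, which is a more careful version of the same argument.

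However, the set equality you assert is not literally true, and the ``routine bookkeeping'' you defer is not merely a matter of choosing the right geometric convention. With $X_i$ supported on $\{1,2,\dots\}$ (the convention your coupling uses), $X_1+\cdots+X_n$ is the position $T_n$ of the $n$-th success, and
\[
\{X_1+\cdots+X_n \geq k\} \;=\; \{T_n \geq k\} \;=\; \{B_1+\cdots+B_{k-1} \leq n-1\},
\]
whereas your right-hand event is $\{B_1+\cdots+B_k \leq n\}$. The former is a \emph{proper} subset of the latter: take, for instance, $n=1$, $k=2$, $B_1=1$, $B_2=0$, which lies in the right-hand event but not the left. So your ``describe the same configuration, so the sets are literally equal'' is wrong; what you actually get from the inclusion is the one-sided bound $\Prob{\sum_{i=1}^n X_i \geq k} \leq \Prob{\mathrm{Bin}(k,p)\leq n}$, not the stated equality. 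No convention for the geometric (support $\{1,2,\dots\}$ versus $\{0,1,\dots\}$) repairs this: the exact dual of $\{\sum X_i\geq k\}$ involves $\mathrm{Bin}(k-1,p)$ and threshold $n-1$, never $\mathrm{Bin}(k,p)$ and threshold $n$. To be fair, the paper's own one-line proof has the same looseness (it even argues for ``less than $n$ successes'' while the lemma states ``$\leq n$''), and only the inequality direction is used downstream, so the imprecision is harmless in context. But you should not present the inclusion as an exact set identity, and you should not claim the indices can be made to agree by bookkeeping when they in fact cannot.
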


\begin{proof}
Asking that $\sum_{i=1}^n X_i \geq k$ is like asking that, in $k$ Bernoulli trials, we have less than $n$ successes.
\end{proof}

\end{document}